\documentclass[a4paper,UKenglish,cleveref,autoref]{lipics-v2021}

\pdfoutput=1

\usepackage{xcolor}

\title{Contextual MetaML: Syntax and Full Abstraction}

\author{Haoxuan Yin}
{University of Oxford, United Kingdom \and \url{https://www.cs.ox.ac.uk/people/haoxuan.yin/}}
{haoxuan.yin@merton.ox.ac.uk}
{https://orcid.org/0009-0008-2817-0227}
{funded by Oxford-DeepMind Graduate Scholarship.}

\author{Andrzej S. Murawski}
{University of Oxford, United Kingdom \and \url{https://www.cs.ox.ac.uk/people/andrzej.murawski/}}
{andrzej.murawski@cs.ox.ac.uk}
{https://orcid.org/0000-0002-4725-410X}
{}

\author{C.-H. Luke Ong}
{Nanyang Technological University, Singapore \and \url{https://www3.ntu.edu.sg/home/luke.ong/}}
{luke.ong@ntu.edu.sg}
{https://orcid.org/0000-0001-7509-680X}
{}

\authorrunning{H. Yin, A.\,S. Murawski, and C.-H.\,L. Ong}

\Copyright{Haoxuan Yin, Andrzej S. Murawski, and C.-H. Luke Ong}

\ccsdesc[500]{Theory of computation~Program semantics}
\ccsdesc[300]{Software and its engineering~Imperative languages}
\ccsdesc[300]{Theory of computation~Type theory}

\keywords{Metaprogramming, operational game semantics, trace model, contextual modal type theory}

\relatedversiondetails[cite=yin2026layeredmodalmlsyntax]{Full version}{https://arxiv.org/abs/2602.03033}

\acknowledgements{We thank the reviewers for their helpful comments.
The work was partially supported by National Research Foundation, Singapore, under its RSS Scheme (NRF-RSS2022-009). 
}

\nolinenumbers

\EventEditors{Claudia Faggian and Joost-Pieter Katoen}
\EventNoEds{2}
\EventLongTitle{41st Annual Symposium on Logic in Computer Science (LICS 2026)}
\EventShortTitle{LICS 2026}
\EventAcronym{LICS}
\EventYear{2026}
\EventDate{July 20--23, 2026}
\EventLocation{Lisbon, Portugal}
\EventLogo{}
\SeriesVolume{380}
\ArticleNo{8}

\usepackage{mathtools}
\usepackage{multicol}
\usepackage{bussproofs}
\usepackage{listings/ocaml}
\usepackage{makecell}
\usepackage{booktabs}

\newenvironment{proof_sketch}[1][Proof Sketch]{%
  \begin{proof}[#1]%
}{%
  \end{proof}%
}

\newcommand{\Int}{\mathbf{Int}}
\newcommand{\Unit}{\mathbf{Unit}}

\newcommand{\typ}{\mathbf{typ}}
\newcommand{\lctx}{\mathbf{lctx}}
\newcommand{\gctx}{\mathbf{gctx}}
\newcommand{\sctx}{\mathbf{sctx}}
\newcommand{\heap}{\mathbf{heap}}
\newcommand{\spgv}[1]{\Sigma;\Psi;\Gamma\vdash_{#1}}
\newcommand{\sccv}[1]{\Sigma;\cdot;\cdot\vdash_{#1}}
\newcommand{\spccv}[1]{\Sigma';\cdot;\cdot\vdash_{#1}}
\newcommand{\depth}[2]{\mathrm{dep}_{#1}(#2)}
\newcommand{\size}[1]{\mathrm{siz}(#1)}
\newcommand{\idm}[1]{\iota(#1)}
\newcommand{\nb}[1]{\widehat{#1}}
\newcommand{\Tr}[1]{\mathrm{Tr}(#1)}
\newcommand\letin[3]{\mathbf{let}\ #1\leftarrow#2\ \mathbf{in}\ #3}
\newcommand{\N}{\mathbb{N}}
\newcommand{\dom}[1]{\mathrm{dom}(#1)}
\def\doublewedge{\mathrel{\wedge\!\!\!\wedge}}
\newcommand{\mergeConf} [2] {#1 \doublewedge #2}
\newcommand\cconf[2]{\mathsf{C}_{#1}^{#2}}
\newcommand\ass[2]{\mathit{assert}(#1\sim #2)}
\newcommand{\la}{\lambda}

\newcommand{\ra}{\rightarrow}

\newcommand{\xra}{\xrightarrow}
\newcommand{\xRa}{\xRightarrow}

\newcommand{\xrsa}[1]{\overset{#1}{\rightsquigarrow}}
\newcommand{\Da}{\Downarrow}
\newcommand{\da}{\downarrow}

\newcommand{\ol}{\overline}

\newcommand{\vd}{\vdash}
\newcommand{\ag}[1]{\langle{#1}\rangle}
\newcommand{\bc}[1]{\{#1\}}

\newcommand{\df}{\overset{\triangle}{=}}
\newcommand{\es}{{^\text{\textasciitilde}}}

\newcommand{\run}{\mathbf{run\ }}
\newcommand{\rf}{\mathbf{ref\ }}
\newcommand{\rec}[3]{\mathbf{rec}\ #1(#2).#3}
\newcommand{\bx}{\mathbf{box\ }}
\newcommand{\letbox}[3]{\mathbf{letbox}\ #1\leftarrow #2\ \mathbf{in}\ #3}
\newcommand{\FNames}{\mathrm{FNames}}

\newcommand{\BNames}{\mathrm{BNames}}

\newcommand{\Names}{\mathrm{Names}}
\newcommand{\FTyp}{\mathrm{FTyp}}
\newcommand{\BTyp}{\mathrm{BTyp}}
\newcommand{\AVal}[1]{\mathbf{AVal}(#1)}
\newcommand{\act}{\mathbf{a}}
\newcommand{\CC}{\mathbf{C}}
\newcommand{\tr}{\mathtt{t}}

\newcommand{\iif}[3]{\mathbf{if}\ #1\ \mathbf{then} \ #2\ \mathbf{else} \ #3}

\newcommand{\pfr}[1]{\mathit{pfr}_{#1}}
\newcommand{\pbr}[1]{\mathit{pbr}_{#1}}
\newcommand{\ofr}[1]{\mathit{ofr}_{#1}}
\newcommand{\obr}[1]{\mathit{obr}_{#1}}
\newcommand{\pqr}[1]{\mathit{pqr}_{#1}}

\newcommand{\rst}[2]{{#1}\upharpoonright_{#2}}

\newcommand{\lsm}{\lesssim}
\newcommand{\lsmc}{\lesssim^{\mathrm{CIU}}}

\newcommand\powerstaged{\mathit{power\_staged}}
\newcommand\powerstagedgen{\mathit{power\_staged\_gen}}
\newcommand\power{\mathit{power}}

\begin{document}

\maketitle

\begin{abstract}
MetaML-style metaprogramming languages allow programmers to construct, manipulate and run code.
In the presence of higher-order references for code,
ensuring type safety is challenging,
as free variables can escape their binders.
In this paper, we present Contextual MetaML,
\textit{the first metaprogramming language that supports storing and running open code
under a strong type safety guarantee}.
The type system utilises contextual modal types to track and reason about free variables in code explicitly.

A crucial concern in metaprogramming-based program optimisations
is whether the optimised program preserves the meaning of the original program.
Addressing this question requires a notion of program equivalence and techniques to reason about it.
In this paper, we provide a semantic model that captures contextual equivalence for Contextual MetaML,
establishing \textit{the first full abstraction result for an imperative MetaML-style language}.
Our model is based on traces derived via operational game semantics, 
where the meaning of a program is modelled by its possible interactions with the environment.
We also establish a novel closed instances of use theorem
that accounts for both call-by-value and call-by-name closing substitutions.
\end{abstract}

\section{Introduction}

\subsection{Type-safe metaprogramming with higher-order references for open code}

Generative metaprogramming languages enable programmers to write programs that generate programs.
The distinctive feature of quotation-based metaprogramming languages such as MetaML \cite{tahaMetaMLMultistageProgramming2000}
is their once-and-for-all type safety guarantee: well-typed metaprograms always generate well-typed object programs.
This stands in contrast to AST-based metaprogramming languages such as C++ Templates \cite{stroustrupProgrammingLanguage2013,vandevoordeTemplatesCompleteGuide2017}
and Template Haskell \cite{sheardTemplateMetaprogrammingHaskell2002,bergerModellingHomogeneousGenerative2017},
where the generated object program needs to undergo an additional typechecking phase.

MetaML's strong safety guarantee is difficult to achieve, especially in the presence of higher-order references
that can store code values.
For example, consider the following MetaOCaml \cite{kiselyovMetaOCamlTheoryImplementation2023,kiselyovDesignImplementationBER2014} program:
\begin{Ocaml}
let r = ref .< 1 >.;;
let f = .< fun x -> .~(r := .< x >.; .<0>.) >.;;
let t = Runcode.run (!r);;
\end{Ocaml}

The Bracket operator \ocaml{.< >.} converts a term into a static piece of code,
and the Escape operator \ocaml{.~} allows terms to be evaluated inside a Bracket.
In this example, the term \ocaml{r := .< x >.; .<0>.} is evaluated,
even if it occurs under the binder \ocaml{fun x -> }.
As a result, although the program appears to be well-scoped,
the reference \ocaml{r} stores the code \ocaml{.< x >.} which contains a free variable.
It is therefore unsafe to run the code stored in \ocaml{r}.

Currently, MetaOCaml raises a runtime exception
whenever there is an attempt to use a piece of code that contains free variables.
This solution is unsatisfactory for two reasons.
First, it breaks the type safety guarantee of MetaML,
as such scope extrusions are detected at runtime rather than at compile time.
Second, it limits the usability of the language,
as storing and manipulating code with free variables
is a useful tool in generative metaprogramming.

Recently, Hu and Pientka \cite{huLayeredModalType2024} developed a type theory for metaprogramming called
\textit{layered modal type theory}.
They use contextual types \cite{nanevskiContextualModalType2008}
to track and reason about free variables in code values explicitly,
and provide a strong safety guarantee on the generation and manipulation of such code values.
However, their result is based on an equational type theory that lacks general recursion and references.

In this paper, we present Contextual MetaML,
\textit{the first metaprogramming language that allows storing and running open code
under a strong type safety guarantee}.
The language has two key operators.
The Box operator, $\bx M$, converts any term $\Gamma\vd M:T$ into a piece of code.
The resulting type, $\Box(\Gamma\vd T)$, records both the type and the typing context of $M$.
The Letbox operator, $\letbox{u}{M_1}{M_2}$, ``unboxes'' the code $M_1$, and substitutes the extracted term for $u$ in $M_2$.
For such unboxing to be type-safe,
all occurrences of $u$ in $M_2$ must be associated with an explicit substitution $\delta$
that provides values for all the free variables in $M_1$.
This guarantees that however a piece of code is used or passed around in references,
its free variables can never be leaked in an undesirable context.

Consider the classical example of \textit{staging} the power function:
\begin{Ocaml}
let power n x =
    let y = ref 1 in
    for i = 1 to n do
        y := !y * x
    done;
    !y;;
\end{Ocaml}
We would like a function \ocaml{power_staged n}
whose output is a function that takes $x$ and computes $x^n$.
In MetaOCaml, one would have to write the following program:
\begin{Ocaml}
let rec power_staged_gen' n x =
    if n <= 0 then .< 1 >.
    else .< .~(power_staged_gen' (n - 1) x) * .~x >.;;
let power_staged' n =
    Runcode.run .< fun x -> .~(power_staged_gen' n .< x >.) >.;;
let f' = power_staged' 3;;
\end{Ocaml}
The function \ocaml{f'} evaluates to \ocaml{fun x -> 1 * x * x * x},
which is a more efficient program than the plain \ocaml{power n} function
as it no longer contains branching and recursion.
However, the programmer has to write an auxiliary recursive function \ocaml{power_staged_gen'},
and the pervasive use of Brackets and Escapes makes writing such programs
laborious and error-prone.

In Contextual MetaML,
one could write the program as\footnote{Here we use syntactic sugar,
omitting the explicit substitution $\delta$ if it is the identity, i.e. $[x/x]$.}:
\begin{Ocaml}
let power_staged n =
    let y = ref box 1 in
    for i = 1 to n do
        y := letbox u = !y in box (u * x)
    done;
    letbox u = !y in fun x -> u;;
let f = power_staged 3;;
\end{Ocaml}
Here, the code is constructed in a way
that is more intuitive and closer to the original imperative \ocaml{power} function.
The reference \ocaml{y} stores the open code \ocaml{box (1 * x * ...)} and is dynamically updated during the for-loop.

In \Cref{tab: type systems}, we summarise the design space of type systems for quotation-based metaprogramming.

\begin{table}
\begin{threeparttable}
\caption{Type systems for quotation-based metaprogramming languages}
\label{tab: type systems}
\setlength{\tabcolsep}{3pt}
\begin{tabular}{@{}cccccc@{}}
\toprule
Type system                     & Imperative & \makecell{Allows storing\\open code} & \makecell{Allows running\\open code} & \makecell{Type\\safety} & \makecell{Multi-\\layered} \\
\midrule
MetaML \cite{tahaMetaMLMultistageProgramming2000}                          & $\times$          & N/A                                        & $\times$                        & \checkmark           & \checkmark\\
Closed Types for MetaML \cite{calcagnoClosedTypesSafe2003}         & \checkmark          & $\times$                                          & $\times$                        & \checkmark          & \checkmark \\
Refined Environment Classifiers \cite{kiselyovRefinedEnvironmentClassifiers2016a} & \checkmark          & $\checkmark / \times$\tnote{1} & $\times$                        & \checkmark & $\times$          \\
MetaOCaml \cite{kiselyovMetaOCamlTenYears2024}                      & \checkmark          & \checkmark                                          & \checkmark                        & $\times$ & \checkmark          \\
Template Haskell \cite{sheardTemplateMetaprogrammingHaskell2002}               & $\times$          & N/A                                        & $\times$                        & $\checkmark / \times$ \tnote{2} & $\times$        \\
Typed Template Haskell \cite{xieStagingClassSpecification2022}         & $\times$          & N/A                                        & $\times$                        & \checkmark & $\times$          \\
MacoCaml \cite{xieMacoCamlStagingComposable2023}                      & \checkmark          & $\times$                                          & $\times$                        & \checkmark & $\times$          \\
M\oe bius \cite{jangMoebiusMetaprogrammingUsing2022}                        & $\times$          & N/A                                        & \checkmark                        & $\times$ & \checkmark          \\
Layered Modal Type Theory \cite{huLayeredModalType2024}      & $\times$          & N/A                                        & \checkmark                        & \checkmark  & $\times$         \\
$n$-layered Modal Type Theory \cite{huLayeredApproachIntensional2025}      & $\times$          & N/A                                        & \checkmark                        & \checkmark  & \checkmark         \\
Contextual MetaML (our work)            & \checkmark          & \checkmark                                          & \checkmark                        & \checkmark & $\times$          \\
\bottomrule
\end{tabular}
\begin{tablenotes}
\item [1] Refined Environment Classifiers allow storing open code temporarily, but such storage cannot go out of the range of the binder for the free variable.
\item [2] In Template Haskell, if a program is executed, it is guaranteed not to go wrong.
However, it requires typechecking any generated program before running it, so typechecking and program execution are interleaved.
\end{tablenotes}
\end{threeparttable}
\end{table}

\subsection{Closed instances of use with both call-by-name and call-by-value substitutions}
After introducing Contextual MetaML and proving its type safety,
we move on to discuss program equivalence in this language.
When performing metaprogramming-based program optimisations,
a critical concern is whether the optimisation is faithful.
In the above power example,
we expect that \ocaml{power} and \ocaml{power_staged} are equivalent.
To address this concern, we need to develop a theoretical foundation for the following question:
\textit{When are two Contextual MetaML programs equivalent?}

Capturing this problem denotationally is commonly known as the \textit{full abstraction} problem~\cite{milnerFullyAbstractModels1977},
where one seeks a sound and complete model for \textit{contextual equivalence}.
Two programs $M_1$ and $M_2$ are said to be contextually equivalent
if their observable behaviours (in our case, termination)
are the same in all suitable contexts.

The behaviour of a program in an arbitrary context is difficult to analyze for two reasons.
First, the program might be erased or duplicated before it is executed.
Second, the context can bind free variables in the program.
To overcome these difficulties,
programming language researchers utilise \textit{closed instances of use} (CIU)
\cite{honsellVariableTypedLogic1995,talcottReasoningProgramsEffects1998} approximations,
which say that equivalence under all suitable contexts
coincides with equivalence under all suitable evaluation contexts, heaps, and substitutions that close all free variables.

In Contextual MetaML, there are two kinds of variables.
Local variables $x,y,z$ are bound by $\lambda$-abstractions
and substituted in $\beta$-reductions.
Such variables are meant to represent values,
and are substituted in a call-by-value manner.
Global variables $u,v,w$ are bound by $\mathbf{letbox}$
and substituted in the reduction rule
\[(\letbox{u}{\bx M_1}{M_2},h)\ra (M_2[M_1/u],h)\]
Such variables are meant to represent unboxed code ($M_1$ in the above rule),
and are substituted in a call-by-name manner.
Therefore, our definition of closed instances of use
needs to take into account both kinds of substitutions.
For local variables, we consider all closing substitutions that map them to values,
while for global variables, we consider all closing substitutions that map them to arbitrary terms.
In short,
\textit{local variables represent closed values, global variables represent open terms\footnote{
    The word ``open'' here means that the term \textit{might} (but does not have to) contain free variables.
}.}

As an example, consider the following two terms with free local variable $x$
\[
\begin{aligned}
    x:\Int&\vd_0 x :\Int \\
    x:\Int&\vd_0 x;x :\Int \\
\end{aligned}
\]
where $M_1;M_2$ should be read as syntactic sugar for $(\la y.M_2)M_1$ for some fresh $y$.
All call-by-value substitutions will map $x$ to an integer value $\nb{n}$.
Under such substitutions, $\nb{n}$ and $\nb{n};\nb{n}$
are always equivalent as the latter $\beta$-reduces to the former in one step.
This is consistent with the fact that the two terms are not distinguishable in any context,
i.e., they are contextually equivalent.

On the other hand, consider the following two terms with free global variable $u$
\[
\begin{aligned}
    u:(\cdot\vd\Int)&\vd_0 u :\Int \\
    u:(\cdot\vd\Int)&\vd_0 u;u :\Int \\
\end{aligned}
\]
where the typing context $u:(\cdot\vd\Int)$ says that $u$ stands for a closed term of type $\Int$.
A call-by-name substitution may map $u$ to an arbitrary term $M$ of type $\Int$,
for example
\[\ell:=!\ell+1;!\ell\]
where $\ell$ is location of type $\rf Int$.
Under this substitution, the two terms are not equivalent.
Supposing the initial heap maps location $\ell$ to $0$,
then the first term evaluates to $1$ as $\ell$ is only updated once,
while the second term evaluates to $2$ as $\ell$ is updated twice.
Indeed, the two terms are not contextually equivalent as they can be distinguished by the context
\[\letin{x}{\rf 0}{\letbox{u}{\bx (x:=!x+1;!x)}{\bullet}}\]

\subsection{Operational game semantics for imperative metaprogramming}
Although CIU approximations suffice for small examples like above,
we need a more powerful tool to reason about contextual equivalences in general.
\textit{Operational game semantics} \cite{lairdFullyAbstractTrace2007,jaberCompleteTraceModels2021}
has been successfully used to construct fully abstract models for various programming languages.
It represents the behaviour of a program in a context
as a sequence of interactions between the program and the context.
An interaction is represented as an \textit{action},
and a sequence of actions is called a \textit{trace}.
The semantics of a program is then given by a set of
all the traces that the program can generate
according to a set of transition rules.
This approach is valued for its intuitive, operational nature
and is particularly well-suited to higher-order imperative programming languages.
Its prominent feature is the use of names to represent in abstract terms the functions
that are being called and communicated between the program and the context when they interact.

For example, consider a simplified trace
\[\ol{f_1}\quad f_1(3)\quad \ol{f_2}\quad f_2(2)\quad \ol{8}\]

Each action represents either a call or a return.
Actions performed by the program are overlined,
while those without an overline correspond to the context.
The program and the context alternate in taking actions as if it were a dialogue.
The symbols $f_1$ and $f_2$ are used in the dialogue between the Player and the Opponent
and are intended as abstract representations of function values,
in this case functions that are passed to the Opponent by the Player.
Since these are merely names, the Opponent gains no knowledge of the underlying function beyond its type.
In contrast, the Player will maintain a record mapping these names to the corresponding function values.

The trace above can be generated by the \ocaml{power_staged} program we mentioned earlier.
In this trace, the program first announces itself as a function represented by the name $f_1$ (return).
Then the context asks for the result of applying the function $f_1$ to $n=3$ (call),
and the program responds with another function name $f_2$ (return).
The name $f_2$ denotes the result of computing \ocaml{power_staged 3},
which is essentially the function \ocaml{fun x -> 1 * x * x * x},
but this underlying value is hidden from the context.
The context further supplies the argument $x=2$ to $f_2$ (call),
and the program responds that the result of the computation is $8$.
On the one hand, the trace can be generated from \ocaml{power_staged} alone according to our transition rules.
On the other hand, it also corresponds to the evaluation context $K=\bullet\ 3\ 2$.
This correspondence between traces and contexts is key to our full abstraction result,
and will be made precise in \Cref{thm:correctness} (Correctness) and \Cref{lem: definability} (Definability).

The trace above can also be generated by \ocaml{power}.
For \ocaml{power}, the underlying value of $f_2$ is
\begin{Ocaml}
fun x ->
    let y = ref 1 in
    for i = 1 to 3 do
        y := !y * x
    done;
    !y;;
\end{Ocaml}
Again, this underlying value is not reflected in the action,
so the difference from \ocaml{power_staged} is not observable to the context.

Writing $\Tr{M}$ for the set of traces corresponding to $M$ according to our model,
we shall have $\Tr{\text{\ocaml{power}}}=\Tr{\text{\ocaml{power_staged}}}$.
Given that the trace model is fully abstract, we can then deduce
$\text{\ocaml{power}}\approx\text{\ocaml{power_staged}}$
where $\approx$ stands for contextual equivalence.
This equivalence confirms the faithfulness of the staging transformation.

The trace above is not too different from
what we can find in existing trace models for higher-order functions and references \cite{lairdFullyAbstractTrace2007}.
The distinctive feature of our model is that the traces can also involve names that represent code values,
which we shall call \emph{box values}.
To illustrate this, let us consider the following program, reminiscent of the axiom $K$ for modal logic: $\Box (p\ra q)\ra \Box p\ra \Box q$.
\[
\begin{aligned}
\vd_0 &\la xy.\letbox{u}{x}\letbox{v}{y}\bx u^{x_2/x_1}\ v^{y_2/y_1} \\
\colon&\Box(x_1:\rf\Int\ra\Int\vd\rf\Int\ra \Int) \ra \Box(y_1:\rf\Int\vd\rf\Int) \\
&\ra \Box(x_2:\rf\Int\ra\Int,y_2:\rf\Int\vd\Int)
\end{aligned}
\]

The notation $V/x$ represents capture-avoiding substitution.
For example, the term $u^{x_2/x_1}$ stands for the result of replacing all occurrences of $x_1$ with $x_2$
in the unboxed term represented by $u$.

This program can generate the trace
\[
\begin{aligned}
&(\ol{f_1},\cdot,\cdot)\; (f_1(b_1),\cdot,\cdot)\; (\ol{f_2},\cdot,\cdot)\;
(f_2(b_2),\cdot,\cdot)\; (\ol{b_3},\cdot,\cdot)\; (\run b_3^{f_3/x_2,\ell_1/y_2},\Sigma,\chi_1)\\
&(\ol{\run b_1^{f_4/x_1}},\Sigma,\chi_1)\; (f_5,\Sigma,\chi_1)\;
(\ol{\run b_2^{\ell_1/y_1}},\Sigma,\chi_1)\; (\ell_1,\Sigma,\chi_1)\; (\ol{f_5}(\ell_1),\Sigma,\chi_1) \\
& (f_4(\ell_1),\Sigma,\chi_1)\; (\ol{f_3}(\ell_1),\Sigma,\chi_1)\; (1,\Sigma,\chi_2)\; (\ol{1},\Sigma,\chi_2)\; (1,\Sigma,\chi_2)\; (\ol{1},\Sigma,\chi_2)\\
\end{aligned}
\]
where $\Sigma=\ell_1:\rf \Int$ is a typing context for references,
and $\chi_1=\ell_1\mapsto 0$ and $\chi_2=\ell_1\mapsto 1$ are heaps.

Here, each action consists of three components.
The first component is the questions and answers that we discussed above.
The second and third components provide a heap $\Sigma;\cdot;\cdot\vd h:\heap$
that contains the references being shared by the program and the context.
In the trace, the program first announces itself as a function represented by the name $f_1$ (return).
Then the context asks for the result of applying $f_1$ to $x=b_1$ and $y=\ell_1$ sequentially (call),
triggering $\ol{f_2}$ and $\ol{b_3}$ (return).
The box name $b_3$ has type $\Box(x_2:\rf\Int\ra\Int,y_2:\rf\Int\vd\Int)$,
so in order to run it, the context needs to provide values for the variables $x_2$ and $y_2$,
which are of type $\rf\Int\ra\Int$ and $\rf\Int$ respectively.
In the corresponding action $\run b_3^{f_3/x_2,\ell_1/y_2}$ (call),
these values are represented by the function name $f_3$ and concrete location value $\ell_1$.
Now the program needs to evaluate $\#b_1^{f_3/x_1}\ \#b_2^{\ell_1/y_1}$,
where $\#b_i$ stands for the unboxed term extracted from box name $b_i$.
To evaluate this term, the program asks for the result of running
$b_1$ and $b_2$ (call) under corresponding substitutions.
Again, the function value $f_3$ is represented by a fresh name $f_4$,
while the location value $\ell_1$ is passed directly.
After obtaining the function name $f_5$ and location value $\ell_1$
as answers from the context (return),
the program can finally compute the result by applying $f_5$ to $\ell_1$ (call).
The context then asks for the result of applying $f_4$ to $\ell_1$ (call),
which the program can only answer after knowing the result of applying $f_3$ to $\ell_1$ (call).
The function $f_3$ modifies the value stored in location $\ell_1$ from $0$ to $1$
and returns the updated value $1$ (return).
Finally, a sequence of answers respond to all the pending questions so far (return).

The above trace corresponds to interacting with the evaluation context
\[
\begin{aligned}
&\letbox{u}{\bullet\ (\bx x_1)\ (\bx y_1)}\\
&\letin{x_3}{\la z.z:=!z+1;!z}\\
&\letin{y_3}{\rf 0}\\
&\quad u^{x_3/x_2,y_3/y_2}\\
\end{aligned}
\]

Our trace model is the first fully abstract model for an imperative MetaML-style metaprogramming language.
A comparison with similar results can be found in \Cref{tab: full abstraction}.

\begin{table}
\caption{Models of contextual equivalence for quotation-based metaprogramming languages}
\label{tab: full abstraction}
\begin{threeparttable}
\begin{tabularx}{\textwidth}{@{}>{\centering\arraybackslash}Xcccc@{}}
\toprule
Paper & Approach & Fully abstract & Imperative & Typed \\
\midrule
Taha and Sheard \cite{tahaMetaMLMultistageProgramming2000} & Equational rules & $\times$  & $\times$ & \checkmark \\
Berger and Tratt \cite{bergerProgramLogicsHomogeneous2015} & Hoare logic & \checkmark  & $\times$ & \checkmark \\
Inoue and Taha \cite{inoueReasoningMultistagePrograms2016} & Applicative bisimulation & \checkmark  & $\times$ & $\times$ \\
Our work & Operational game semantics & \checkmark  & \checkmark & \checkmark \\
\bottomrule
\end{tabularx}
\end{threeparttable}
\end{table}

\subparagraph*{Outline}
The paper is organized as follows.
In \Cref{sec: language}, we introduce the syntax, type system,
and operational semantics of Contextual MetaML,
and prove its type safety.
In \Cref{sec: ciu}, we define closed instances of use approximation for Contextual MetaML, and prove that it coincides with the contextual preorder.
In \Cref{sec: trace model}, we define the trace model for Contextual MetaML
based on operational game semantics.
In \Cref{sec: full abstraction}, we prove that the trace model is fully abstract with respect to contextual equivalence.
In \Cref{sec: examples}, we use examples to illustrate how our trace model
can be used to prove useful program equivalences in metaprogramming.
In \Cref{sec: related work,sec: future work},
we discuss related and future work respectively.

\section{Contextual MetaML}\label{sec: language}
In this section, we present the syntax, operational semantics, and type safety of Contextual MetaML.
\subsection{Syntax}
\begin{figure}
\begin{flushleft}

    \textbf{Types:} $T \df\Unit\mid\Int\mid T_1\ra T_2\mid\rf T\mid\Box(\Gamma\vd T)$
    
    \medskip

    \textbf{Store contexts:} $\Sigma\df\cdot\mid\Sigma,\ell:\rf{T}$

    \medskip

    \textbf{Local variable contexts:} $\Gamma \df\cdot\mid\Gamma,x:T$

    \medskip

    \textbf{Global variable contexts:} $\Psi\df\cdot\mid\Psi,u:(\Gamma\vd T)$

    \medskip

    \textbf{Terms: }
    \[\begin{array}{rcl}
   M &\df&()\mid \nb{n}\mid x\mid u^\delta\mid\ell
    \mid \la x.M\mid M_1M_2\mid \rec{y}{x}{M}
    \mid M_1\oplus M_2 \\
    &&\mid \iif{M_1}{M_2}{M_3}
    \mid \rf M\mid !M \mid M_1:=M_2 \mid \\
    &&\bx M\mid \letbox{u}{M_1}{M_2}
    \end{array}\]

    \textbf{Primitive values: } $a\,\df\, ()\mid\nb{n}\mid \ell$

    \medskip

    \textbf{Values: } $V\,\df\, a\mid x\mid \la x.M\mid \rec{y}{x}{M}\mid \bx M$

    \medskip

    \textbf{Local substitutions: } $\delta\df \cdot \mid\delta, V/x$

    \medskip

    \textbf{Global substitutions: } $\sigma\df \cdot \mid\sigma, M/u$

    \medskip

    \textbf{Heaps: } $h\df \cdot \mid h,\ell\mapsto V$

    \medskip

    \textbf{Contexts: }     
    \[\begin{array}{rcl}
    C&\df & \bullet \mid\la x.C
    \mid CM\mid MC \mid \rec{y}{x}{C} 
    \mid C\oplus M \mid M\oplus C \mid\\
    && \iif{C}{M_1}{M_2}\mid \iif{M_1}{C}{M_2}\mid\\
    &&\iif{M_1}{M_2}{C}\mid \rf C\mid !C\mid C:=M\mid M:=C\mid \\
    &&\bx C\mid \letbox{u}{C}{M}\mid \letbox{u}{M}{C}
    \end{array}\]

    \textbf{Evaluation contexts: }
    \[\begin{array}{rcl}
    K&\df & \bullet \mid KM\mid VK \mid K\oplus M \mid V\oplus K \mid
    \iif{K}{M_1}{M_2}\mid \\
    &&\rf{K}\mid !K\mid K:=M\mid V:=K\mid\letbox{u}{K}{M}\mid \\
    &&\letbox{u}{V}{K}
    \end{array}\]

\medskip

Throughout the Figure, we assume $x\in \mathit{LVar}$, $u\in\mathit{GVar}$, $\ell\in \mathit{Loc}$, $n\in\N$, and $\oplus\in\{+,-,*,<,=\}$.

\medskip

We write $\letin{x}{M_1}{M_2}$ for $(\lambda x.M_2)M_1$. If $x$ does not occur in $M_2$, we also write $M_1;M_2$.

\end{flushleft}
    \caption{Syntax}
    \label{fig: Syntax}
\end{figure}

The syntax of Contextual MetaML is given in \Cref{fig: Syntax}.
The language can be thought of as the result of adding layered modal type theory \cite{huLayeredModalType2024}
to ML with higher-order references.

There are two types of variables in Contextual MetaML.
Local variables, ranging over $x,y,z,\dots$,
are ordinary variables that can be found in any call-by-value lambda calculus.
An occurrence of $x$ in a term is either free or bound.
In the former case,
it shall appear in the local variable context $\Gamma$ as $x:T$,
which reads ``$x$ is of type $T$''.
In the latter case,
it shall be bound by a lambda abstraction $\la x.M$ or recursive definition $\rec{y}{x}{M}$.

Global variables, ranging over $u,v,w,\dots$,
are a special type of variables that are used in contextual type theories to represent open terms.
A global variable $u$ can never occur alone in a term;
it must always occur as $u^\delta$,
associated with an explicit local substitution $\delta$ that provides values for free variables in the term represented by $u$.
An occurrence of $u$ in a term is also either free or bound.
In the former case,
it shall appear in the global variable context $\Psi$ as $u:(\Gamma\vd T)$,
which reads ``$u$ is of type $T$ under local variable context $\Gamma$''.
In the latter case,
it shall be bound by a Letbox construct $\letbox{u}{M_1}{M_2}$,
which we shall explain now.

Contextual MetaML features two metaprogramming constructs.
The Box operator $\bx M$ converts a term $M$ into a value
that reads ``a piece of code whose content is $M$''.
The Letbox operator $\letbox{u}{M_1}{M_2}$
binds the occurrence of $u$ in $M_2$,
and reads ``In $M_2$, let $u$ be the term contained in $M_1$''.

To avoid confusion,
we shall use the word \textit{term} as defined in \Cref{fig: Syntax},
and the word \textit{code} to refer to a term of the form $\bx M$.
The word \textit{program} is reserved for use in the operational game semantics.

Readers familiar with the original MetaML \cite{tahaMetaMLMultistageProgramming2000}
might find it helpful to understand $\bx M$ as MetaML's Bracket $\langle M \rangle$,
and $\letbox{u}{M_1}{M_2}$ as MetaML's Escape $\es M$ and Run $\run M$.
Whether Letbox should be viewed as Escape or Run depends on whether $u$ occurs under a Box.
For example, $\letbox{u}{M}{u}$ corresponds to MetaML's $\run M$,
while $\letbox{u}{M}{\bx (1+u)}$ corresponds to MetaML's $\langle 1 + \es M \rangle$.
Compared with the original MetaML,
Contextual MetaML takes a more uniform and elegant approach to unboxing code,
and avoids the problem that \textit{evaluations can go under binders},
as noted by Inoue and Taha \cite{inoueReasoningMultistagePrograms2016}.
\subsection{Type System}
\begin{figure}

    \textbf{Well-formed types: } $\vd_i T:\typ$

    \begin{multicols}{3}
    \begin{prooftree}
        \AxiomC{$\vphantom{\vd_i T:\typ}$}
        \LeftLabel{(Typ-Unit)}
        \UnaryInfC{$\vd_i\Unit:\typ$}
    \end{prooftree}

    \begin{prooftree}
        \AxiomC{$\vphantom{\vd_i T:\typ}$}
        \LeftLabel{(Typ-Int)}
        \UnaryInfC{$\vd_i\Int:\typ$}
    \end{prooftree}

    \begin{prooftree}
        \AxiomC{$\vd_i T:\typ$}
        \LeftLabel{(Typ-Ref)}
    \UnaryInfC{$\vd_i \rf T:\typ$}
    \end{prooftree}

    \end{multicols}

    \begin{multicols}{2}

    \begin{prooftree}
        \AxiomC{$\vd_i T_1:\typ$}
        \AxiomC{$\vd_i T_2:\typ$}
        \LeftLabel{(Typ-Arrow)}
        \BinaryInfC{$\vd_i T_1\ra T_2:\typ$}
    \end{prooftree}

    \begin{prooftree}
        \AxiomC{$\vd_0 \Gamma:\lctx$}
        \AxiomC{$\vd_0 T:\typ$}
        \LeftLabel{(Typ-Box)}
        \BinaryInfC{$\vd_1 \Box(\Gamma\vd T):\typ$}
    \end{prooftree}

    \end{multicols}

    \textbf{Well-formed store contexts: } $\vd_i \Sigma:\sctx$
    \vspace{-3mm}
    \begin{multicols}{2}
    \begin{prooftree}
        \AxiomC{$\vphantom{\vd_i \Sigma:\sctx}$}
        \noLine
        \UnaryInfC{$\vphantom{\vd_i T:\typ \quad \ell\notin\dom{\Sigma}}$}
        \LeftLabel{(SCtx-Empty)}
        \UnaryInfC{$\vd_i \cdot:\sctx$}
    \end{prooftree}

    \begin{prooftree}
        \AxiomC{$\vd_i \Sigma:\sctx$}
        \noLine
        \UnaryInfC{$\vd_i T:\typ$}
        \AxiomC{$\ell\notin\dom{\Sigma}$}
        \LeftLabel{(SCtx-Cons)}
        \BinaryInfC{$\vd_i \Sigma,\ell:\rf T:\sctx$}
    \end{prooftree}
    \end{multicols}

    \textbf{Well-formed local variable contexts: } $\vd_i \Gamma:\lctx$
    \vspace{-3mm}
    \begin{multicols}{2}
    \begin{prooftree}
        \AxiomC{$\vphantom{\vd_i \Gamma:\lctx}$}
        \noLine
        \UnaryInfC{$\vphantom{\vd_i T:\typ \quad x\notin\dom{\Gamma}}$}
        \LeftLabel{(LCtx-Empty)}
        \UnaryInfC{$\vd_i \cdot:\lctx$}
    \end{prooftree}

    \begin{prooftree}
        \AxiomC{$\vd_i \Gamma:\lctx$}
        \noLine
        \UnaryInfC{$\vd_i T:\typ$}
        \AxiomC{$x\notin\dom{\Gamma}$}
        \LeftLabel{(LCtx-Cons)}
        \BinaryInfC{$\vd_i \Gamma,x:T:\lctx$}
    \end{prooftree}

    \end{multicols}

    \textbf{Well-formed global variable contexts: } $\vd \Psi:\gctx$
    \vspace{-3mm}
    \begin{multicols}{2}

    \begin{prooftree}
        \AxiomC{$\vphantom{\vd \Psi:\gctx \quad \vd_0 T:\typ}$}
        \noLine
        \UnaryInfC{$\vphantom{\vd_0 \Gamma:\lctx\quad u\notin\dom{\Psi}}$}
        \LeftLabel{(GCtx-Empty)}
        \UnaryInfC{$\vd \cdot:\gctx$}
    \end{prooftree}
    \begin{prooftree}
        \AxiomC{$\vd \Psi:\gctx$}
        \noLine
        \UnaryInfC{$\vd_0 \Gamma:\lctx$}
        \AxiomC{$\vd_0 T:\typ$}
        \noLine
        \UnaryInfC{$u\notin\dom{\Psi}$}
        \LeftLabel{(GCtx-Cons)}
        \BinaryInfC{$\vd \Psi,u:(\Gamma\vd T):\gctx$}
    \end{prooftree}

    \end{multicols}

    \begin{flushleft}

    Throughout the figure, we assume $i\in\{0,1\}$.
    
    \end{flushleft}

    \caption{Well-formed types and typing contexts}
    \label{fig: Type and Context}
\end{figure}

\begin{figure}

    \textbf{Well-typed terms: } $\spgv{i} M:T$
        
    \begin{prooftree}
        \AxiomC{$\vd_i\Gamma:\sctx$}
        \AxiomC{$\vd\Psi:\gctx$}
        \AxiomC{$\vd_i\Gamma:\lctx$}
        \LeftLabel{(Tm-Unit)}
        \TrinaryInfC{$\spgv{i}():\Unit$}
    \end{prooftree}

    \begin{prooftree}
        \AxiomC{$\vd_i\Gamma:\sctx$}
        \AxiomC{$\vd\Psi:\gctx$}
        \AxiomC{$\vd_i\Gamma:\lctx$}
        \LeftLabel{(Tm-Int)}
        \TrinaryInfC{$\spgv{i}\nb{n}:\Int$}
    \end{prooftree}

    \begin{prooftree}
        \AxiomC{$\vd_i\Sigma:\sctx$}
        \noLine
        \UnaryInfC{$\vd\Psi:\gctx$}
        \AxiomC{$\vd_i\Gamma:\lctx$}
        \noLine
        \UnaryInfC{$x:T\in\Gamma$}
        \LeftLabel{(Tm-LVar)}
        \BinaryInfC{$\spgv{i} x:T$}
    \end{prooftree}
    
    \begin{prooftree}
        \AxiomC{$\spgv{i}\delta:\Gamma'$}
        \AxiomC{$u:(\Gamma'\vd T)\in\Psi$}
        \LeftLabel{(Tm-GVar)}
        \BinaryInfC{$\spgv{i} u^\delta:T$}
    \end{prooftree}

    \begin{prooftree}
        \AxiomC{$\vd_i\Sigma:\sctx$}
        \AxiomC{$\vd\Psi:\gctx$}
        \AxiomC{$\vd_i\Gamma:\lctx$}
        \AxiomC{$\ell:\rf T\in\Sigma$}
        \LeftLabel{(Tm-Loc)}
        \QuaternaryInfC{$\spgv{i} \ell:\rf T$}
    \end{prooftree}

    \begin{prooftree}
        \AxiomC{$\Sigma;\Psi;\Gamma,x:T_1\vd_i M:T_2$}
        \LeftLabel{(Tm-Abs)}
        \UnaryInfC{$\spgv{i}\la x.M:T_1\ra T_2$}
    \end{prooftree}

    \begin{prooftree}
        \AxiomC{$\spgv{i} M_1:T_1\ra T_2$}
        \AxiomC{$\spgv{i} M_2:T_1$}
        \LeftLabel{(Tm-App)}
        \BinaryInfC{$\spgv{i} M_1M_2:T_2$}
    \end{prooftree}
    
    \begin{prooftree}
        \AxiomC{$\Sigma;\Psi;\Gamma,y:T_1\ra T_2,x:T_1\vd_i M:T_2$}
        \LeftLabel{(Tm-Rec)}
        \UnaryInfC{$\spgv{i} \rec{y}{x}{M}:T_1\ra T_2$}
    \end{prooftree}

    \begin{prooftree}
        \AxiomC{$\spgv{i} M_1:\Int$}
        \AxiomC{$\spgv{i} M_2:\Int$}
        \LeftLabel{(Tm-Arith)}
        \BinaryInfC{$\spgv{i} M_1\oplus M_2:\Int$}
    \end{prooftree}

    \begin{prooftree}
        \AxiomC{$\spgv{i} M_1:\Int$}
        \AxiomC{$\spgv{i} M_2:T$}
        \AxiomC{$\spgv{i} M_3:T$}
        \LeftLabel{(Tm-If)}
        \TrinaryInfC{$\spgv{i} \iif{M_1}{M_2}{M_3}:T$}
    \end{prooftree}

    \begin{multicols}{2}
    \begin{prooftree}
        \AxiomC{$\spgv{i} M: T$}
        \LeftLabel{(Tm-Ref)}
        \UnaryInfC{$\spgv{i} \rf M:\rf T$}
    \end{prooftree}

    \begin{prooftree}
        \AxiomC{$\spgv{i} M:\rf T$}
        \LeftLabel{(Tm-Deref)}
        \UnaryInfC{$\spgv{i} !M:T$}
    \end{prooftree}

    \end{multicols}

    \begin{prooftree}
        \AxiomC{$\spgv{i} M_1:\rf T$}
        \AxiomC{$\spgv{i} M_2:T$}
        \LeftLabel{(Tm-Assign)}
        \BinaryInfC{$\spgv{i} M_1:=M_2:\Unit$}
    \end{prooftree}

    \begin{prooftree}
        \AxiomC{$\vd_1\Gamma:\lctx$}
        \AxiomC{$\Sigma;\Psi;\Gamma'\vd_0 M:T$}
        \LeftLabel{(Tm-Box)}
        \BinaryInfC{$\spgv{1} \bx M:\Box(\Gamma'\vd T)$}
    \end{prooftree}

    \begin{prooftree}
        \AxiomC{$\spgv{1} M_1:\Box(\Gamma'\vd T)$}
        \AxiomC{$\Sigma;\Psi,u:(\Gamma'\vd T);\Gamma\vd_1 M_2:T'$}
        \LeftLabel{(Tm-LetBox)}
        \BinaryInfC{$\spgv{1} \letbox{u}{M_1}{M_2}:T'$}
    \end{prooftree}
    
    \medskip

    \begin{flushleft}

    Throughout the figure, we assume $i\in\{0,1\}$.
    
    \end{flushleft}

    \caption{Well-typed terms}
    \label{fig: Type System}
\end{figure}

\begin{figure}
    \textbf{Well-typed heaps:} $\spgv{} h:\heap$
    \begin{prooftree}
        \AxiomC{$\dom{\Sigma}=\dom{h}$}
        \AxiomC{$\spgv{i}h(\ell):T$ for all $\ell:\rf T\in\Sigma$}
        \LeftLabel{(Heap)}
        \BinaryInfC{$\spgv{}h:\heap$}
    \end{prooftree}

    \textbf{Well-typed contexts:} $\sccv{} C:(\Psi';\Gamma';i';T')\ra(\Psi;\Gamma;i;T)$
    \begin{prooftree}
        \AxiomC{$\Psi'\subseteq\Psi$}
        \AxiomC{$\Gamma'\subseteq\Gamma$}
        \AxiomC{$i'\le i$}
        \LeftLabel{(Ctx-Hole)}
        \TrinaryInfC{$\sccv{} \bullet:(\Psi';\Gamma';i';T)\ra(\Psi;\Gamma;i;T)$}
    \end{prooftree}
    Other rules are omitted for brevity.

    \textbf{Well-typed local substitutions:} $\spgv{i}\delta:\Gamma'$

    \begin{prooftree}
        \AxiomC{$\dom{\delta}=\dom{\Gamma'}$}
        \AxiomC{$\spgv{i} \delta(x):T$ for all $x:T\in\Gamma'$}
        \LeftLabel{(LSubst)}
        \BinaryInfC{$\spgv{i}\delta:\Gamma'$}
    \end{prooftree}

    \textbf{Well-typed global substitutions:} $\Sigma;\Psi\vd \sigma:\Psi'$

    \begin{prooftree}
        \AxiomC{$\dom{\sigma}=\dom{\Psi'}$}
        \AxiomC{$\Sigma;\Psi;\Gamma\vd_0 \sigma(u):T$ for all $u:(\Gamma\vd T)\in\Psi'$}
        \LeftLabel{(GSubst)}
        \BinaryInfC{$\Sigma;\Psi\vd \sigma:\Psi'$}
    \end{prooftree}

    \medskip

    \begin{flushleft}

    Throughout the Figure, we assume $i\in\{0,1\}$.
    
    \end{flushleft}

    \caption{Well-typed heaps, contexts and substitutions}
    \label{fig: Heap Context and Substitution}
\end{figure}

The typing rules of Contextual MetaML are presented in
\Cref{fig: Type and Context,fig: Type System,fig: Heap Context and Substitution}.

The way global variables are typed in the rule (Tm-GVar) should be read as follows:
if $u$ represents a term of type $T$ under local variable context $\Gamma'$,
and $\delta$ is a type-preserving local substitution that maps variables in $\Gamma'$
to values under the local variable context $\Gamma$,
then $u^\delta$ is a term of type $T$ under $\Gamma$.

The new Box type $\Box(\Gamma\vd T)$ shall be read as
``a piece of code whose content is a term of type $T$ under local variable context $\Gamma$''.
The rules (Tm-Box) and (Tm-Letbox) are the introduction and elimination rules for this type respectively.

In the rule (Tm-Box),
the $\bx M$ operator binds all the free local variables in $M$.
The typing context required to type $M$, i.e. $\Gamma'$,
is encoded in the type of $\bx M$, i.e., $\Box(\Gamma'\vd T)$.
The resulting term $\bx M$ can therefore be typed under any, possibly empty,
local variable context $\Gamma$.
For example, we have $\vd_1 \bx x :\Box(x:\Int\vd \Int)$.

The rule (Tm-Letbox)
links two usages of the contextual type $\Gamma\vd T$ in the type system:
in a global variable context as $u:(\Gamma\vd T)$,
and in a Box type as $\Box(\Gamma\vd T)$.
The rule says that if $u$ is used as a term of type $T$ under $\Gamma'$ in $M_2$,
then the code that it is substituted for, i.e. $M_1$,
must indeed be a piece of code whose content has type $T$ under $\Gamma'$.

Most of the typing judgments are annotated with a layer $i$
that tracks the maximum depth of nested Boxes in all the associated types.
For example, we have $x:\Box(\cdot\vd\Int)\vd_1 x:\Box(\cdot\vd\Int)$,
and this judgment is not valid for layer $i=0$,
even if the term $x$ itself does not contain any Box.
In this paper, we follow Hu and Pientka \cite{huLayeredModalType2024}
and restrict the language to contain two layers only.
In other words, we forbid nesting of Box types at all.
This restriction greatly simplifies the metatheory of the language
while still allowing most of the common usages of metaprogramming
as we shall show in various examples in \Cref{sec: examples}.
We discuss the possible extension to multiple layers in \Cref{sec: future work}.

\subsection{Operational Semantics}

\begin{figure}
    \textbf{Reduction rules:} $(M,h)\ra(M',h')$
    \begin{multicols}{2}
    
    \begin{prooftree}
        \AxiomC{}
        \LeftLabel{($\beta$)}
        \UnaryInfC{$((\la x.M)V,h)\ra (M[V/x],h)$}
    \end{prooftree}

    \begin{prooftree}
        \AxiomC{}
        \LeftLabel{(Arith)}
        \UnaryInfC{$(\nb{n_1}\oplus\nb{n_2},h)\ra (\nb{n_1\oplus n_2},h)$}
    \end{prooftree}

    \end{multicols}

    \begin{prooftree}
        \AxiomC{$n\neq 0$}
        \LeftLabel{(If true)}
        \UnaryInfC{$(\iif{\nb{n}}{M_1}{M_2},h)\ra (M_1,h)$}
    \end{prooftree}

    \begin{prooftree}
        \AxiomC{}
        \LeftLabel{(If false)}
        \UnaryInfC{$(\iif{\nb{0}}{M_1}{M_2},h)\ra (M_2,h)$}
    \end{prooftree}

    \begin{multicols}{2}

    \begin{prooftree}
        \AxiomC{$\ell\notin\dom{h}$}
        \LeftLabel{(Ref)}
        \UnaryInfC{$(\rf V,h)\ra (\ell,h\uplus[\ell\mapsto V])$}
    \end{prooftree}

    \begin{prooftree}
        \AxiomC{$\ell\in\dom{h}$}
        \LeftLabel{(Deref)}
        \UnaryInfC{$(!\ell,h)\ra (h(\ell),h)$}
    \end{prooftree}

    \end{multicols}

    \begin{prooftree}
        \AxiomC{$\ell\in\dom{h}$}
        \LeftLabel{(Assign)}
        \UnaryInfC{$(\ell:=V,h)\ra ((),h[\ell\mapsto V])$}
    \end{prooftree}

    \begin{prooftree}
        \AxiomC{}
        \LeftLabel{(Rec)}
        \UnaryInfC{$((\rec{y}{x}{M})V,h)\ra (M[V/x,\rec{y}{x}{M}/y],h)$}
    \end{prooftree}

    \begin{prooftree}
        \AxiomC{}
        \LeftLabel{(Letbox)}
        \UnaryInfC{$(\letbox{u}{\bx M_1}{M_2},h)\ra (M_2[M_1/u],h)$}
    \end{prooftree}

    \begin{prooftree}
        \AxiomC{$(M_1,h_1)\ra(M_2,h_2)$}
        \LeftLabel{(Ktx)}
        \UnaryInfC{$(K[M_1],h_1)\ra(K[M_2],h_2)$}
    \end{prooftree}
    \begin{flushleft}
        We write $\ra^*$ for the reflexive transitive closure of $\ra$.
        We write $(M,h)\da$ if there exists value $V$ and heap $h'$ such that $(M,h)\ra^*(V,h')$.
    \end{flushleft}

    \medskip

    \textbf{Applying local substitutions: } $M[\delta]$
    \vspace{-2mm}
    \begin{multicols}{2}
        \noindent
        \[
        \begin{array}{rcl}
        ()[\delta] &=& ()\\
        (\nb{n})[\delta] &=& \nb{n}\\
        x[\delta] &=& \delta(x)
        \end{array}
        \]
        \[
        \begin{array}{rcl}
        u^\delta[\delta'] &=& u^{\delta[\delta']}\\
        \ell[\delta] &=& \ell\\
        (\la x.M)[\delta] &=& \la x.M[\delta,x/x]
        \end{array}
        \]
    \end{multicols}
    \vspace{-8mm}
    \[
    \begin{array}{rcl}
        (\rec{y}{x}{M})[\delta] &=& \rec{y}{x}{M[\delta,y/y,x/x]} \\
        (\bx M)[\delta] &=& \bx M \\
        (\letbox{u}{M_1}{M_2})[\delta] &=& \letbox{u}{M_1[\delta]}{M_2[\delta]}
    \end{array}
    \]
    \begin{flushleft}
        where $\delta[\delta'](x)=\delta(x)[\delta']$, and $[\delta]$ commutes with all other term constructors.
    \end{flushleft}

    \medskip
    \textbf{Applying global substitutions: } $M[\sigma]$
    \vspace{-2mm}
    \begin{multicols}{2}
        \noindent
        \[
        \begin{array}{rcl}
        ()[\sigma] &=& ()\\
        (\nb{n})[\sigma] &=& \nb{n} \\
        x[\sigma] &=& x
        \end{array}
        \]
        \[
        \begin{array}{rcl}
        u^\delta[\sigma] &=& \sigma(u)[\delta[\sigma]] \\
        \ell[\sigma] &=& \ell \\
        (\la x.M)[\sigma] &=& \la x.M[\sigma]
        \end{array}
        \]
    \end{multicols}
    \vspace{-8mm}
    \[
    \begin{array}{rcl}
        (\rec{y}{x}{M})[\sigma] &=& \rec{y}{x}{M[\sigma]} \\
        (\bx M)[\sigma] &=& \bx (M[\sigma]) \\
        (\letbox{u}{M_1}{M_2})[\sigma] &=& \letbox{u}{M_1[\sigma]}{M_2[\sigma,u/u]}
    \end{array}
    \]
    \begin{flushleft}
        where $\delta[\sigma](x)=\delta(x)[\sigma]$, and $[\sigma]$ commutes with all other term constructors.
    \end{flushleft}
\caption{Operational Semantics}
\label{fig: Operational Semantics}
\end{figure}

The small-step operational semantics of Contextual MetaML is presented in \Cref{fig: Operational Semantics}.
The rule (Letbox) does two things simultaneously.
First, it extracts the content $M_1$ from the code $\bx M_1$, reminiscent of MetaML's Escape and Run.
Second, it substitutes $M_1$ for all the occurrences of the global variable $u$ in $M_2$,
reminiscent of call-by-name $\beta$-reductions.

Applying the substitutions created by the $(\beta)$, (Rec), and (Letbox) rules needs extra care.
$\la x.M$ and $\rec{y}{x}{M}$ bind the mentioned local variables,
and $\bx M$ binds all local variables.
Similarly, $\letbox{u}{M_1}{M_2}$ binds the global variable $u$ in $M_2$.
When a substitution goes under these binders,
the bound variables are removed from the substitution, and all other variables are preserved.

When we write a term $M[\delta]$ or $M[\sigma]$,
the mentioned substitution is to be viewed as an implicit substitution.
It is performed immediately,
and is gone in the resulting term.
When we write a term $u^\delta$,
the mentioned substitution is to be viewed as an explicit substitution,
and the term stays as it is.
When applying an implicit substitution $\sigma$ or $\delta'$ to the term $u^\delta$,
the two substitutions are composed,
yielding a new explicit substitution $\sigma[\delta]$ or $\delta'[\delta]$.

\begin{remark}
    One might have the intuition that a box can be mimicked by a function taking a unit argument,
    and unboxing can be mimicked by applying such a function to the unit value.
    Although this intuition can be helpful in establishing \Cref{lem: CIU box} below,
    there does not exist a translation from Contextual MetaML to ML.
    Consider the term $\la x.\letbox{u}{x}{\bx(u^{y/y}+y)}$.
    It takes as input some code containing free variable $y$,
    and produces the code with ``$+y$'' appended to it.
    For example, it transforms $\bx (y*y)$ to $\bx (y*y+y)$.
    There is no corresponding ML function that can transform the function $\la z.y*y$ to $\la z.y*y+y$.
    Similarly, there is no ML counterpart to $\la x. \letbox{u}{x}{\bx(u + 1)}$
    that takes a closed function and appends “+1” to it.
\end{remark}

\subsection{Type Safety}
\begin{theorem}[Preservation]\label{thm: preservation}
    If $\spgv{i} M:T$, $\spgv{} h:\heap$, and $(M,h)\ra(M',h')$,
    then there exists $\Sigma'\supseteq\Sigma$ such that
    $\Sigma';\Psi;\Gamma\vd_i M':T$, and $\Sigma';\Psi;\Gamma\vd h':\heap$.
\end{theorem}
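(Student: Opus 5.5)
We prove the statement by induction on the derivation of $(M,h)\ra(M',h')$. The (Ktx) rule is handled by an auxiliary lemma about evaluation contexts. The base rules are handled by inversion on the typing derivation together with substitution lemmas for local and global variables. The proof therefore has three stages: structural lemmas, substitution lemmas, then the case analysis.

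\textbf{Structural lemmas.} We first establish weakening in each of the three contexts: extending $\Sigma$ to $\Sigma'\supseteq\Sigma$, extending $\Psi$, and extending $\Gamma$ preserves $\spgv{i} M:T$. We also need layer monotonicity: if $\spgv{0} M:T$ then $\spgv{1} M:T$, since every well-formed layer-0 type is also a layer-1 type. All of these go by routine induction on typing derivations. Alongside them we prove a canonical-forms style inversion lemma. It says, for example, that a well-typed $\bx M_1$ of type $\Box(\Gamma'\vd T)$ has $\Sigma;\Psi;\Gamma'\vd_0 M_1:T$, and likewise for $\la$-abstractions, $\mathbf{rec}$, and locations.

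\textbf{Local substitution lemma.} If $\Sigma;\Psi;\Gamma,\Gamma'\vd_i M:T$ and $\spgv{i}\delta:\Gamma'$ (extended with identity on $\Gamma$), then $\spgv{i} M[\delta]:T$. We prove it by induction on $M$. Two cases need care. For $u^{\delta_0}$ we must show that the composition $\delta_0[\delta]$ is well typed, which follows pointwise from the induction hypothesis applied to the values in $\delta_0$. For $\bx M$ the substitution is discarded, and this is justified because (Tm-Box) types $M$ in a context $\Gamma'$ independent of the ambient $\Gamma$.

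\textbf{Global substitution lemma.} If $\Sigma;\Psi,\Psi';\Gamma\vd_i M:T$ and $\Sigma;\Psi\vd\sigma:\Psi'$, then $\Sigma;\Psi;\Gamma\vd_i M[\sigma]:T$. I expect this lemma to be the main obstacle. The key case is $u^\delta$, which becomes $\sigma(u)[\delta[\sigma]]$. Here $\sigma(u)$ is typed at layer $0$ under its own context $\Gamma_u$. By the induction hypothesis, $\delta[\sigma]$ is a well-typed local substitution from $\Gamma$ to $\Gamma_u$. We then apply the local substitution lemma to $\sigma(u)$, followed by layer monotonicity to lift the result to layer $i$.

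Two points make this case delicate. First, $\delta[\sigma]$ must still consist of values, which holds because applying $\sigma$ to a value yields a value. Second, the induction is not directly on $M$, since $\sigma(u)$ is not a subterm. This is nonetheless harmless: the local substitution lemma is already proved, so we merely invoke it, and the induction stays structural on $M$. For $\letbox{u}{M_1}{M_2}$ we extend $\sigma$ with $u/u$, using weakening and Barendregt-style renaming of bound variables.

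\textbf{Case analysis.} With these lemmas in hand, the reduction cases go as follows.
\begin{itemize}
\item For ($\beta$) and (Rec) we invert (Tm-App) and use the local substitution lemma. In the (Rec) case the recursive value itself is substituted for $y$.
\item For (Letbox) we invert (Tm-LetBox) and the box inversion, then apply the global substitution lemma with $\sigma=M_1/u$.
\item For (Ref) we take $\Sigma'=\Sigma,\ell:\rf T$. Freshness of $\ell$ holds because $\dom{h}=\dom{\Sigma}$, and we weaken all heap entries to $\Sigma'$.
\item For (Deref) and (Assign) we use the heap typing directly. These cases, together with arithmetic and conditionals, take $\Sigma'=\Sigma$.
\end{itemize}

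\textbf{Evaluation contexts.} For (Ktx) we prove a decomposition lemma: if $K[M_1]$ has type $T$, then $M_1$ has some type $T_1$ in the same $\Sigma;\Psi;\Gamma$ at the same layer, and any $M_2$ of type $T_1$ under $\Sigma'\supseteq\Sigma$ gives $\Sigma';\Psi;\Gamma\vd_i K[M_2]:T$. The layer and contexts do not change because the hole in $K$ never occurs under a binder or under $\bx$. Note that $\letbox{u}{V}{K}$ is not an issue here, since $V$ there is not a box value and that step therefore never fires. This property is what allows the induction hypothesis for the inner reduction to lift to the whole term, with weakening in $\Sigma$ covering the other subterms of $K$.
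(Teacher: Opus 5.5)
Your overall route is the paper's: split the step into an evaluation context and a base redex, then handle ($\beta$)/(Rec) by local substitution, (Letbox) by global substitution, (Ref) by a fresh location, and use lifting in the $u^{\delta}$ case. The genuine problem is the production $\letbox{u}{V}{K}$.

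Your reason for ignoring it is false. In a closed term a box-typed $V$ \emph{is} $\bx N$ by canonical forms, and in an open term it may be a local variable of box type. In either case (Ktx) fires through this context with no side condition, e.g.\ $(\letbox{u}{\bx\nb{1}}{(\nb{1}+\nb{1})},h)\ra(\letbox{u}{\bx\nb{1}}{\nb{2}},h)$. Here the hole lies under the binder $u$, so your decomposition lemma (``same $\Psi$'', ``never under a binder'') is false for this context.

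The term half can be patched. Type the hole under $\Psi,u:(\Gamma'\vd T')$, as the paper's judgment $K:(\Psi_1;\Gamma_1;i_1;T_1)\ra(\Psi;\Gamma;i;T)$ permits, and apply your induction hypothesis there. The heap half cannot be patched. Consider $\cdot;\cdot;\cdot\vd_1\letbox{u}{\bx\nb{1}}{\rf(\la z.u^{\cdot})}:\rf(\Unit\ra\Int)$ with the empty heap. By (Ktx) and (Ref) it steps to $(\letbox{u}{\bx\nb{1}}{\ell},\,[\ell\mapsto\la z.u^{\cdot}])$, and no $\Sigma'$ types this heap under $\Psi=\cdot$, since $u$ is free in it. Plugged into $(\la r.(!r)\,())\,\bullet$, the resulting closed program can even reduce to the stuck term $u^{\cdot}$.

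So this case cannot be argued away. Preservation holds only once $\letbox{u}{V}{K}$ is dropped from the evaluation contexts. That is presumably the intended semantics: the appendix's CIU Letbox argument assumes a letbox with a value scrutinee fires (Letbox) immediately. The paper's own proof glosses over the same point, because it only types $h'$ under the hole's contexts $\Psi_1;\Gamma_1$, and that typing does not transfer back to $\Psi$. Without that production the hole is never under a binder or a box, your decomposition lemma holds exactly as stated, and your proof works after two small fixes.
\begin{itemize}
\item In the $u^{\delta}$ case, lift $\sigma(u)$ to layer $i$ \emph{before} substituting, not after. The substitution $\delta[\sigma]$ is typed under the ambient $\Gamma$, which at layer $1$ may contain box types and then admits no layer-$0$ typing, while your local substitution lemma requires equal layers.
\item That same case needs the local substitution lemma in the paper's general form: $\Gamma'\vd M$ and $\Gamma\vd\delta:\Gamma'$ give $\Gamma\vd M[\delta]$. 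Your ``$\Gamma,\Gamma'$ plus identity'' form does not apply without a renaming step, because $\Gamma_u$ and $\Gamma$ typically share names (as in $u^{x/x}$).
\end{itemize}
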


\begin{lemma}[Canonical forms]\label{lem: canonical forms}
    If $\sccv{i} V:T$, then one of the following cases holds:
    \begin{itemize}
        \item $T=\Unit$ and $V=()$.
        \item $T=\Int$ and $V=\nb{n}$ for some $n\in\N$.
        \item $T=\rf T'$ and $V=\ell$ for some $\ell\in\mathit{Loc}$.
        \item $T=T_1\ra T_2$ and $V=\la x.M$ or $V=\rec{y}{x}{M}$.
        \item $T=\Box(\Gamma\vd T')$ and $V=\bx M$.
    \end{itemize}
\end{lemma}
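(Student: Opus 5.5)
The plan is to argue by case analysis on the syntactic form of the value $V$, following the grammar of values in \Cref{fig: Syntax}. In each case we invert the typing derivation of $\sccv{i} V:T$. The key observation is that the typing rules of \Cref{fig: Type System} are syntax-directed: each term constructor is the conclusion of exactly one rule. Hence the last rule of the derivation is determined by the shape of $V$, and it fixes the outermost constructor of $T$.

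The cases go in order.
\begin{itemize}
\item If $V=()$ or $V=\nb{n}$, the last rule must be (Tm-Unit) or (Tm-Int), so $T=\Unit$ or $T=\Int$ respectively.
\item If $V=\ell$, the last rule is (Tm-Loc). Its premise $\ell:\rf T'\in\Sigma$ gives $T=\rf T'$.
\item If $V=x$, the last rule would be (Tm-LVar), which requires $x:T\in\Gamma$. Since $\Gamma=\cdot$ here, this case is impossible.
\item If $V=\la x.M$ or $V=\rec{y}{x}{M}$, the last rule is (Tm-Abs) or (Tm-Rec), and both conclude with an arrow type $T_1\ra T_2$.
\item If $V=\bx M$, the last rule is (Tm-Box), whose conclusion has type $\Box(\Gamma'\vd T')$. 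This also forces $i=1$.
\end{itemize}
Reading these cases in the other direction gives the disjunction stated in the lemma. For each value form, exactly one clause applies, and it pairs the type with the required shape of $V$.

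There is no real obstacle. The only points needing care are the following two.
\begin{itemize}
\item We must confirm that no typing rule, such as a subsumption, weakening or layer-raising rule, can end a derivation without changing the term. Inspecting \Cref{fig: Type System} shows that there is none: layer and context weakening appear only in context typing (Ctx-Hole), not in term typing. So inversion is sound.
\item We must check that the empty local context rules out variables. A global variable $u^\delta$ is not a value, so the empty $\Psi$ is not even needed for that case.
\end{itemize}
If weakening were instead admissible rather than primitive, the argument would be unchanged, since it is carried out over derivations in the given rule set.
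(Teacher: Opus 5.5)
Your proof is correct and is the routine argument the paper relies on; the paper states the lemma without proof. The argument is a case analysis on the value grammar with inversion of the syntax-directed rules of \Cref{fig: Type System}: the empty local context rules out $x$, and no subsumption or layer-raising rule interferes, since \Cref{lem: lifting} is only admissible. Note that the lemma concerns the name-free language of \Cref{sec: language}; once (Tm-Name) is added in \Cref{sec: trace model}, function and box names become further closed values of arrow and box type, which is why the LTS handles $fV$ and $\#b^\delta$ by separate rules.
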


\begin{theorem}[Progress]\label{thm: progress}
    If $\sccv{i} M:T$ and $\sccv{} h:\heap$,
    then either $M$ is a value, or there exist $M'$, $h'$ such that
    $(M,h)\ra(M',h')$.
\end{theorem}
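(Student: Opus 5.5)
We argue by induction on the derivation of $\sccv{i} M:T$, using the Canonical forms lemma (\Cref{lem: canonical forms}) to discharge redex cases. Both contexts are empty, so the rules (Tm-LVar) and (Tm-GVar) cannot be the last rule applied. This is where emptiness of $\Psi$ matters: a free $u^\delta$ would be stuck. The value cases (Tm-Unit), (Tm-Int), (Tm-Loc), (Tm-Abs), (Tm-Rec) and (Tm-Box) are immediate, since $M$ is then a value.

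For each elimination or compound form, we evaluate subterms from left to right following the grammar of evaluation contexts $K$. Take (Tm-App), with $M=M_1M_2$. If $M_1$ is not a value, the induction hypothesis gives $(M_1,h)\ra(M_1',h')$, and (Ktx) with $K=\bullet\,M_2$ lifts this to $M$. If $M_1$ is a value but $M_2$ is not, we use $K=V\,\bullet$ in the same way. If both are values, canonical forms gives $M_1=\la x.N$ or $M_1=\rec{y}{x}{N}$, and then ($\beta$) or (Rec) fires.

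The remaining compound forms follow the same pattern:
\begin{itemize}
\item For (Tm-Arith), canonical forms gives two numerals, and (Arith) fires.
\item For (Tm-If), a numeral guard $\nb{n}$ lets (If true) or (If false) fire, depending on whether $n=0$.
\item For (Tm-Ref), $\rf V$ always reduces by (Ref), choosing any fresh $\ell\notin\dom{h}$. Such a location exists because $\dom{h}$ is finite and $\mathit{Loc}$ is infinite.
\item For (Tm-LetBox), if $M_1$ is a value then canonical forms gives $M_1=\bx N$, and (Letbox) fires. We never need to reduce inside $M_2$; the evaluation context $\letbox{u}{V}{K}$ is never used here.
\end{itemize}
In every case the subterm to which we apply the induction hypothesis is typed under the same $\Sigma;\cdot;\cdot$ at the same layer $i$, so the hypothesis applies directly.

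The only cases needing a small extra argument are (Tm-Deref) and (Tm-Assign), where we must check that the location is in the heap. By canonical forms the value has the form $\ell$, and inverting (Tm-Loc) gives $\ell:\rf T\in\Sigma$. The hypothesis $\sccv{} h:\heap$ gives $\dom{\Sigma}=\dom{h}$, so $\ell\in\dom{h}$ and (Deref) or (Assign) applies.

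This heap-domain check is the only step that uses the heap typing, and the only point where any care is needed. Everything else is a routine case split, so we expect no real obstacle.
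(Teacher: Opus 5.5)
Your proof is correct and follows essentially the same route as the paper. The paper inducts on the structure of $M$ while you induct on the typing derivation, which coincide here because the rules are syntax-directed; otherwise both arguments rule out $x$ and $u^\delta$ by emptiness of the contexts, lift steps of non-value subterms via (Ktx), and fire the redex using \Cref{lem: canonical forms}, with (Tm-LetBox) handled exactly as in the paper. Your explicit checks that $\ell\in\dom{h}$ for (Deref) and (Assign), via $\dom{\Sigma}=\dom{h}$, and that a fresh location exists for (Ref), fill in cases the paper leaves as routine.
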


In the rest of this paper, we shall consider 
the following notion of equivalence for Contextual MetaML.

\begin{definition}[Contextual preorder]\label{def: contextual preorder}
    Given two terms $\spgv{i} M_1, M_2:T$,
    we define $M_1\lsm M_2$ to hold if,
    for any context $\spccv{} C: (\Psi;\Gamma;i;T)\ra(\cdot;\cdot;i';T')$
    and heap $\spccv{} h:\heap$ such that $\Sigma'\supseteq\Sigma$,
    if $(C[M_1],h)\Da$ then $(C[M_2],h)\Da$.
\end{definition}

We write $M_1\approx M_2$ if $M_1\lsm M_2$ and $M_2\lsm M_1$.

\section{Closed Instances of Use}\label{sec: ciu}

Closed Instances of Use approximation \cite{talcottReasoningProgramsEffects1998}
is widely used in reasoning about program equivalence in higher-order call-by-value languages.
It says that two terms are contextually equivalent
if and only if they are equivalent under all suitable evaluation contexts,
heaps, and closing substitutions.

Contextual MetaML has two kinds of variables.
Local variables are similar to normal variables in call-by-value languages,
so for every free variable declared as $x:T$ in the local variable context,
the closing substitution shall substitute a closed value of type $T$.

Global variables, on the other hand, are declared as $u:\Gamma\vd T$ in the global variable context,
which says that $u$ represents a term of type $T$ under local variable context $\Gamma$.
Also, when a substitution for $u$ is created by the (Letbox) rule,
what gets substituted is exactly a term of type $T$ under local variable context $\Gamma$.
Therefore, for every such global variable $u$,
the closing substitution shall substitute a term of type $T$ under local variable context $\Gamma$.

The distinction between local and global variables can now be summarised as:
``local variables represent closed values, global variables represent open terms.''
Note that the word ``open'' here means that the term might contain free variables,
but does not have to.

\begin{definition}[Closed Instances of Use]\label{def: closed instances of use}
    Given two terms $\spgv{i} M_1, M_2:T$,
    we define $M_1\lsmc M_2$ to hold if,
    for any global substitution $\Sigma';\cdot\vd \sigma:\Psi$,
    local substitution $\spccv{i} \delta:\Gamma$,
    evaluation context $\spccv{} K: (\cdot;\cdot;i;T)\ra(\cdot;\cdot;i';T')$,
    and heap $\spccv{} h:\heap$
    such that $\Sigma'\supseteq\Sigma$,
    if $(K[M_1[\sigma][\delta]],h)\Da$ then $(K[M_2[\sigma][\delta]],h)\Da$.
\end{definition}

Note that although we write the global substitution as $\Sigma';\cdot\vd \sigma:\Psi$,
we are not saying that $\sigma(u)$ must be a closed term.
By definition, this notation means $\Sigma';\cdot;\Gamma\vd \sigma(u):T$
for all $u:\Gamma\vd T$ in $\Psi$.
In other words, $\sigma(u)$ can contain free local variables as declared in $\Psi$,
but should not introduce any new free global variables.

\begin{lemma}[CIU Basics]\label{lem: CIU basics}
    Given $\spgv{i} M_1,M_2:T$, if $M_1\lsmc M_2$,
    then $C[M_1]\lsmc C[M_2]$ whenever the terms are typable and $C$ is one of the following forms:
    \ $\bullet M$, \ $\bullet \oplus M$, \ $\iif{\bullet}{M}{M'}$, \
    $\rf \bullet$, \ $!\bullet$, \ $\bullet := M$, \
    $\letbox{u}{\bullet}{M}$, \ $M\bullet$, \ $M\oplus\bullet$, \
    $\iif{M}{\bullet}{M'}$, \ $\iif{M}{M'}{\bullet}$, \
    $\la x.\bullet$, \ $\rec{y}{x}{\bullet}$, \ $\letbox{u}{M}{\bullet}$.
\end{lemma}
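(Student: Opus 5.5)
The plan is a case analysis on $C$. The fourteen forms fall into three groups, handled by three different techniques.

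\emph{Group 1: the hole is in evaluation position.} These are $\bullet M$, $\bullet\oplus M$, $\iif{\bullet}{M}{M'}$, $\rf\bullet$, $!\bullet$, $\bullet:=M$ and $\letbox{u}{\bullet}{M}$. Fix closing data $\sigma,\delta,K,h$ as in \Cref{def: closed instances of use}. Since substitution commutes with these constructors (for Letbox, $u$ is bound only in the second branch), we have $C[M_j][\sigma][\delta]=C'[M_j[\sigma][\delta]]$, where $C'$ is $C$ with $\sigma,\delta$ applied to the side terms. Then $K[C'[\bullet]]$ is an evaluation context, because each of these shapes is a production of $K$ in \Cref{fig: Syntax}. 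It is typed with the same $\Sigma'$, and a single use of the hypothesis $M_1\lsmc M_2$ gives the result.

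\emph{Group 2: the hole is in argument position.} These are $M\bullet$, $M\oplus\bullet$, and the two branch positions of the conditional. For $M\bullet$ and $M\oplus\bullet$:
\begin{itemize}
\item If $(K[(M[\sigma][\delta])\,(M_1[\sigma][\delta])],h)\Da$, then by determinacy of reduction (up to choice of fresh locations) this run first takes $(M[\sigma][\delta],h)$ to some value $V$ and heap $h_1$, and then evaluates $(K[V\,M_1[\sigma][\delta]],h_1)$ to termination.
\item $K[V\,\bullet]$ is an evaluation context typed under some $\Sigma''\supseteq\Sigma'$ by \Cref{thm: preservation}. Applying the hypothesis with $K[V\bullet]$ and $h_1$ gives termination of $(K[V\,M_2[\sigma][\delta]],h_1)$.
\item Prefixing the same reduction of $M$ gives termination of $(K[(M[\sigma][\delta])\,(M_2[\sigma][\delta])],h)$.
\end{itemize}
The hypothesis is used at an extended store context, which the definition allows. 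For the branches, if the guard reduces to $\nb{n}$ and the hole is selected, the same argument applies with context $K$ and the intermediate heap. Otherwise $M_1$ is discarded and both sides behave identically. If $M$ or the guard diverges, there is nothing to show.

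\emph{Group 3: the hole is under a binder.} These are $\la x.\bullet$, $\rec{y}{x}{\bullet}$ and $\letbox{u}{M}{\bullet}$, and this is the main obstacle: the term $\la x.M_1[\sigma][\delta]$ is a value that $K$ may copy, store or call many times. The idea is to push the problem onto the open-term form of the hypothesis. CIU quantifies over all closing substitutions of the extended context, so $M_1\lsmc M_2$ already gives $M_1[\sigma][\delta,V/x]\lsmc M_2[\sigma][\delta,V/x]$ for every value $V$.

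Prove the claim by induction on the length of the terminating reduction of $(K[\la x.M_1'],h)$, where $M_1'=M_1[\sigma][\delta]$. More robustly, prove the following generalisation: for any term $N$ with a free variable $f$, if $(N[\la x.M_1'/f],h)\Da$ then $(N[\la x.M_2'/f],h)\Da$. Here the abstraction may also sit inside heap values, so this is a relation on configurations closed under substitution of $f$. In the induction, each step not involving a call to $f$ is mirrored directly. When a call $(\la x.M_1')V$ reaches evaluation position, replace that single occurrence using the hypothesis at the substitution $[V/x]$ and evaluation context $K'[\bullet]$, which still contains $f$ and the first term elsewhere. To make this rigorous, proceed one occurrence at a time, or use Howe's method.

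If this direct argument gets messy, the alternative is to first prove a lemma stating that CIU implies applicative-style behaviour. A cleaner route, suggested by the Remark on thunks: $\la x.\bullet$ is handled by the standard ML argument (as in Pitts' operationally-based theories) adapted with heap replacement. $\letbox{u}{M}{\bullet}$ is easier: the Letbox rule substitutes a term $N$ for $u$, so after $M[\sigma][\delta]$ reduces to $\bx N$ with heap $h_1$, the goal becomes $K[M_1[\sigma,N/u][\delta]]$ versus the same with $M_2$. This follows from the hypothesis with global substitution $\sigma,N/u$, which is exactly why call-by-name global substitutions were included in \Cref{def: closed instances of use}. $\rec{y}{x}{\bullet}$ is treated like $\la x.\bullet$, substituting the recursive function for $y$ via the generalised induction.
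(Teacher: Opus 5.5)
Your treatment of the evaluation-position forms, the argument and branch positions, and $\letbox{u}{M}{\bullet}$ agrees with the paper's proof. The letbox case works through the call-by-name global substitution $\sigma,N/u$. Your remark that the substituted abstraction can migrate into the heap also refines the paper's strengthened claim, which leaves the heap unsubstituted.

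The gap is in $\la x.\bullet$ and $\rec{y}{x}{\bullet}$, exactly at the call step. As you describe it, you apply $M_1\lsmc M_2$ to the exposed body $M_1'[V/x]$ while the surrounding context and heap still carry $\la x.M_1'$ for $f$. That only tells you that $K'[M_2'[V/x]][\la x.M_1'/f]$ terminates, with no bound on the length of the run. So the induction on reduction length can no longer be used to replace the remaining copies, and there may be arbitrarily many: $K'$, $V$ and the heap can all contain $f$, and later calls copy it further. Your suggested repairs do not close this. Replacing one static occurrence of $\la x.M_1'$ in an arbitrary program is already the full difficulty, because that occurrence can be duplicated during reduction. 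Howe's method or a Pitts-style argument would be a different and much heavier proof that you have not given.

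The missing idea is to do things in the other order. Write $(N,h)[P/f]$ for substitution into both term and heap. After the $\beta$-step, $(K'[M_1'[V/x]],h)[\la x.M_1'/f]$ terminates in $k-1$ steps. So first apply the induction hypothesis to the open configuration $(K'[M_1'[V/x]],h)$, with $f$ free, possibly also in $V$ and $h$. This gives $(K'[M_1'[V/x]],h)[\la x.M_2'/f]\Da$. Now every other copy has been replaced. The evaluation context $K'[\la x.M_2'/f]$, the local substitution $\delta,V[\la x.M_2'/f]/x$ and the heap $h[\la x.M_2'/f]$ are closed and free of $M_1$. A single use of $M_1\lsmc M_2$ then yields termination of the one-step reduct of $(N,h)[\la x.M_2'/f]$. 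No step count is needed after this last use, so the induction closes.

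For $\rec{y}{x}{\bullet}$, write the unfolded body as $M_1'[V/x,f/y]$, so that the recursive copy is itself an occurrence of $f$ covered by the induction hypothesis. Then apply the hypothesis with $y$ mapped to $\rec{y}{x}{M_2'}$. This is how the paper's CIU Abstraction and CIU Recursion lemmas proceed.
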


\begin{lemma}[CIU Box]\label{lem: CIU box}
    Given $\spgv{0} M_1,M_2:T$, if $M_1\lsmc M_2$,
    then $\bx M_1\lsmc \bx M_2$.
\end{lemma}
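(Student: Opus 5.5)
The idea is to show that a boxed term is only ever inspected through (Letbox), which substitutes its contents for a global variable. So it suffices to prove that $M_1[\sigma]$ and $M_2[\sigma]$ can be swapped for a fresh global variable $w$ anywhere in a configuration. Each time an instance $w^{\delta'}$ comes into evaluation position, we discharge it with the hypothesis $M_1\lsmc M_2$.

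\emph{Reduction to a substitution statement.} Fix $\sigma$, $\delta$, $K$ and $h$ as in \Cref{def: closed instances of use}. Since $\bx$ binds all local variables, $(\bx M_j)[\sigma][\delta]=\bx (M_j[\sigma])$. Write $P_j=M_j[\sigma]$, so that $\Sigma';\cdot;\Gamma\vd_0 P_j:T$. Take a fresh global variable $w:(\Gamma\vd T)$ and put $\iota$ for the identity substitution on $\Gamma$. Then $K[\bx P_j]=(K[\bx w^{\iota}])[P_j/w]$, since $K$ has no free $w$ and $w^\iota[P_j/w]=P_j$. It therefore suffices to prove the following claim.

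\emph{Claim.} For every configuration $(N,h_0)$ with $\Sigma'';w:(\Gamma\vd T);\cdot\vd_1 N:T''$ and $\Sigma'';w:(\Gamma\vd T);\cdot\vd h_0:\heap$, where $\Sigma''\supseteq\Sigma'$, if $(N[P_1/w],h_0[P_1/w])\Da$ then $(N[P_2/w],h_0[P_2/w])\Da$. Here $w$ may occur in the heap as well, because the context may store $\bx P_1$ or values built from it.

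\emph{Proof of the claim.} I will argue by lexicographic induction on the pair $(n,\epsilon)$. Here $n$ is the number of steps for $(N[P_1/w],h_0[P_1/w])$ to reach a value, and $\epsilon\in\{0,1\}$ is $1$ iff $N$ has the form $K'[w^{\delta'}]$. By a decomposition analogous to \Cref{thm: progress}, $N$ falls into one of three cases.
\begin{itemize}
\item $N$ is a value. Then so is $N[P_2/w]$, and we are done.
\item $N=K'[R]$ with $R$ a redex not of the form $w^{\delta'}$. Every reduction rule commutes with $[P_j/w]$, because substituting for $w$ preserves values and commutes with the local and global substitutions created by ($\beta$), (Rec) and (Letbox). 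In the (Letbox) case we use the composition law $M[\sigma_1][\sigma_2]=M[\sigma_1[\sigma_2],\sigma_2]$ for global substitutions, which I expect to be a routine lemma. Hence $(N,h_0)\ra(N_1,h_1)$ uniformly in $j$, the step count drops to $n-1$, and the induction hypothesis applies; preservation is given by \Cref{thm: preservation}.
\item $N=K'[w^{\delta'}]$ with $\delta'$ a substitution of values. Set $N'=K'[P_1[\delta'[P_1/w]]]$. Then $N'[P_1/w]=N[P_1/w]$, with the same $n$, and its evaluation-position subterm is $w$-free, so $\epsilon$ drops to $0$. By induction, $(N'[P_2/w],h_0[P_2/w])\Da$, that is $(K'[P_2/w][M_1[\sigma][\delta'[P_2/w]]],h_0[P_2/w])\Da$. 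Now apply $M_1\lsmc M_2$ with global substitution $\sigma$, local substitution $\delta'[P_2/w]$ (closed values), evaluation context $K'[P_2/w]$ and heap $h_0[P_2/w]$. This gives $(K'[P_2/w][P_2[\delta'[P_2/w]]],h_0[P_2/w])\Da$, which is exactly $(N[P_2/w],h_0[P_2/w])\Da$.
\end{itemize}

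\emph{Main obstacle.} The delicate point is the measure in the third case. Replacing the head occurrence of $w$ does not shorten the reduction, so the induction cannot be on $n$ alone, and the component $\epsilon$ is what makes it well-founded. A second concern is the typing bookkeeping that licenses invoking the hypothesis. We must check that $K'[P_2/w]$ and $h_0[P_2/w]$ are closed and well typed over an extension of $\Sigma$, which uses that $\Gamma$ and $T$ live at layer $0$. We must also check that $\delta'[P_2/w]$ consists of closed values of the types in $\Gamma$.
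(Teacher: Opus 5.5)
Your route differs from the paper's, and in some respects it is cleaner. However, your third case has two errors that must be fixed before the argument goes through.

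\textbf{Comparison with the paper.} The paper keeps $\bx M_j^\sigma$ as the value of a local variable $z$ and inducts on the length of the reduction. In its one interesting case, a (Letbox) on $z$, it must swap copies $M_1^\sigma[\delta_k]$ that are not in evaluation position. It does this one copy at a time:
\begin{itemize}
\item it turns each copy into a thunk $(\la y.M_1^\sigma[\delta_k])\,()$ via its $\beta$-preservation lemma;
\item it abstracts the thunk out as an argument;
\item it invokes CIU for $\la y.M_1\lsmc\la y.M_2$.
\end{itemize}
You instead name the box contents by a fresh global variable $w$ and discharge each copy directly from $M_1\lsmc M_2$ when it reaches evaluation position. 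This has three advantages:
\begin{itemize}
\item you need no thunk encoding;
\item you apply the hypothesis only at closed $\delta'$, whereas the paper treats its $\delta_k$ as closed even when $u$ sits under a binder of $M''$;
\item you thread $w$ through the heap, which the paper's strengthened claim, with $h$ left unsubstituted, does not.
\end{itemize}
The cost is a few routine commutation lemmas, such as $M[\delta][\sigma]=M[\sigma][\delta[\sigma]]$ and your composition law.

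\textbf{First error: the definition of $N'$.} Your $N'=K'[P_1[\delta'[P_1/w]]]$ has already replaced the copies of $w$ inside $\delta'$ by $P_1$. Hence $N'[P_2/w]=K'[P_2/w][P_1[\delta'[P_1/w]]]$, not $K'[P_2/w][M_1[\sigma][\delta'[P_2/w]]]$. The hypothesis would then give you $P_2[\delta'[P_1/w]]$, which in general differs from the required $P_2[\delta'[P_2/w]]$. The fix is to keep $w$ in the substitution and set $N'=K'[P_1[\delta']]$. Since $P_1$ is $w$-free, $N'[P_j/w]=K'[P_j/w][P_1[\delta'[P_j/w]]]$. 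This gives both $N'[P_1/w]=N[P_1/w]$ and the identity you then use.

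\textbf{Second error: the measure.} $\epsilon$ need not drop. If $P_1[\delta']$ is a value, evaluation moves back into $K'$, which may hold another $w$ in evaluation position. For instance, take $\Gamma=\cdot$, $T=\Int$, $P_1=\nb{3}$ and $N=w^{\cdot}+w^{\cdot}$. Then $N'=\nb{3}+w^{\cdot}$ is again of the form $K''[w^{\cdot}]$ with the same $n$, so the induction hypothesis is not available. Replace $\epsilon$ by the number of occurrences of $w$ in $N$ that are not under a $\la$, $\mathbf{rec}$ or $\mathbf{box}$. Your second case still decreases $n$. In the third case this count drops by exactly one: $K'$ contributes equally to $N$ and $N'$, and every $w$ in $P_1[\delta']$ comes from a value $\delta'(x)$ and so lies under that value's binder.

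\textbf{The typing check you flagged.} Your check that $\delta'[P_2/w]$ is a suitable layer-$0$ substitution does not literally go through. A value in $\delta'$ has a layer-$0$ type but may be typable only at layer $1$, for example a function of type $\Int\ra\Int$ whose body uses $\mathbf{letbox}$. CIU at layer $0$ quantifies only over layer-$0$ substitutions. This defect is inherited from the paper's definition rather than from your argument: the paper's CIU Abstraction proof makes the same move with the argument $V$ supplied by the context.
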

\begin{proof_sketch}
The ``elimination rule'' for $\bx M_i$, i.e.,
$(\letbox{u}{\bx M_i}{M},h) \ra (M[M_i/u],h)$,
ends up in a term where $M_i$ does not appear in an evaluation context,
so we cannot use the assumption that $M_1\lsmc M_2$ directly.
To solve this problem,
we utilise the (imperfect) intuition that a function taking unit argument mimics a box,
and applying such function to unit mimics unboxing.
For each occurrence of $M_i$ in the resulting term,
we can convert the term to an equivalent form,
where $\la y.M_i$ appears in an evaluation context.
Then we can use \Cref{lem: CIU basics} to establish the relation.
\end{proof_sketch}

\begin{theorem}[CIU]\label{thm: CIU}
    Given two terms $\spgv{i} M_1, M_2:T$, we have $M_1\lsm M_2$ if and only if $M_1\lsmc M_2$.
\end{theorem}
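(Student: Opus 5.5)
The two directions are handled separately, and the harder one (CIU implies contextual) follows the standard Howe-free argument via compatibility of $\lsmc$ under all term constructors.

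\emph{Contextual implies CIU.} Assume $M_1\lsm M_2$ and fix $\sigma$, $\delta$, $K$, $h$ as in \Cref{def: closed instances of use}. We build a single context realising the closing substitutions. Write $\Gamma=x_1{:}T_1,\dots,x_n{:}T_n$ and $\Psi=u_1{:}(\Gamma_1\vd S_1),\dots,u_m{:}(\Gamma_m\vd S_m)$, and take
\[C \;=\; K[\letbox{u_1}{\bx \sigma(u_1)}{\cdots\letbox{u_m}{\bx\sigma(u_m)}{(\la x_1\ldots x_n.\bullet)\,\delta(x_1)\cdots\delta(x_n)}}].\]
This is well typed at layer $1$ because each $\sigma(u_j)$ is typable under $\Sigma';\cdot;\Gamma_j$ at layer $0$, so (Tm-Box) applies. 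The case $i=0$ needs no special care, because the hole rule allows $i'\le i$. By the (Letbox) and $(\beta)$ rules, together with (Ktx), we get $(C[M],h)\ra^*(K[M[\sigma][\delta]],h)$ deterministically for any $M$. The order of substitution has to be checked against the paper's conventions. $\sigma(u_j)$ may mention the $x$'s of its own $\Gamma_j$ but no global variables, and $\delta(x)$ is closed. So substituting the global variables first (Letbox outermost), then the local variables via $\beta$, produces exactly $M[\sigma][\delta]$. One subtlety is that the $\delta(x_k)$ must not be captured by the $\la$'s, and they are closed, so they are not. Since reduction is deterministic up to choice of fresh locations, termination of $C[M_i]$ is equivalent to termination of $K[M_i[\sigma][\delta]]$. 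The hypothesis $M_1\lsm M_2$ then finishes this direction.

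\emph{CIU implies contextual.} First, $\lsmc$ restricted to closed terms (empty $\Psi,\Gamma$), taken with $\sigma,\delta$ empty, says: for every evaluation context $K$ and heap, termination transfers. Hence if $C[M_1]\lsmc C[M_2]$ for the closing context $C$, then taking $K=\bullet$ gives $(C[M_1],h)\Da\Rightarrow(C[M_2],h)\Da$. It therefore suffices to show that $\lsmc$ is preserved by every context $C$, by induction on $C$. Each single-layer context former is covered by \Cref{lem: CIU basics}, and the Box former by \Cref{lem: CIU box}. The hole case requires weakening: if $M_1\lsmc M_2$ under $(\Psi';\Gamma';i')$, then also under larger $(\Psi;\Gamma;i)$. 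This holds because any closing $\sigma,\delta$ for the larger context restrict to closing substitutions for the smaller one with the same result on $M_j$. It also holds because a layer-$0$ term is also typable at layer $1$ and evaluation contexts are unchanged.

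\emph{Main obstacle.} The substantive difficulty lies in the compatibility lemmas, especially \Cref{lem: CIU box} and the binder cases $\la x.\bullet$ and $\letbox{u}{M}{\bullet}$ in \Cref{lem: CIU basics}. Here the term escapes evaluation position and may be duplicated after $\beta$ or (Letbox). Since these are given, what remains in this proof is careful bookkeeping. The step I would check most carefully is the interaction of $[\sigma]$ and $[\delta]$ in the first direction: specifically, that substituting $\sigma(u)[\delta'[\sigma]]$ inside $u^{\delta'}$ and then applying $\delta$ commutes as claimed. I would prove a small substitution lemma $M[\sigma][\delta]=M[\delta][\sigma]$ for closed $\delta$ and $\sigma$ free of global variables, by induction on $M$, using that $(\bx N)[\delta]=\bx N$.
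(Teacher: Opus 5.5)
Your proposal is essentially the paper's proof. The forward direction builds a closing context; you reuse the heap already quantified in \Cref{def: contextual preorder} instead of having $C$ build $h$ from the empty heap, which is a harmless simplification. The converse uses precongruence of $\lsmc$ via \Cref{lem: CIU basics,lem: CIU box}, exactly as the paper does.

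The one step you pass over too quickly is the layer part of your hole-case weakening. A layer-$1$ closing $\delta$ for a layer-$0$ local context need not consist of layer-$0$ values (e.g.\ $\la y.\letbox{v}{\bx 1}{v}:\Int\ra\Int$ is typable only at layer $1$), so restriction alone does not yield a layer-$0$ CIU instance. You need a small simulation argument, for instance replacing such a value by $\la y.(!s)\,y$ with a fresh location $s$ holding it. The paper also leaves this step implicit.
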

\section{Trace Model}\label{sec: trace model}

This section presents the trace model of Contextual MetaML. 
To this end, we shall introduce a labelled transition system
that will generate traces for a given term.
Intuitively, they can be viewed as an abstract record of interaction between the term and a context.
In particular, whenever functions or pieces of code are exchanged between the term and the context,
the values will be represented symbolically using names,
so that syntactically different but semantically equivalent values will not be distinguishable.
In a similar spirit, a location in the program heap will be visible to the context only if it has been revealed.
\subsection{Names}

Names are a representation of closed higher-order values.
They expose the type of a value but encapsulate the underlying syntax.
We will rely on an infinite set of names $\Names=\FNames\uplus\BNames$,
partitioned into \emph{function names} ($\FNames$) and
\emph{box names} ($\BNames$).
The two sets will be used to represent function and box (code) values respectively.
In line with \Cref{lem: canonical forms},
we assume that $\FNames$ and $\BNames$
are further partitioned by
$\FNames=\biguplus_{(T_1,T_2,i)\in \FTyp}\FNames_{T_1\ra T_2,i}$ where $\FTyp=\bc{(T_1,T_2,i)\mid\ \vd_i T_1\ra T_2:\typ}$,
and $\BNames=\biguplus_{(\Gamma,T)\in \BTyp}\BNames_{\Box(\Gamma\vd T),1}$ where $\BTyp=\bc{(\Gamma,T)\mid\ \vd_1 \Box(\Gamma\vd T):\typ}$,
to reflect the possible types of values.
We assume that all the constituent sets are countably infinite.
We will use $f$, $b$ and $n$ to range over $\FNames$, $\BNames$ and $\Names$ respectively.
We have typing rule
\begin{prooftree}
    \AxiomC{$n\in \Names_{T,i}$}
    \AxiomC{$i' \ge i$}
    \LeftLabel{(Tm-Name)}
    \BinaryInfC{$\spgv{i'} n:T$}
\end{prooftree}

A name $n$ is a closed value in the operational semantics,
and subsumes any substitutions, i.e.,
$n[\delta]=n[\sigma]=n$.

To handle box names operationally, we add one special reduction rule
\begin{prooftree}
    \AxiomC{}
    \LeftLabel{(Letbox name)}
    \UnaryInfC{$(\letbox{u}{b}{M},h)\ra (M[\#b/u],h)$}
\end{prooftree}
where the symbol $\# b$ denotes the result of removing the outermost box of $b$.
Like $u$, all the occurrences of $\#b$ should be associated with a local substitution
that closes all the free local variables in it.
Accordingly, we have typing rule
\begin{prooftree}
    \AxiomC{$b\in\Names_{\Box(\Gamma'\vd T),1}$}
    \AxiomC{$\spgv{i} \delta:\Gamma'$}
    \LeftLabel{(Tm-Unbox)}
    \BinaryInfC{$\spgv{i} \#b^\delta:T$}
\end{prooftree}
Substitutions on $\#b^\delta$ are defined as
$(\#b^\delta)[\delta'] = \#b^{\delta[\delta']}$
and $(\#b^\delta)[\sigma] = \#b^{\delta[\sigma]}$.
The term $\#b^\delta$ is a redex in the operational semantics.

For a unified treatment of values of box type,
we define $\#(\bx M)=M$.
One can verify that the above statements also hold if we change all occurrences of $b$ to $\bx M$.
\subsection{Value abstraction}
As already mentioned, function and box values will be abstracted by names in our traces,
while ground values and locations will be represented literally.
To keep track of abstractions,
we need a notation that links values to the set of their possible abstractions.
When the abstraction is a name,
we will also need to record the correspondence between the name and the underlying value. 

\begin{definition}[value abstraction]\label{def: abstract values}
    Given a closed value $\sccv{i} V:T$,
    we define its \textit{abstraction set} $\AVal{\sccv{i} V:T}$
    by cases from \Cref{lem: canonical forms}.
    \begin{itemize}
        \item $\AVal{\sccv{i} ():\Unit}\df \bc{((),\emptyset)}$.
        \item $\AVal{\sccv{i} \nb{n}:\Int}\df \bc{(\nb{n},\emptyset)}$.
        \item $\AVal{\sccv{i} \ell:\rf T}\df \bc{(\ell,\emptyset)}$.
        \item $\AVal{\sccv{i} V: T_1\ra T_2}\df \bc{(f,f\mapsto V)\mid f\in \FNames_{T_1\ra T_2,i}}$.
        \item $\AVal{\sccv{1} V: \Box(\Gamma\vd T)}\df \bc{(b,b\mapsto V)\mid b\in \BNames_{\Box(\Gamma\vd T),1}}$.
    \end{itemize}
\end{definition}

We use $A$ to refer to any abstraction,
i.e. $A$ ranges over unit, numerals, locations and names.
We use $\nu(A)$ to denote the set of names occurring in $A$,
and $\mu(A)$ for the set of locations occurring in $A$.
Both $\nu(A)$ and $\mu(A)$ are either empty or singleton.

Having defined what it means to abstract values,
we now lift the process to local substitutions and heaps,
where we want to replace all the participating values with corresponding abstractions.
We use $\rho$ and $\chi$ to refer to abstract local substitutions and heaps respectively.

Given a local substitution $\sccv{i}\delta:\Gamma$,
we let $\AVal{\sccv{i}\delta:\Gamma}\df
\{(\rho,\bigcup_x\gamma_x)\mid
\dom{\rho}=\dom{\delta},
(\rho(x),\gamma_x)\in\AVal{\sccv{i} \delta(x):\Gamma(x)},
\cap_{x\in\dom{\rho}}\nu(\rho(x))=\emptyset\}$.
Given a heap $\sccv{} h:\heap$, 
we let $\AVal{\sccv{} h:\heap}\df
\{(\chi,\bigcup_{\ell}\gamma_\ell)\mid
\dom{\chi}=\dom{h},
(\chi(\ell),\gamma_\ell)\in\AVal{\sccv{i} h(\ell):\Sigma(\ell)},
\cap_{\ell\in\dom{\chi}}\nu(\chi(\ell))=\emptyset\}$.
We define $\nu(\chi)=\biguplus_{\ell}\nu(\chi(\ell))$, i.e. $\nu(\chi)$ contains all names in $\chi$.
We define $\mu(\rho)=\biguplus_{x}\mu(\rho(x))$, i.e. $\mu(\rho)$ contains all locations in $\rho$.

\subsection{Actions}
In our model, the meaning of a program will be interpreted as a set of traces.
A trace will be a sequence of actions alternating between the program
and the context. Traditionally in game semantics, the program is referred to as P (Player, Proponent)
and the context as O (Opponent).
We will stick to this terminology below for compatibility with earlier works in the area.
Intuitively, the actions should be viewed as abstractions of calls and returns made by the program or context.
Traditionally, the former are called questions and the latter are answers.
Note that actions will also feature abstract heaps $\Sigma;\cdot;\cdot\vd\chi:\heap$,
which are an abstraction of the part 
of the heap that was exposed during the interaction. 

Our semantics will be based on the following kinds of actions.
\begin{itemize}
    \item Player Answer (PA) $(\ol{A},\Sigma,\chi)$.
    Program replies to a previous question with the answer $\sccv{i} A:T$,
    setting the values of shared locations to $\sccv{}\chi:\heap$.
    \item Player Question of Function (PQF) $(\ol{f}(A),\Sigma,\chi)$.
    Program asks for the result of applying $f\in \FNames_{T_1\ra T_2, i}$ to the abstract value $\sccv{i} A:T_1$,
    setting the values of shared locations to $\sccv{}\chi:\heap$.
    \item Player Question of Box (PRB) $(\ol{\run b^\delta},\Sigma,\chi)$.
    Program asks for the result of running (unboxing) the box name $b\in\BNames_{\Box(\Gamma\vd T)}$
    under abstract local substitution $\sccv{i}\rho:\Gamma$,
    setting the values of shared locations to $\sccv{}\chi:\heap$.
    \item Opponent Answer (OA) $(A,\Sigma,\chi)$, Opponent Question of Function (OQF) $(f(A),\Sigma,\chi)$,
    and Opponent Question of Box (OQB) $(\run b^\rho,\Sigma,\chi)$ are similar,
    except that they are initiated by the context.
\end{itemize}

An action is legitimate only if
$\chi$ is closed with respect to reachable locations and
disjoint names are used for abstraction:
$\chi(\ell)=\ell'$ implies $\ell'\in\dom{\chi}$ and $\bigcap_{\ell\in\dom{\chi}}\nu(\chi(\ell))=\emptyset$.

\subsection{Configurations and transitions}

Finally, we define the labelled transition system (LTS) that will generate traces from terms.
We start off by discussing the shape of its configurations.

A \emph{passive configuration} describes a moment when the context (Opponent) is about to initiate a computation.
It is a tuple $(\Sigma, h,\gamma,\phi,\xi,\theta)$ where
$\sccv{} h:\heap$ is the heap, $\gamma$ maps function and box names to the value that they represent,
$\phi\subseteq\Names$ collects all the names that have been introduced so far,
$\xi:=\cdot\mid \xi:(\Sigma_1\vd K:(\cdot;\cdot;T_1;i_1)\ra(\cdot;\cdot;T_2;i_2))$ is a stack of evaluation contexts,
and $\theta\subseteq \dom{h}$ is the set of locations currently shared by program and context.

Following an action by the context,
passive configurations will evolve into active ones.
\emph{Active configurations} have the form
$(M,i,T,\Sigma,h,\gamma,\phi,\xi,\theta)$
where $\sccv{i} M:T$ is a term and other components are as defined above.
Active configurations correspond to computations by the program (Player).
An active configuration can evolve into another active configuration through silent moves,
or into a passive one after the program performs an action.

The admissible transitions in, what we call the Program LTS, are shown in \Cref{fig:Program LTS}.

\begin{figure}
\begin{itemize}
    \item (P$\tau$) $(M_1,i,T,\Sigma_1,h_1,\gamma,\phi,\xi,\theta)\xra{\tau}(M_2,i,T,\Sigma_2,h_2,\gamma,\phi,\xi,\theta)$
    
    where $(M_1,h_1)\ra(M_2,h_2)$.

    \item (PA) $(V, i, T, \Sigma, h, \gamma, \phi, \xi, \theta)\xra{(\ol{A},\Sigma', \chi)}(\Sigma, h, \gamma\uplus\gamma'\uplus\gamma'', \phi\uplus\nu(A)\uplus \nu(\chi), \xi, \dom{\chi})$
    
    where $(A,\gamma')\in\AVal{\sccv{i} V:T}$, 
    $(\Sigma',h')=\rst{(\Sigma,h)}{\theta\cup\mu(A)}$,
    $(\chi,\gamma'')\in\AVal{\spccv{}h':\heap}$

    \item (PQF) $(K[fV], i, T, \Sigma, h, \gamma, \phi, \xi, \theta)\xra{(\ol{f}(A),\Sigma', \chi)}(\Sigma, h, \gamma\uplus\gamma'\uplus\gamma'', \phi\uplus\nu(A)\uplus\nu(\chi), \xi:(\Sigma\vd K:(\cdot;\cdot;T_2;i_3)\ra(\cdot;\cdot;T;i)), \dom{\chi})$
    
    where $f\in \FNames_{T_1\ra T_2, i_1}$,
    $(A,\gamma')\in\AVal{\sccv{i_2} V:T_1}$,
    $(\Sigma',h')=\rst{(\Sigma,h)}{\theta\cup\mu(A)}$,
    $(\chi,\gamma'')\in\AVal{\spccv{}h':\heap}$

    \item (PQB) $(K[\#b^{\delta}], i, T, \Sigma, h, \gamma, \phi, \xi, \theta)\xra{(\ol{\run b^\rho},\Sigma', \chi)}(\Sigma, h, \gamma\uplus\gamma'\uplus\gamma'', \phi\uplus\nu(\chi), (\Sigma\vd K:(\cdot;\cdot;T_1;i_1)\ra(\cdot;\cdot;T;i)):\xi, \dom{\chi})$

    where $b\in\BNames_{\Box (\Gamma\vd T_1), 1}$,
    $(\rho, \gamma')\in\AVal{\sccv{i_1}\delta:\Gamma}$,
    $(\Sigma',h')=\rst{(\Sigma,h)}{\theta\cup\mu(\rho)}$,
    $(\chi,\gamma'')\in\AVal{\spccv{}h':\heap}$

    \item (OA) $(\Sigma, h, \gamma, \phi, \xi :(\Sigma''\vd K:(\cdot;\cdot;T_1;i_1)\ra(\cdot;\cdot;T_2;i_2)),\theta)\xra{(A,\Sigma',\chi)}(K[A], i_2, T_2, \Sigma\cup\Sigma', \chi+h, \gamma, \phi\uplus\nu(A)\uplus\nu(\chi), \xi, \dom{\chi})$
    
    where $\spccv{i_1} A:T_1$,
    $\dom{\chi}\cap\dom{h} = \theta$, $\chi = \rst{\chi}{\theta\cup \mu(A)}$

    \item (OQF) $(\Sigma, h, \gamma, \phi, \xi, \theta)\xra{(f(A),\Sigma',\chi)}(VA, i_3, T_2, \Sigma\cup\Sigma', \chi+h,\gamma,\phi\uplus\nu(A)\uplus\nu(\chi),\xi,\dom{\chi})$
    
    where $f\in \FNames_{T_1\ra T_2, i_1}$,
    $\gamma(f)=V$,
    $\spccv{i_2} A:T_1$,
    $i_3=\max(i_1, i_2)$,
    $\dom{\chi}\cap\dom{h} = \theta$, $\chi = \rst{\chi}{\theta\cup \mu(A)}$

    \item (OQB) $(\Sigma, h, \gamma, \phi, \xi, \theta)\xra{(\run b^\rho,\Sigma',\chi)}((\#V)[\rho], i, T, \Sigma\cup\Sigma', \chi+h,\gamma,\phi\uplus\nu(\chi),\xi,\dom{\chi})$
    
    where $b\in \BNames_{\Box(\Gamma\vd T), }$,
    $\gamma(b)=V$,
    $\spccv{i} \rho:\Gamma$,
    $\dom{\chi}\cap\dom{h} = \theta$, $\chi = \rst{\chi}{\theta\cup \mu(A)}$
    
\end{itemize}

\begin{flushleft}
We use $\uplus$ to denote disjoint union.

\medskip

Given two heaps $h$ and $h'$,
we define $h+h'$ to be a heap with $\dom{h+h'}=\dom{h}\cup\dom{h'}$
such that $(h+h')(\ell)=h(\ell)$ if $\ell\in\dom{h}$,
and $(h+h')(\ell)=h'(\ell)$ otherwise.

\medskip

Given a set of locations $\theta$ and a heap $\sccv{}h:\heap$,
We define $\rst{(\Sigma,h)}{\theta}=(\Sigma',h')$ where
$h'$ is the minimal heap such that
(a) $h'\subseteq h$,
(b) $\theta\subseteq\dom{h'}$,
and (c) $h'(\ell)=\ell'$ implies $\ell'\in\dom{h'}$,
and $\Sigma'\subseteq\Sigma$ is the corresponding store context such that
$\spccv{} h':\heap$.
When store contexts are clear from the context, we sometimes simply write $\rst{h}{\theta}=h'$.
\end{flushleft}

\caption{Transition rules for Program LTS}
\label{fig:Program LTS}
\end{figure}

(P$\tau$) is a silent transition that embeds the operational semantics into the LTS.
(PA) concerns cases when the LTS has reached a value.
The value is then announced in abstracted form $A$ through the action $(\ol{A}, \Sigma, \chi)$,
where $\chi$ is the abstraction of the current heap $h$,
restricted to locations that have been revealed to the context.
(PQF) corresponds to reaching a $\beta$-redex
where the function is an FName $f$.
In other words, the program wants to call an unknown function provided earlier by the context.
This is represented by the action
$(\ol{f}(A), \Sigma, \chi)$, where $A$ abstracts the argument.
Meanwhile, the current evaluation context $K$ is stored on the stack $\xi$ for later use.
(PQB) is analogous to the previous case,
and handles running unknown pieces of code.
Note that a local substitution $\delta$ is always provided for all its free variables,
and this substitution is also converted into an abstract representation $\rho$.
Such a question is represented by the action $(\ol{\run b^\rho}, \Sigma, \chi)$.

The above (PA), (PQF), and (PQB) actions lead from active to passive configurations. 
Below we discuss transitions from passive to active configurations, 
which correspond to actions by the context.
In (OA) the context provides a value to a pending player question,
which is represented by the action $(A,\Sigma,\chi)$.
The stack $\xi$ is popped
and $A$ is plugged into the top evaluation context.
The $\dom{\chi}\cap\dom{h} = \theta$ condition
ensures that the context cannot modify the program's private locations,
while $\chi = \rst{\chi}{\sigma\cup \mu(A)}$ means that
the context should only share relevant locations.
(OQF) describes a situation where the context calls 
a function that was communicated to it previously as $f$, and applies it to an 
arbitrary abstract argument $A$, as illustrated by the action $(f(A),\Sigma,\chi)$.
The transition fetches the corresponding actual value $\gamma(f)$,
which is then scheduled for execution in the successor active configuration.
Finally, (OQB) covers the case when the context wants to run code $b$ under some substitution $\rho$.
This rule is analogous to the previous one and the action is $(\run b^\rho,\Sigma,\chi)$.

\subsection{Trace semantics}

To start defining the trace semantics,
we first need to define the initial configuration induced by an arbitrary term $M$.
To handle free global variables in $M$,
we define abstraction over global substitutions as
$\AVal{\Sigma;\cdot\vd\sigma:\Psi}\df
\{(\eta,\bigcup_u\gamma_u)\mid \dom{\eta}=\dom{\sigma},\allowbreak
\eta(u)=\#(\eta'(u)),
(\eta'(u),\gamma_u)\in\AVal{\sccv{1}\bx \sigma(u):\Box(\Gamma\vd T)} \text{ for all } u:(\Gamma\vd T)\in\Psi,
\cap_{u\in\dom{\eta}}\nu(\eta(u))=\emptyset\}$.
We use $\eta$ to refer to an \textit{abstract global substitution}
that maps global variables to $\#b$ of the corresponding type.

Reminiscent of \Cref{thm: CIU},
the initial configurations are parameterized by the term $M$,  
an abstract global substitution $\eta$,
an abstract local substitution $\rho$,
and an abstract heap $\chi$.
\begin{definition}[Initial configurations]\label{def:initial_configuration1}
    Given a term $\spgv{i} M : T$,
    abstract global substitution $\Sigma';\cdot\vd \eta:\Psi$,
    abstract local substitution $\spccv{i}\rho:\Gamma$,
    and abstract heap $\spccv{}\chi:\heap$
    such that $\Sigma'\supseteq\Sigma$,
    the program configuration is defined as
    $\cconf{M}{\eta,\rho,\chi}=(M[\eta][\rho], i, T, \Sigma', \chi, \emptyset, \nu(\eta)\uplus\nu(\rho)\uplus\nu(\chi), \cdot, \dom{\chi})$.
\end{definition}
The semantics will only contain traces that empty the stack, as these represent the convergent interactions.
Below we define the set of such traces $\Tr{\CC}$ that are generated from $\CC$.
\begin{definition}[Derived traces]
    Given two configurations $\CC$, $\CC'$ and action $\act$, we write $\CC\xRa{\act}\CC'$ if $\CC\xra{\tau}^*\CC''\xra{\act}\CC'$.
    Given a trace $\tr\ =\ \act_1\dots\act_n$, we write $\CC\xRa{\tr}\CC'$ if $\CC\xRa{\act_1}\CC''\xRa{\act_2}\cdots\xRa{\act_n}\CC'$.
    We write $\Tr{\CC}=\{\tr\mid\CC\xRa{\tr}(\Sigma, h, \gamma, \phi, \cdot, \theta)\}$.
    Equivalently, $\tr\in\Tr{\CC}$ iff $\CC\xRa{\tr}\CC'$ for some $\CC'$ and the number of (OA) actions minus the number of (PQ) actions in $\tr$ equals the size of $\xi$ in $\CC$.
\end{definition}
Finally, we define the intended trace semantics of Contextual MetaML terms.
\begin{definition}[Trace semantics]\label{dfn:semantics}
    Given a term $\spgv{i} M:T$, we define its trace semantics as
    \[
    \Tr{M}=\{(\eta,\rho,\chi,\tr)\mid
    \eta,\rho,\chi\text{ are as specified in \Cref{def:initial_configuration1},}
    \text{and }\tr\in \Tr{\cconf{M}{\eta, \rho, \chi}}\}
    \]
\end{definition}
In the next section, this semantics will be shown fully abstract with respect 
to $\lsmc$, and hence to $\lsm$ by \Cref{thm: CIU}.

\begin{example}\label{exm: LTS traces}
    In this example, we consider some terms and the traces that they can generate under our LTS.
    We omit the typing context $\Sigma$ and heap $\chi$ in the traces where possible for brevity.
\begin{itemize}
    \item Term: $\cdot;\cdot;\cdot\vd_1\bx x:\Box(x:\Int\vd\Int)$

    Trace: $\ol{b_1}\quad \run b_1^{1/x} \quad \ol{1}$
    \item Term: $\cdot; \cdot; y:\Box(x:\Int\vd\Int)\ra\Int \vd_1 y\ \bx x: \Int$

    Substitution: $y\mapsto f_1$

    Trace: $\ol{f_1(b_1)}\quad \run b_1^{1/x} \quad \ol{1} \quad 2\quad \ol{2}$
    \item Term: $\cdot;u:(x:\Int\vd\Int);y:\Int\vd_0 u^{y/x} + 3:\Int$
    
    Substitution: $u\mapsto \#b_1$, $y\mapsto 1$

    Trace: $\ol{\run b_1^{1/x}} \quad 5 \quad \ol{8}$

    \item Term: $\ell:\rf \Int;u:(\ \vd \Unit);\cdot\vd_0 u;\ell\mapsto !\ell+2;u:\Unit$
    
    Substitution: $u\mapsto \#b_1$

    Initial heap: $\ell\mapsto 0$

    Trace: $(\ol{\run b_1}, \bc{\ell\mapsto 0}) \quad ((), \bc{\ell\mapsto 1}) \quad (\ol{\run b_1}, \bc{\ell\mapsto 3}) \quad ((), \bc{\ell\mapsto 4}) \quad (\ol{()}, \bc{\ell\mapsto 4})$
\end{itemize}
\end{example}

\section{Full Abstraction}\label{sec: full abstraction}

In order to prove full abstraction we need to develop terminology and results about the interaction between 
active and passive configurations. In \Cref{def:initial_configuration1}, we showed how to assign
\emph{active} configurations to terms. Below we define \emph{passive} configurations corresponding 
to the evaluation context $K$, substitutions $\sigma$, $\delta$, and heap $h$
as given in \Cref{def: closed instances of use}.
\begin{definition}
    Given an evaluation context $\Sigma\vd K:(\cdot;\cdot;T_1;i_1)\ra (\cdot;\cdot;T_2;i_2)$,
    global substitution $\Sigma;\cdot\vd\sigma:\Psi$,
    local substitution $\sccv{i}\delta:\Gamma$,
    and heap $\sccv{} h:\heap$,
    let $(\eta,\gamma)\in\AVal{\Sigma;\cdot\vd\sigma:\Psi}$,
    $(\rho,\gamma')\in\AVal{\sccv{i}\delta:\Gamma}$,
    $(\Sigma',h')=\rst{(\Sigma,h)}{\mu(\eta)\cup\mu(\rho)}$,
    and $(\chi,\gamma'')\in\AVal{\spccv{}h':\heap}$,
    the context configuration is defined as
    $\cconf{K,\sigma,\delta,h}{\eta,\gamma,\rho,\gamma',\chi,\gamma''}=
    (\Sigma,h,\gamma\uplus\gamma'\uplus\gamma'',\nu(\eta)\uplus\nu(\rho)\uplus\nu(\chi), \cdot:(\Sigma\vd K:(\cdot;\cdot;T_1;i_1)\ra (\cdot;\cdot;T_2;i_2)),\dom{\chi})$.
\end{definition}

Given a sequence $\tr$ of actions, we write $\ol{\tr}$ for the sequence obtained by converting 
each action to one of the opposite polarity (i.e. context to program, program to context).
We write $\CC_1|\CC_2\da^\tr$ if either (a) $\tr\in\Tr{\CC_1}$ and $\ol{\tr}\; \act\in\mathbf{Tr}(\CC_2)$, 
or (b) $\tr\; \act\in\Tr{\CC_1}$ and $\ol{\tr}\in\mathbf{Tr}(\CC_2)$ for some (PA) action $\act$.

To show soundness, we need to show that a program configuration and a context configuration share a trace
iff the term obtained by combining them terminates.
\begin{theorem}[Correctness]\label{thm:correctness}
    For any term $\spgv{i}M:T$,
    evaluation context $\Sigma'\vd K:(\cdot;\cdot;i;T)\ra (\cdot;\cdot;i';T')$,
    global substitution $\Sigma';\cdot\vd\sigma:\Psi$,
    local substitution $\spccv{i}\delta:\Gamma$,
    and heap $\spccv{} h:\heap$
    such that $\Sigma'\supseteq\Sigma$,
    let $(\eta,\gamma)\in\AVal{\Sigma';\cdot\vd\sigma:\Psi}$,
    $(\rho,\gamma')\in\AVal{\spccv{i}\delta:\Gamma}$,
    $(\Sigma'',h')=\rst{(\Sigma',h)}{\mu(\eta)\cup\mu(\rho)}$,
    and $(\chi,\gamma'')\in\AVal{\Sigma'';\cdot;\cdot\vd h':\heap}$,
    then $(K[M[\sigma][\delta]],h)\Da$ iff
    there exists trace $\tr$ such that
    $\cconf{K,\sigma,\delta,h}{\eta,\gamma,\rho,\gamma',\chi,\gamma''}\mid\cconf{M}{\eta,\rho,\chi}\da^{\tr}$.
\end{theorem}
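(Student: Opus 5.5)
The plan follows the standard operational game semantics argument of Laird and of Jaber--Murawski: introduce a \emph{composite} (merged) semantics and relate it to the reduction semantics on one side and to the synchronisation of traces on the other. First I will define, for a compatible pair consisting of a passive configuration $\CC_1$ (the context side) and an active configuration $\CC_2$ (the program side), or vice versa, a merged configuration $\mergeConf{\CC_1}{\CC_2}$. It is an LTS whose states record both components. Its internal moves are the $\tau$-moves of whichever side is active. Its synchronisation steps are a P-action of one side matched with the dual O-action of the other, and these steps are themselves silent. Compatibility means three things: the names owned by each side are disjoint, each side's $\gamma$ is defined on the names the other side may call, the shared location sets $\theta$ agree, and the private heap parts are disjoint. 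Taking the union of the two partial heaps and the union of the two $\gamma$'s, I will define a \emph{flattening} map that sends a merged configuration to a single closed term--heap pair. Names are replaced by their $\gamma$-values, $\#b^\delta$ becomes $(\#\gamma(b))[\delta]$, and the two stacks of evaluation contexts are interleaved and composed into one evaluation context, so that the pending questions are nested correctly. For the initial pair, the flattening of $\cconf{K,\sigma,\delta,h}{\eta,\gamma,\rho,\gamma',\chi,\gamma''}$ with $\cconf{M}{\eta,\rho,\chi}$ is exactly $(K[M[\sigma][\delta]],h)$ up to renaming of locations. This uses the facts that $\eta(u)=\#b$ with $\gamma(b)=\bx\sigma(u)$, and that $(\#b^{\delta'})$ under $\gamma$ flattens to $\sigma(u)[\delta']$, which is precisely what $u^{\delta'}[\sigma]$ produces.

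The second step is a pair of simulation lemmas between merged transitions and reductions of the flattened term. (i) A $\tau$-move of the active side becomes one reduction step of the flattening, with the heap union updated; fresh locations can be chosen outside the other side's domain. (ii) A synchronisation step becomes either zero reductions or one reduction. In the PQF/OQF case, $fV$ with $\gamma(f)=V'$ flattens to $V'V$, and after the O-side places $V'A$ in the hole, flattening gives the same term. In the PQB/OQB case, $\#b^\delta$ flattens to $(\#\gamma(b))[\delta]$, and the O-side starts from $(\#\gamma(b))[\rho]$, whose flattening agrees. In the PA/OA case, the value is plugged into the popped context. Conversely, every reduction of the flattened term is matched by a merged step. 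The reduction happens in the active component, and any redex involving a name is exactly a P-question. Here I need the invariant that the flattened evaluation context decomposes along the stacks. The (Letbox name) rule corresponds to flattening $\letbox{u}{b}{M}$ to $\letbox{u}{\gamma(b)}{M}$, and both steps yield the same flattened term.

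The third step connects merged convergence to trace synchronisation: $\mergeConf{\CC_1}{\CC_2}$ reaches a terminal state, with both stacks empty and a final PA, iff $\CC_1|\CC_2\da^\tr$ for some $\tr$. This holds by induction on the length of the run, projecting the merged run onto each component to obtain a trace and its dual. Conversely, two dual derivations are zipped into a merged run, using the fact that the abstraction choices (fresh names, $\chi$) made by one side can be matched by the other; the two sides' choices of $\AVal{}$ coincide by construction. Together with step two, termination of $(K[M[\sigma][\delta]],h)$ is equivalent to termination of the merged run, since values flatten from terminal states and strong simulation preserves and reflects reaching a value. The final answer case (b) of $\da^\tr$ is the one in which the program answers last.

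I expect the main obstacle to be the flattening invariant for box names. Keeping well-typedness and equality under substitution composition requires $(\#b^\delta)[\sigma]$ and the composition laws $\delta[\delta']$ to commute with flattening, and it requires handling layers $i$ correctly. Similarly delicate are the heap and location bookkeeping: the $\rst{}{}$ closure, the condition $\dom\chi\cap\dom h=\theta$, and renaming. My first approach is a substitution lemma stating that flattening commutes with both local and global substitutions, proved by induction on terms before the simulation lemmas.
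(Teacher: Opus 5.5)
\paragraph{Verdict: a genuine gap.} Your architecture matches the paper's: compose a compatible passive/active pair, flatten it via the name substitution, show that $\tau$-moves are single reductions while synchronised actions leave the flattening unchanged, then induct on the length of the run. The gap is in the direction from $(K[M[\sigma][\delta]],h)\Da$ to the existence of a shared trace.

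\paragraph{Why the reflection direction fails as written.} By your own step (ii), synchronisations can correspond to \emph{zero} reductions of the flattened term, so the simulation is not strong: it stutters. To reflect convergence you must show that, from any compatible pair, only finitely many synchronisations can occur before a $\tau$-move or a terminal state. Your sentence ``every reduction of the flattened term is matched by a merged step'' does not give this. The matching step may be a synchronisation, after which the flattening is literally the same term and no measure has decreased. Finiteness is not automatic:
\begin{itemize}
\item After an (OQB), the other side runs $(\#\gamma_1(b))[\rho]$. This may at once issue a (PQB) on a box name occurring in $\gamma_1(b)$, or answer immediately if the content is a value. The first side then resumes and may again sit at some $\#b'^{\delta'}$.
\item Abstracting a name $f'$ produces a fresh $f$ with $\gamma(f)=f'$. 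Hence $\gamma$ contains chains of names that generate cascades of (PQF)/(OQF) exchanges.
\item The problem already appears in the base case. For $M=\#b^{\delta}$ with $\gamma(b)=\bx V$, the flattening is a value, yet the configurations must still exchange a (PQB) and answers before reaching a terminal state.
\end{itemize}

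\paragraph{What is missing: a well-founded measure on $\tau$-free interaction.} The paper supplies exactly this.
\begin{itemize}
\item It defines $\depth{\gamma}{M}$ as the sum of $\depth{\gamma}{\gamma(b)}+1$ over all occurrences of box names in $M$.
\item It proves a bound: in any interaction made only of answers and box questions, where every prefix has at least as many questions as answers, the number of box questions is at most this depth.
\item That bound uses the two-layer restriction. The substitution $\rho$ in a box question, and the answer to it, are typed at layer $0$. So they cannot be box names and do not increase the depth.
\item Function-name chains are bounded by the well-founded order on names, which follows from the existence of the idempotent substitution $\iota(\gamma_1\uplus\gamma_2)$.
\end{itemize}
Together these rule out any infinite $\tau$-free sequence of synchronisations. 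That is what makes ``a reduction of the composition is matched by finitely many synchronisations followed by one $\tau$-move'' true.

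\paragraph{A related gap in the flattening.} You cannot replace names by their $\gamma$-values just once, because those values may themselves mention names. You need the iterated (idempotent) substitution. Its well-definedness, i.e.\ acyclicity of $\gamma_1\uplus\gamma_2$, must be part of compatibility and shown to be preserved by every transition. This is the same invariant that bounds the function-name chains. The heap and substitution-commutation bookkeeping you flag as the main obstacle is comparatively routine.
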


\begin{proof_sketch}
    We generalize the statement and prove that for any pair of ``compatible'' configurations,
    they share a trace iff their ``combination'' terminates.
    In particular, we show that such pairs cannot generate an infinite $\tau$-free trace
    by giving an upper bound on the length of such traces.
\end{proof_sketch}

\begin{example}
Here we revisit \Cref{exm: LTS traces} and show the context configurations that correspond to those terms and traces.
\begin{itemize}
    \item Term: $\cdot;\cdot;\cdot\vd_1\bx x:\Box(x:\Int\vd\Int)$

    Evaluation context: $\letbox{u}{\bullet}{(u^{1/x} + 1)}$

    Trace: $b_1\quad \ol{\run b_1^{1/x}} \quad 1 \quad \ol{2}$
    \item Term: $\cdot; \cdot; y:\Box(x:\Int\vd\Int)\ra\Int \vd_1 y\ \bx x: \Int$

    Evaluation context: $\bullet + 2$

    Substitution: $y\mapsto \la z.\letbox{u}{z}{(u^{1/x} + 1)}$

    Trace: $f_1(b_1)\quad \ol{\run b_1^{1/x}} \quad 1 \quad \ol{2} \quad 2\quad \ol{4}$

    \item Term: $\cdot;u:(x:\Int\vd\Int);y:\Int\vd_0 u^{y/x} + 3:\Int$
    
    Evaluation context: $\bullet + 2$

    Substitution: $u\mapsto x+4$, $y\mapsto 1$

    Trace: $\run b_1^{1/x} \quad \ol{5} \quad 8 \quad \ol{10}$

    \item Term: $\ell:\rf \Int;u:(\ \vd \Unit);\cdot\vd_0 u;\ell\mapsto !\ell+2;u:\Unit$
    
    Evaluation context: $\bullet$

    Substitution: $u\mapsto \ell:=!\ell+1$

    Initial heap: $\ell\mapsto 0$

    Trace: $(\run b_1, \bc{\ell\mapsto 0}) \quad (\ol{()}, \bc{\ell\mapsto 1}) \quad (\run b_1, \bc{\ell\mapsto 3}) \quad (\ol{()}, \bc{\ell\mapsto 4}) \quad ((), \bc{\ell\mapsto 4}) \quad \quad (\ol{()}, \bc{\ell\mapsto 4})$
\end{itemize}
\end{example}

\begin{theorem}[Soundness]\label{thm: soundness}
    Given two terms $\spgv{i} M_1,M_2:T$,
    if $\Tr{M_1}\subseteq\Tr{M_2}$,
    then $M_1\lsm M_2$.
\end{theorem}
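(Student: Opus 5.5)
By Theorem~\ref{thm: CIU}, it suffices to show $M_1\lsmc M_2$. So we fix a global substitution $\Sigma';\cdot\vd\sigma:\Psi$, a local substitution $\spccv{i}\delta:\Gamma$, an evaluation context $\spccv{} K:(\cdot;\cdot;i;T)\ra(\cdot;\cdot;i';T')$ and a heap $\spccv{} h:\heap$ with $\Sigma'\supseteq\Sigma$, and assume $(K[M_1[\sigma][\delta]],h)\Da$. The aim is to transfer this termination to $M_2$ through the trace semantics, using Theorem~\ref{thm:correctness} in both directions.

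First, choose abstractions $(\eta,\gamma)\in\AVal{\Sigma';\cdot\vd\sigma:\Psi}$, $(\rho,\gamma')\in\AVal{\spccv{i}\delta:\Gamma}$, $(\Sigma'',h')=\rst{(\Sigma',h)}{\mu(\eta)\cup\mu(\rho)}$ and $(\chi,\gamma'')\in\AVal{\Sigma'';\cdot;\cdot\vd h':\heap}$. Such abstractions always exist, since the name sets are countably infinite and we can pick fresh, pairwise disjoint names. Importantly, these choices depend only on $\sigma,\delta,h$ and not on $M_1$ or $M_2$, so the same context configuration $\CC_K:=\cconf{K,\sigma,\delta,h}{\eta,\gamma,\rho,\gamma',\chi,\gamma''}$ serves both terms. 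Applying the forward direction of Theorem~\ref{thm:correctness} to $M_1$ gives a trace $\tr$ with $\CC_K\mid\cconf{M_1}{\eta,\rho,\chi}\da^{\tr}$.

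Next, unfold $\da^{\tr}$ into its two cases. In case (a), $\tr\in\Tr{\cconf{M_1}{\eta,\rho,\chi}}$ and $\ol{\tr}\,\act\in\Tr{\CC_K}$. In case (b), $\tr\,\act\in\Tr{\cconf{M_1}{\eta,\rho,\chi}}$ for some (PA) action $\act$ and $\ol{\tr}\in\Tr{\CC_K}$. In either case the component generated by the program configuration has the form $\mathtt{s}\in\Tr{\cconf{M_1}{\eta,\rho,\chi}}$. Here $\eta,\rho,\chi$ satisfy the requirements of Definition~\ref{def:initial_configuration1}: $\eta$ is well-typed over $\Sigma'\supseteq\Sigma$ because $\sigma$ is, and $\chi$ abstracts a heap over $\Sigma''\subseteq\Sigma'$. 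Hence $(\eta,\rho,\chi,\mathtt{s})\in\Tr{M_1}$ by Definition~\ref{dfn:semantics}. The hypothesis $\Tr{M_1}\subseteq\Tr{M_2}$ then gives $(\eta,\rho,\chi,\mathtt{s})\in\Tr{M_2}$, that is, $\mathtt{s}\in\Tr{\cconf{M_2}{\eta,\rho,\chi}}$. Since the context side $\CC_K$ is unchanged, we obtain $\CC_K\mid\cconf{M_2}{\eta,\rho,\chi}\da^{\tr}$. The backward direction of Theorem~\ref{thm:correctness}, applied to $M_2$ with the same $K,\sigma,\delta,h$ and the same abstractions, then yields $(K[M_2[\sigma][\delta]],h)\Da$. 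This establishes $M_1\lsmc M_2$, and hence $M_1\lsm M_2$.

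The main obstacle is the bookkeeping that makes the triple $(\eta,\rho,\chi)$ used in Theorem~\ref{thm:correctness} match exactly the parameters quantified over in Definition~\ref{dfn:semantics}. Two points need checking. The store context in the initial configuration must be $\Sigma'$ and the abstract heap must range only over the exposed part $\rst{h}{\mu(\eta)\cup\mu(\rho)}$; we can use any $\chi$ with $\spccv{}\chi:\heap$, so a restricted one qualifies. Also, the set $\phi$ of names in $\CC_K$ must agree with that of $\cconf{M_j}{\eta,\rho,\chi}$, namely $\nu(\eta)\uplus\nu(\rho)\uplus\nu(\chi)$, for both $j=1,2$. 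Once this matching is verified, the argument reduces to the two applications of Correctness described above.
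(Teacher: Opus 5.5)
Your proposal is correct and follows the paper's proof essentially step for step. Both reduce to $\lsmc$ via the CIU theorem, fix abstractions $(\eta,\rho,\chi)$ of $\sigma,\delta,h$, apply the Correctness theorem to $M_1$, transfer the program-side trace using $\Tr{M_1}\subseteq\Tr{M_2}$, and apply Correctness again to $M_2$ with the same context configuration. Your treatment of both clauses of $\da^{\tr}$ and your remark that the context configuration does not depend on $M_j$ only make explicit what the paper's shorter proof leaves implicit.
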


To show completeness, we prove a definability property stating that,
for any well-behaved sequence of actions,
we can find a configuration that generates this trace only.

We say that a name is \textit{introduced} in an action
if it is used as an answer, as the argument in a (QF) action,
as the codomain of the substitution in a (QB) action,
or appears in $\chi$.
We say that a (PA) (resp. (OA)) action \textit{is an answer to} an (OQ) (resp. (PQ)) action
if it is the first (PA) (resp. (OA)) action
such that the number of (PA) (resp. (OA)) actions between them
equals the number of (OQ) (resp. (PQ)) actions between them.

Given $N_O,N_P\subseteq\Names$, an \textit{$N_O,N_P$-trace} is a sequence of actions such that
(i) Player and Opponent actions alternate;
(ii) no name in $N_P\cup N_O$ is introduced;
(iii) no name is introduced twice;
(iv) if a name is used as a question,
then it must either occur in the set $N_\_$ of opposite polarity
or have been introduced in a previous action of opposite polarity;
(v) an answer to a question must have the same type as the question is expecting; and
(vi) if $\chi$ and $\chi'$ belong to two neighbouring actions $\act$ and $\act'$,
then $\dom{\chi'}$ should be the minimal set of locations satisfying the following three conditions:
(a) $\dom{\chi}\subseteq\dom{\chi'}$,
(b) locations occurring in $\act'$ (if any) occur in $\dom{\chi'}$,
and (c) $\chi'(\ell)=\ell'$ implies $\ell'\in\dom{\chi'}$.

We say an $\emptyset,N_P$-trace is \textit{complete} if
(i) it starts with an Opponent action,
(ii) it ends with a (PA) action,
(iii) the number of (OA) actions equals the number of (PQ) actions plus one,
(iv) in any prefix of it the number of (OA) actions is at most the number of (PQ) actions plus one, and
(v) in any proper prefix of it the number of (PA) actions is at most the number of (OQ) actions.
Then the traces in $\Tr{\CC}$ are complete iff $\CC$ is a passive configuration
whose stack $\xi$ contains exactly one evaluation context.
In particular, if $\tr\in\Tr{\cconf{K,\sigma,\delta,h}{\eta,\gamma,\rho,\gamma',\chi,\gamma''}}$,
then $\tr$ must be a complete $\emptyset,\dom{\gamma}\uplus\dom{\gamma'}\uplus\dom{\gamma''}$-trace.

\begin{lemma}[Definability]\label{lem: definability}
    Given a complete $\emptyset,N_P$-trace $\tr$,
    there exists $\CC=(\Sigma,h,\gamma,N_P,\cdot : (\Sigma\vd K:(\cdot;\cdot;i_1;T_1)\ra (\cdot;\cdot;i_2;T_2)),\theta)$
    such that $\Tr{\CC}=\{\tr\}$ up to renamings that preserve $N_P$.
\end{lemma}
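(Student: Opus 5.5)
We build the context configuration by the standard operational-game-semantics definability construction of Laird and Jaber, adapted so that box names play the role of function names taking a local substitution instead of a single argument. The context has to replay the Opponent moves of $\tr$ and diverge on any Player move that does not match $\tr$.

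We first allocate private references. A counter $c:\rf\Int$ records how many actions have been played so far. For every name introduced by Player in $\tr$ we add a store cell. A Player function name $f\in\FNames_{T_1\ra T_2,i}$ gets a cell $r_f:\rf(T_1\ra T_2)$. A Player box name $b$ of type $\Box(\Gamma\vd T)$ gets a cell $r_b:\rf(\Gamma\to T)$: since the context cannot store a box and later unbox it, we store the thunk $\lambda z_1\ldots z_k.\#b^{z_1/x_1,\ldots,z_k/x_k}$, obtained by Letbox on $b$ in a context that binds $u$ and then writes $\lambda\vec z.u^{\vec z/\vec x}$ into $r_b$. The Remark before \Cref{lem: CIU box} says boxes cannot in general be translated into functions; the construction only needs this in the easy direction, namely the context calling Player code. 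Locations revealed by Player are likewise saved in cells.

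The behaviour of each Opponent name is then defined by case analysis on the position in $\tr$.
\begin{itemize}
\item A function name $g$ introduced by Opponent is the value $\lambda x.\,\mathit{body}_g$.
\item A box name $b'$ introduced by Opponent is the value $\bx(\mathit{body}_{b'})$. Its body is open in the local context $\Gamma$ of $b'$, which is exactly what makes it typable at layer $0$. Because $\Box$ is not nested, the body has layer $0$ and can use only ground and function values.
\end{itemize}
Each body reads $c$ and branches on its value; this is an $\mathbf{if}$-cascade over the finitely many positions in $\tr$. At position $k$ it checks that the Player question just received is the expected one:
\begin{itemize}
\item the argument or substitution values equal the expected ground values or locations, comparing with $=$ on $\Int$ and on locations via stored cells;
\item the shared heap $\chi$ agrees with $\tr$ on ground values;
\item any new Player names in the question are saved into their cells.
\end{itemize}
If a check fails, the body diverges using $\rec{y}{x}{y\,x}$. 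If all checks pass, it increments $c$, performs the next Opponent move, and writes the heap $\chi$ prescribed for that move.
\begin{itemize}
\item For an (OQF)/(OQB) on a Player name $n$, the body calls the thunk in $r_n$ with arguments built from fresh Opponent values. It then continues the check-and-dispatch loop on the returned value.
\item For an (OA), the body returns the prescribed abstract value: a fresh $\lambda$ for a function name, or a box for a box name.
\end{itemize}
Well-bracketing of $\tr$ (completeness conditions (iii)--(v), plus alternation) ensures that nested Opponent calls and returns line up with the call stack of the generated code. The evaluation context $K$ itself handles the initial Opponent move and the final (PA).

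Next we show $\tr\in\Tr{\CC}$ by induction on prefixes. We maintain an invariant relating the configuration reached after $\tr_{\le k}$ to the value of $c$, the contents of the cells, and the stack of pending evaluation contexts. Each Player move matching $\tr$ passes the checks and leads, through $\tau$-steps, to the prescribed Opponent move. For the converse, any $\tr'\in\Tr{\CC}$ must agree with $\tr$ action by action up to renaming of non-$N_P$ names. A deviating Player move either fails a check, so no further action happens and the trace cannot be completed, or is forced to be equal. Opponent moves are deterministic in $\CC$, apart from the choice of fresh names and of names in $\chi$.

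The main obstacle is making the context deterministic enough to get $\Tr{\CC}=\{\tr\}$ exactly rather than merely $\tr\in\Tr{\CC}$. Two issues need care.
\begin{itemize}
\item Player may reveal locations containing higher-order values. Comparing these is impossible, so we rely on the fact that they appear in $\chi$ only as names, which are fresh by condition (iii) and are identified only up to renaming.
\item Types must be respected at layers. An Opponent box body at layer $0$ cannot itself mention the counter if that counter contains boxes. We therefore keep layer-$1$ cells separate from the layer-$0$ cells accessible inside box bodies. Box bodies only call Player functions and thunks of layer $0$, which suffices because Player names occurring inside a layer-$0$ interaction are themselves of layer $0$.
\end{itemize}
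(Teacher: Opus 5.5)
\textbf{The main gap is your treatment of layers.} Your context's own boxes are static layer-$0$ if-cascades. While such a body runs, the context can only use layer-$0$ names and cells. It cannot question a name of layer-$1$ type, cannot create a box to pass as an argument, and cannot read or write a shared location of type $\rf\Box(\Gamma\vd T)$. It needs that last ability to give the fresh names in its own $\chi$ distinct behaviours, and to record box names the other side writes there.

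You cover this with the claim that names used inside a layer-$0$ interaction are themselves of layer $0$. Nothing in the definition of a complete $\emptyset,N_P$-trace imposes any layer discipline. The Program LTS also produces such traces: nothing in (OQF) restricts which name O may call while a (PQB) is pending. For instance, take $g\in\FNames_{\Box(\cdot\vd\Int)\ra\Int,1}$ and fresh $b_1,b_2\in\BNames_{\Box(\cdot\vd\Int),1}$. Then
\[(g,\cdot,\cdot)\;(\ol{g}(b_1),\cdot,\cdot)\;(\run b_1,\cdot,\cdot)\;(\ol{g}(b_2),\cdot,\cdot)\;(0,\cdot,\cdot)\;(\ol{0},\cdot,\cdot)\;(0,\cdot,\cdot)\;(\ol{0},\cdot,\cdot)\]
is a complete $\emptyset,\emptyset$-trace, but no layer-$0$ body for $b_1$ can produce $\ol{g}(b_2)$.

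The missing idea is that the layer-$0$ restriction on box contents is purely syntactic. A cell of layer-$0$ function type may hold a layer-$1$ function, because assignment and heap typing allow layer $1$. So for $b\in\BNames_{\Box(x_1:T_1,\ldots,x_n:T_n\vd T),1}$, take
\[\gamma(b)=\bx((!k_b)\,x_1\cdots x_n),\qquad k_b:\rf(T_1\ra\cdots\ra T_n\ra T),\]
using $\Unit\ra T$ when $n=0$. The cell $k_b$ holds your counter-dispatching handler for $b$, written as ordinary layer-$1$ code. That handler can question any name, build boxes, and read or write box-typed shared locations.

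You already use exactly this function-typed indirection in the other direction, with your thunks for the other side's boxes. Applying it to your own boxes makes the unjustified claim unnecessary, and the rest of your counter construction goes through. The paper's proof intends the same delegation through the reprogrammable cell $\pbr{j}$. However, its $\bx(\letbox{u}{!\pbr{j}}{u^{id}})$ is not typable at layer $0$ as written, since Letbox exists only at layer $1$, and it needs this same repair.

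\textbf{Second, your polarities are reversed relative to the statement.} $\CC$ is passive with $\phi=N_P$, so its own moves are the Player moves of $\tr$. It must therefore:
\begin{itemize}
\item replay the Player moves and diverge on any unexpected Opponent move;
\item store the names and locations introduced by Opponent;
\item give behaviours both to the names it introduces and to the names in $N_P$, which you never mention.
\end{itemize}
Opponent may question names in $N_P$ from the very first move, in which case $\gamma$, not $K$, handles it. What you describe is the construction for $\ol{\tr}$, the orientation used in the completeness proof; swapping roles throughout fixes this.

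With both repairs, your single counter with case dispatch is a legitimate alternative to the paper's backward induction. The paper instead reprograms per-name and per-question cells ($\pfr{j}$, $\pbr{j}$, $\pqr{j}$) via $\mathit{setauxloc}$ at every step. Your version needs only one piece of mutable control state. The paper's version makes each intermediate configuration explicit, which is what its induction uses.
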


\begin{proof_sketch}
    Suppose $\tr\ =\ \act_1\dots\act_n$,
    and let $\tr_i\ =\ \act_i\dots\act_n$.
    We specify the configurations $\CC_i$
    such that $\Tr{\CC_i}=\{\tr_i\}$ up to renaming
    by backward induction on $i$.
    We create references for all the names,
    and set the references appropriately at each step
    so that only the designated actions can be taken.
    For P actions, $M_i$ should set the locations according to $\chi_i$
    and generate the corresponding player action.
    For O actions, the popped evaluation context should assert that the action is received as expected,
    assert that the shared locations are updated as expected,
    set the stored values to be the same as $h_{i+1}$,
    and return $M_{i+1}$.
\end{proof_sketch}

\begin{theorem}[Completeness]\label{thm: completeness}
    Given two terms $\spgv{i} M_1,M_2:T$,
    if $M_1\lsm M_2$,
    then $\Tr{M_1}\subseteq\Tr{M_2}$.
\end{theorem}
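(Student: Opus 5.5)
We prove completeness in the standard way for operational game semantics: from a trace of $M_1$ we build an evaluation context that realises exactly the dual of that trace, and then transfer termination from $M_1$ to $M_2$ through Correctness (\Cref{thm:correctness}).

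Assume $M_1\lsm M_2$ and take $(\eta,\rho,\chi,\tr)\in\Tr{M_1}$. By \Cref{thm: CIU} we also have $M_1\lsmc M_2$. Since $\tr\in\Tr{\cconf{M_1}{\eta,\rho,\chi}}$ and this configuration has an empty stack, $\tr$ is a sequence of actions beginning with a Player action and emptying the stack. We then form an extended trace for the context side. The context configuration of Correctness has one evaluation context on its stack, and the names $\nu(\eta)\uplus\nu(\rho)\uplus\nu(\chi)$ count as introduced by it. Hence $\ol{\tr}$, followed by one final (PA) action $\act$ answering the initial pending question, should be a complete $\emptyset,N_P$-trace with $N_P=\nu(\eta)\uplus\nu(\rho)\uplus\nu(\chi)$. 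The action $\act$ can be taken to be a trivial answer whose heap component copies the last heap in $\tr$, extended as required by condition (vi).

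Checking this completeness claim is the first real step: alternation, freshness of names, well-typedness of answers, and the closure condition on $\dom{\chi}$ must all be read off from the LTS rules. The bracketing condition requires a small invariant relating stack size to the counts of questions and answers.

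Next we apply Definability (\Cref{lem: definability}) to $\ol{\tr}\,\act$. This yields a passive configuration $\CC=(\Sigma,h,\gamma,N_P,\cdot:(\Sigma\vd K:\dots),\theta)$ with $\Tr{\CC}=\{\ol{\tr}\,\act\}$ up to renaming. The main obstacle is to recognise $\CC$ as a context configuration $\cconf{K,\sigma,\delta,h}{\eta,\gamma_1,\rho,\gamma_2,\chi,\gamma_3}$, i.e.\ to extract genuine substitutions and a heap. We do this by reading off from $\gamma$ the values assigned to the names in $N_P$. Each $b\in\nu(\eta)$ with $\eta(u)=\#b$ has $\gamma(b)=\bx N_u$, and we set $\sigma(u)=N_u$. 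For $x$ with $\rho(x)$ a name $n$ we set $\delta(x)=\gamma(n)$, and $\delta(x)=\rho(x)$ otherwise. The heap $h$ must restrict, via $\rst{}{}$, to a heap that $\chi$ abstracts. If the construction in Definability does not literally produce this shape, we revisit its proof sketch: the configuration is built with fresh references for names and $\gamma$ maps each name to a value of its type. Thus $\gamma$ restricted to $N_P$ supplies exactly these closing values, and private locations can be taken disjoint from $\theta$.

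With this in place, Correctness applied to $M_1$ gives $(K[M_1[\sigma][\delta]],h)\Da$, witnessed by $\tr$. By $M_1\lsmc M_2$ we get $(K[M_2[\sigma][\delta]],h)\Da$. Correctness applied to $M_2$ then gives a trace $\tr'$ with $\cconf{K,\sigma,\delta,h}{\dots}\mid\cconf{M_2}{\eta,\rho,\chi}\da^{\tr'}$. The context side can only perform $\ol{\tr}\,\act$ and its prefixes, up to renaming preserving $N_P$. Hence $\tr'$ is $\tr$ up to such a renaming, and so $\tr\in\Tr{\cconf{M_2}{\eta,\rho,\chi}}$. Renamings of non-$N_P$ names do not matter, because trace sets are closed under renaming fresh names (equivariance of the LTS). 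Therefore $(\eta,\rho,\chi,\tr)\in\Tr{M_2}$.
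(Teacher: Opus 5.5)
Your proposal is correct and follows essentially the same route as the paper's proof. Both dualise $\tr$ and append a final (PA) to obtain a complete $\emptyset,N_P$-trace with $N_P=\nu(\eta)\uplus\nu(\rho)\uplus\nu(\chi)$, apply Definability, read $\sigma$, $\delta$ and the heap off $\gamma$ to recognise the result as a context configuration, and apply Correctness twice with the preorder in between. The only differences are presentational: you pass explicitly through $\lsmc$ via \Cref{thm: CIU}, and you spell out the case analysis and equivariance-under-renaming argument showing that the $M_2$-trace is $\tr$ up to renaming, both of which the paper leaves implicit.
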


\begin{theorem}[Full Abstraction]\label{thm:fullAbstraction}
    Given two terms $\spgv{i} M_1,M_2:T$,
    $M_1\lsm M_2$
    iff $\Tr{M_1}\subseteq\Tr{M_2}$.
\end{theorem}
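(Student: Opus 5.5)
The plan is to obtain the Full Abstraction theorem directly by combining the two directions already stated. The ``if'' direction, $\Tr{M_1}\subseteq\Tr{M_2}$ implies $M_1\lsm M_2$, is exactly \Cref{thm: soundness}. The ``only if'' direction, $M_1\lsm M_2$ implies $\Tr{M_1}\subseteq\Tr{M_2}$, is exactly \Cref{thm: completeness}. So at the level of this statement the proof is a one-line conjunction. The real content lies in the two ingredients, and I would organise the argument around them as follows.

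For soundness, I would first move from $\lsm$ to $\lsmc$ via \Cref{thm: CIU}, so it suffices to fix $\sigma,\delta,K,h$ with $(K[M_1[\sigma][\delta]],h)\Da$. Next I choose abstractions $(\eta,\gamma)$, $(\rho,\gamma')$ and $(\chi,\gamma'')$ as in \Cref{thm:correctness}. The forward direction of Correctness then yields a trace $\tr$ with $\cconf{K,\sigma,\delta,h}{\ldots}\mid\cconf{M_1}{\eta,\rho,\chi}\da^{\tr}$. Hence $(\eta,\rho,\chi,\tr)$, possibly extended by the final (PA) action, lies in $\Tr{M_1}\subseteq\Tr{M_2}$. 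The backward direction of Correctness, applied with the same abstraction data, then gives $(K[M_2[\sigma][\delta]],h)\Da$.

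For completeness, I take $(\eta,\rho,\chi,\tr)\in\Tr{M_1}$. The trace $\tr$, taken from the program side, is such that $\ol{\tr}$ (appropriately closed with a final answer) is a complete $\emptyset,N_P$-trace, where $N_P=\nu(\eta)\uplus\nu(\rho)\uplus\nu(\chi)$. \Cref{lem: definability} then yields a passive configuration $\CC$ with $\Tr{\CC}=\{\ol{\tr}\}$ up to renaming.

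The main obstacle is in this completeness step: $\CC$ must be read back as a genuine context configuration $\cconf{K,\sigma,\delta,h}{\ldots}$. That means $\gamma$ must be presented as coming from a global substitution $\sigma$, whose boxes $\bx\sigma(u)$ are abstracted by the names in $\eta$, together with a local substitution $\delta$ and a heap $h$. I would handle this by reusing the values that Definability assigns to the names in $N_P$: boxes for names appearing in $\eta$, functions for names in $\rho$, and stored values for $\chi$, with the locations shared consistently.

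Once this identification is made, Correctness gives $(K[M_1[\sigma][\delta]],h)\Da$. The hypothesis $M_1\lsm M_2$, again via \Cref{thm: CIU}, gives $(K[M_2[\sigma][\delta]],h)\Da$. Correctness then produces a trace shared between $\CC$ and $\cconf{M_2}{\eta,\rho,\chi}$. Since $\CC$ admits only $\ol{\tr}$ up to renamings preserving $N_P$, that shared trace must be $\tr$, so $(\eta,\rho,\chi,\tr)\in\Tr{M_2}$.
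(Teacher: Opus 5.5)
Your proposal is correct and follows the paper's route: the paper proves Full Abstraction by citing \Cref{thm: CIU}, \Cref{thm: soundness} and \Cref{thm: completeness}, and those ingredients are proved as you sketch, via Correctness for soundness and via Definability for completeness, with the definable configuration read back as a context configuration using $\sigma(u)=\eta(u)[\gamma]$ and $\delta(x)=\rho(x)[\gamma]$. One cosmetic fix: Definability yields $\Tr{\CC}=\{\ol{\tr}\;\act\}$ for a final (PA) action $\act$, not $\{\ol{\tr}\}$, and that is the form you need when you match the trace shared via Correctness back to $\tr$.
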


If we consider the fragment of Contextual MetaML that does not contain global variables,
box terms, or box types,
the resulting language is essentially simply-typed ML with higher-order references,
which we shall call HOS \cite{dreyerImpactHigherorderState2012,jaberCompleteTraceModels2021} here.
For an HOS term,
its trace semantics does not involve $\BNames$ or (QB) rules,
and coincides with its trace semantics in HOS as given by Laird \cite{lairdFullyAbstractTrace2007}.
Therefore, we have the following result.

\begin{theorem}[Conservative extension]\label{thm:conservative_extension}
    Contextual MetaML is a conservative extension of HOS with respect to contextual equivalence.
    In other words, two HOS terms are equivalent in HOS contexts
    iff they are equivalent in Contextual MetaML contexts.
\end{theorem}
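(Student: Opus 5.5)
The plan is to derive \Cref{thm:conservative_extension} from \Cref{thm:fullAbstraction} together with Laird's full abstraction result for HOS \cite{lairdFullyAbstractTrace2007}, using the trace semantics as the bridge. Fix two HOS terms $\Sigma;\cdot;\Gamma\vd_0 M_1,M_2:T$. These contain no global variables, no box terms and no box types, so in particular $\Psi=\cdot$ and every type in $\Gamma$ and $T$ is box-free. Write $\lsm_{\mathrm{HOS}}$ for the contextual preorder restricted to HOS contexts and heaps. It suffices to show that $M_1\lsm_{\mathrm{HOS}} M_2$ iff $M_1\lsm M_2$; equivalence then follows by symmetry.

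The direction from Contextual MetaML to HOS is immediate. Every HOS context and HOS heap is also a well-typed Contextual MetaML context and heap, and the reduction relation of \Cref{fig: Operational Semantics} restricted to box-free terms is exactly the HOS operational semantics. Hence $M_1\lsm M_2$ implies $M_1\lsm_{\mathrm{HOS}} M_2$.

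For the converse, assume $M_1\lsm_{\mathrm{HOS}} M_2$. By \Cref{thm:fullAbstraction} it is enough to show $\Tr{M_1}\subseteq\Tr{M_2}$. The central step is to observe that for an HOS term every element of $\Tr{M}$ is an HOS trace. Since $\Psi=\cdot$, the abstraction $\eta$ is empty. Every value in $\rho$ and in $\chi$ has a box-free type. Here one has to check that $\Sigma'\supseteq\Sigma$ in \Cref{def:initial_configuration1} may contain locations of box type, but those locations are only included in $\chi$ if they are reachable, and $\dom{\chi}$ is determined by the shared locations. I would handle this by noting that the restriction $\rst{(\Sigma,h)}{\theta}$ only follows locations reachable from $\mu(\rho)$ and $\mu(A)$, and box-free types cannot reach box-typed locations.

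Next, by induction along the LTS of \Cref{fig:Program LTS}, I would maintain the invariant that every term, stacked evaluation context, answer type and name ever appearing is box-free. Under this invariant the rules (PQB) and (OQB) never fire, (P$\tau$) never uses the (Letbox) rules, and the remaining rules (PA), (PQF), (OA), (OQF) coincide verbatim with Laird's HOS LTS. Hence $\Tr{M}$ equals the HOS trace semantics of $M$.

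Laird's theorem gives that $M_1\lsm_{\mathrm{HOS}} M_2$ implies inclusion of the HOS trace sets. This yields $\Tr{M_1}\subseteq\Tr{M_2}$, and \Cref{thm:fullAbstraction} then gives $M_1\lsm M_2$.

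I expect the main obstacle to be the invariant step. The bookkeeping has to confirm that Opponent can never inject box-typed material into an HOS interaction. The danger lies in O-actions: in (OA), (OQF) and the heap update $\chi+h$, Opponent supplies abstract values $A$ and heaps $\chi$. Their types must be those dictated by the pending question or by $\Sigma$ restricted to shared locations, and all of these are box-free by the invariant. If this argument proves too delicate, the fallback is to make the HOS trace model the primary object and directly rerun the definability argument of \Cref{lem: definability}, checking that the contexts it constructs for box-free traces are HOS contexts. The construction uses only references, functions and assertions, so it stays within HOS. That check gives completeness in HOS, and soundness in HOS comes from \Cref{thm:correctness}.
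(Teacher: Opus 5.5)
Your proposal follows the same route as the paper. The Contextual MetaML-to-HOS direction is immediate, and for the converse you combine Laird's full abstraction with \Cref{thm:fullAbstraction}, using the fact that HOS terms generate only box-free traces which coincide with Laird's. One step needs tightening, and the paper glosses over it too: by \Cref{def:initial_configuration1} and rule (Heap), the initial $\chi$ must cover all of $\Sigma'$, with $\theta=\dom{\chi}$, so box-typed locations can be shared from the start and your reachability argument does not exclude them. This is harmless if you prove the converse directly through \Cref{thm:correctness}, as in the proof of \Cref{thm: soundness}: there $\chi$ abstracts a heap restricted to locations reachable from box-free ones, so it is itself box-free.
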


\section{Examples}\label{sec: examples}
\begin{example}[Switching letbox order]\label{exm: switch letbox}
Let typing context $\Gamma_1$ be
\[x:\Box(\cdot\vd\Int),y:\Box(\cdot\vd\Int)\]
and terms $M_1, M_2$ be
\[\Gamma_1\vd_1 \letbox{u}{x}{\letbox{v}{y}{u;v}}:\Int\]
\[\Gamma_1\vd_1 \letbox{v}{y}{\letbox{u}{x}{u;v}}:\Int\]
respectively. Let $\rho=b_1/x,b_2/y$, $\eta=\cdot$, and $\chi=\cdot$, then
\[\begin{array}{rl}
\CC_{M_1}^{\eta,\rho,\chi} = & (\letbox{u}{b_1}{\letbox{v}{b_2}{u;v}},1,\Unit,\cdots)\\
\xra{\tau} & (\letbox{v}{b_2}{\#b_1;v},1,\Unit,\cdots)\\
\xra{\tau} & (\#b_1;\#b_2,1,\Unit,\cdots)\\
\end{array}\]
Similarly, $\CC_{M_2}^{\eta,\rho,\chi}\xra{\tau}^*(\#b_1;\#b_2,1,\Unit,\cdots)$.
So $\Tr{M_1}=\Tr{M_2}$, and therefore, $M_1\approx M_2$.

The above equivalence holds because of the call-by-name nature of the Letbox operation.
This is not to be confused with the fact that the argument of letbox is still supplied in a call-by-value way.
For example, let typing context $\Gamma_2$ be
\[x:\Unit\ra\Box(\cdot\vd\Int),y:\Unit\ra\Box(\cdot\vd\Int)\]
and terms $M_3, M_4$ be
\[\Gamma_2\vd_1 \letbox{u}{x()}{\letbox{v}{y()}{u;v}}:\Int\]
\[\Gamma_2\vd_1 \letbox{v}{y()}{\letbox{u}{x()}{u;v}}:\Int\]
respectively.
Let $\rho=f_1/x,f_2/y$, $\eta=\cdot$, and $\chi=\cdot$.
Then $\CC_{M_3}^{\eta,\rho,\chi}\xra{(\ol{f_1()},\cdot,\cdot)}\cdots$,
and $\CC_{M_4}^{\eta,\rho,\chi}\xra{(\ol{f_2()},\cdot,\cdot)}\cdots$.
So $M_3$ and $M_4$ are incomparable.
\end{example}

\begin{example}[Code duplication]\label{exm: code duplication}
Let terms $M_1, M_2, M_3, M_4$ be
\[
\begin{aligned}
x:\Unit\ra\Unit &\vd_1\letbox{u}{\bx x()}{u^{x/x};u^{x/x}}\\
x:\Unit\ra\Unit &\vd_1\letbox{u}{\bx z()}{u^{x/z};u^{x/z}}\\
x:\Unit\ra\Unit &\vd_1 x();x()\\
x:\Unit\ra\Unit &\vd_1 \letin{y}{x()}{y;y}
\end{aligned}
\]
respectively.
Then, intuitively, both $M_1$, $M_2$ and $M_3$ will call $x$ twice,
while $M_4$ will only call $x$ once.
Let $\rho=f_1/x$ and $\chi=\ell_1\mapsto 0$, then both $M_1$, $M_2$, and $M_3$ (but not $M_4$) have the following trace
(we omit $\Sigma$ in the action for brevity):
\[(\ol{f_1}(),\ell_1\mapsto 0)\; ((), \ell_1\mapsto 1)\; (\ol{f_1}(),\ell_1\mapsto 1)\; ((), \ell_1\mapsto 3)\; (\ol{()},\ell_1\mapsto 3)\]
On the other hand, $M_4$ (but not $M_1$, $M_2$ or $M_3$) has the trace
\[(\ol{f_1}(),\ell_1\mapsto 0)\; ((), \ell_1\mapsto 1)\; (\ol{()},\ell_1\mapsto 1)\]
Therefore, $M_1\approx M_2\approx M_3\not\approx M_4$.

When $x()$ is replaced with some long effectful program,
$\mathbf{letbox}$ allows type-safe code duplication.
This is one of the common usages of metaprogramming.
\end{example}

\begin{example}[Box binds local variables]\label{exm: box binding}
    Let $M_1,M_2,M_3,M_4$ be
    \[
    \begin{aligned}
    \vd_1&\la x.\la y.\bx x:\Int\ra\Int\ra\Box(x:\Int,y:\Int\vd\Int) \\
    \vd_1&\la y.\la x.\bx x:\Int\ra\Int\ra\Box(x:\Int,y:\Int\vd\Int) \\
    \vd_1&\la z_1.\la z_2.\bx x:\Int\ra\Int\ra\Box(x:\Int,y:\Int\vd\Int) \\
    \vd_1&\la x.\la y.\bx y:\Int\ra\Int\ra\Box(x:\Int,y:\Int\vd\Int) \\
    \end{aligned}
    \]
    respectively. Then $M_1\approx M_2\approx M_3$, and they are incomparable to $M_4$.
    This is because the free variables inside $\mathbf{box}$ are bound by $\mathbf{box}$ and revealed to the context via the type of the term.
    They are not bound by the preceding lambda abstractions,
    and will not be $\alpha$-converted when we rename those lambda abstractions.
    For example, the trace
    \[(\ol{f_1},\cdot,\cdot)\; (f_1(1),\cdot,\cdot)\; (\ol{f_2},\cdot,\cdot)\; (f_2(2),\cdot,\cdot)\;
    (\ol{b_1},\cdot,\cdot)\; (\run b_1^{3/x,4/y},\cdot,\cdot)\; (\ol{3},\cdot,\cdot)\]
    belongs to $\Tr{M_1}$, $\Tr{M_2}$, and $\Tr{M_3}$, while the trace
    \[(\ol{f_1},\cdot,\cdot)\; (f_1(1),\cdot,\cdot)\; (\ol{f_2},\cdot,\cdot)\; (f_2(2),\cdot,\cdot)\;
    (\ol{b_1},\cdot,\cdot)\; (\run b_1^{3/x,4/y},\cdot,\cdot)\; (\ol{4},\cdot,\cdot)\]
    belongs to $\Tr{M_4}$.
\end{example}

Below we use the imperative power function given by Calcagno et al. \cite{calcagnoClosedTypesSafe2003}
to demonstrate our model's ability to reason about programs with effectful types.
Here the transformed term is not equivalent to the original term
as computation related to $n$ has been brought forward.
However, the $\eta$-expansion of the transformed term is equivalent to the original term,
so the transformation is faithful in the sense that it never gives different results.
\begin{example}[Correctness of staging the imperative power function]\label{exm: correctness staging}
    Let $\power$ be the following term:
    \[
    \begin{aligned}
        &
        \begin{aligned}
            \vd_1\rec{f}{n}{\la xy.&\mathbf{if}\ n=0\ \mathbf{then}\ {y:=1}\\
            &\mathbf{else}\ (f\ (n-1)\ x\ y;y:=!y*x)}
        \end{aligned}\\
        &\colon\Int\ra \Int\ra\rf \Int\ra \Unit
    \end{aligned}
    \]

    Let $\powerstagedgen$ be the following term:
    \[
    \begin{aligned}
    &
    \begin{aligned}
        \vd_1\mathbf{rec\ }f(n).\la xy.&\mathbf{if}\ n=0\ \mathbf{then}\ \letbox{u}{y}{\bx (u^{y/y}:=1)}\\
        &
        \begin{aligned}
            \mathbf{else}\ &\letbox{u}{f\ (n-1)\ x\ y}\\
            &\letbox{v}{x}\\
            &\letbox{w}{y}\\
            &\bx (u^{x/x,y/y};w^{y/y}:=!w^{y/y}*v^{x/x})
        \end{aligned}
    \end{aligned}\\
    &
    \begin{aligned}
        \colon&\Int\ra \Box(x:\Int\vd\Int)\ra\Box(y:\rf\Int\vd\rf\Int)\\
        &\ra\Box(x:\Int,y:\rf\Int\vd\Unit)
    \end{aligned}
    \end{aligned}
    \]
    Let $\powerstaged$ be the following term:
    \[
    \begin{aligned}
    &\vd_1 \la n.\letbox{u}{\powerstagedgen\ n\ (\bx x)\ (\bx y)}{\la xy.u^{x/x,y/y}} \\
    &\colon\Int\ra\Int\ra\rf\Int\ra\Unit
    \end{aligned}
    \]
    Then $\powerstaged\lnapprox \power\approx \la nxy.\powerstaged\ n\ x\ y$.
\end{example}

\section{Related Work}\label{sec: related work}

\subparagraph*{Type systems for imperative MetaML}
Purely functional MetaML as proposed by Taha and Sheard \cite{tahaMetaMLMultistageProgramming2000}
was one of the first safe multi-stage programming languages with explicit annotations.
Since then, many type systems for imperative MetaML have been developed,
with a primary focus on designing a safe type system for higher-order references of code.
Notable examples include closed types \cite{calcagnoClosedTypesSafe2003},
refined environment classifiers \cite{kiselyovRefinedEnvironmentClassifiers2016a, isodaTypeSafeCodeGeneration2024},
Mint \cite{westbrookMintJavaMultistage2010},
and $\lambda^\oslash$ \cite{kameyamaShiftingStageStaging2011,kameyamaCombinatorsImpureHygienic2015}.
These type systems view scope extrusion as undesirable,
and prevent a piece of open code such as $\ag{x}$ (or, in our term, $\bx x$)
from being stored in a reference cell at all.
Another line of work such as MetaOCaml \cite{kiselyovMetaOCamlTheoryImplementation2023,kiselyovDesignImplementationBER2014}
and MacoCaml \cite{xieMacoCamlStagingComposable2023,chiangStagedCompilationModule2024}
gives up static type safety
and relies on a dynamic scope extrusion detection when dereferencing a cell at runtime.
In the former approach, open code cannot be stored.
In the latter approach, open code can be stored but cannot be used.
However, as we have shown in this paper,
there are situations where the ability to store and use open code is useful.
And scope extrusion does not necessarily lead to type unsafety,
as long as the type system reasons about free variables in code values properly.

\subparagraph*{Contextual type theory}
Early contextual modal type theories \cite{nanevskiContextualModalType2008,pientkaTypetheoreticFoundationProgramming2008,pientkaTypeTheoryDefining2019}
were originally developed for recursively analyzing open terms via higher-order pattern matching.
In those systems, there is a hard division between
the meta language that analyzes and object language that is analyzed.
Therefore, those type systems do not support running code,
which is essentially promoting the object language into the meta language.
More recently, M\oe bius \cite{jangMoebiusMetaprogrammingUsing2022}
is the first so-called homogeneous contextual type theory
where there is no distinction between the meta language and the object language,
and it supports running code.
However, its operational semantics is not type safe
as there is no guarantee of progress, i.e. a well-typed program might get stuck.
Layered modal type theory \cite{huLayeredModalType2024}
achieves type safety by restricting the depth of nested code,
and is most similar to our work.
There has been an extension of the layered approach to dependent types \cite{huDependentTypeTheory2025},
but this is orthogonal to our concern.
All of the above systems are purely functional;
in contrast, our contribution is the design of such a (layered modal type) system for higher-order references of arbitrary open code.
We drop the ability to pattern-match on code because this would make full abstraction trivial.
Any syntactically different code (e.g. $\bx (1+1)$ and $\bx 2$) could be distinguished by pattern matching,
and there would be no non-trivial equivalences between pieces of code.

\subparagraph*{Full abstraction for MetaML}
Taha \cite{tahaMultistageProgrammingIts1999} presented a few equational rules for purely functional MetaML,
which are further extended by Inoue and Taha \cite{inoueReasoningMultistagePrograms2016}.
Both equational systems are sound for contextual equivalence but not complete.
Berger and Tratt \cite{bergerProgramLogicsHomogeneous2015} gave a Hoare-style program logic
for typed purely functional MetaML that is fully abstract.
Inoue and Taha \cite{inoueReasoningMultistagePrograms2016}\footnote{In that paper, they believed
``imperative hygienic MSP (multi-stage programming) is not yet ready for an investigation like this ...
The foundations for imperative hygienic MSP have not matured to the level of the functional theory that we build upon here.''
Our type system solves their concern.} gave a fully abstract characterization
of contextual equivalence for untyped functional MetaML
using the notion of applicative bisimulation \cite{abramskyFullAbstractionLazy1993,gordonBisimilarityTheoryFunctional1999}.
We are the first to give a fully abstract model for imperative MetaML.

\subparagraph*{Operational game semantics}
Game semantics has been a powerful tool to obtain fully abstract models for a variety of programming languages.
Many such models are categorical constructions \cite{abramskyFullAbstractionPCF2000,hylandFullAbstractionPCF2000,rotStepsTraces2021,levyTransitionSystemsGames2014}
in the sense that programs are interpreted denotationally as morphisms (typically strategies) and types as objects (typically games) of a suitable category.
In contrast, the more recent operational game semantics
\cite{lairdFullyAbstractTrace2007,jaberCompleteTraceModels2021,lassenTypedNormalForm2007,borthelleAbstractCertifiedAccount2025}
provide a more intuitive operationally-driven framework.
Our trace model can be thought of as an extension of Laird's
model for higher-order references~\cite{lairdFullyAbstractTrace2007}.

\section{Future Work}\label{sec: future work}
Contextual MetaML only has two layers.
Although this already covers most practical use cases of generative metaprogramming,
there are scenarios where more than two layers are needed.
Hu and Pientka \cite{huLayeredApproachIntensional2025} presented a type-safe layered modal type theory
generalized to $n$ layers.
This generalization requires more work
than just letting the $i$ in the typing judgment range over natural numbers.
For example, the distinction between local and global variables should be generalized so that 
all variables $x^k$ carry an annotation $0\le k<n$
denoting the layer it is from.
And the contextual types should be of the form $\Box_k(\Gamma\vd_j T)$
that reads ``a term from layer $k$ that describes a code from layer $j$'' for $0\le j<k<n$.
We believe that if we generalize our system to $n$ layers in this way,
we can retain the type safety proof by following Hu and Pientka \cite{huLayeredApproachIntensional2025}.
However, adapting the CIU and full abstraction proofs will require some more work.
In CIU, for example,
we need to define the eligible closing substitutions for a free variable $x^k$
for any $k$.

\subparagraph*{\bf CC-BY statement} For the purpose of Open Access, the author has applied a CC-BY public copyright licence to any Author Accepted Manuscript (AAM) version arising from this submission.

\bibliography{refs-lipics}

\newpage
\appendix
\section{Proofs}
\subsection{Proofs for \Cref{sec: language}}
\begin{lemma}[Lifting]\label{lem: lifting}
    If $\Sigma;\Psi;\Gamma\vd_0 M:T$, then $\Sigma;\Psi;\Gamma\vd_1 M:T$.
\end{lemma}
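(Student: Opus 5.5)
We prove the statement by induction on the derivation of $\Sigma;\Psi;\Gamma\vd_0 M:T$. To make the induction go through, we first establish auxiliary lifting facts for the well-formedness judgments. Specifically, $\vd_0 T:\typ$ implies $\vd_1 T:\typ$, $\vd_0\Sigma:\sctx$ implies $\vd_1\Sigma:\sctx$, and $\vd_0\Gamma:\lctx$ implies $\vd_1\Gamma:\lctx$. The first is a straightforward induction on the derivation of $\vd_0 T:\typ$. The rules (Typ-Unit), (Typ-Int), (Typ-Ref) and (Typ-Arrow) are uniform in $i$, and (Typ-Box) never concludes at layer $0$, so that case is vacuous. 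The facts for store and local contexts then follow by induction on their length, using the type lemma at each (SCtx-Cons) or (LCtx-Cons) step. Global contexts $\vd\Psi:\gctx$ carry no layer index, so nothing needs to be done for them.

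With these facts in hand, we go through the term typing rules of \Cref{fig: Type System}.

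For the axioms (Tm-Unit), (Tm-Int), (Tm-LVar) and (Tm-Loc), we replace each layer-$0$ well-formedness premise with its layer-$1$ counterpart given by the auxiliary lemmas. The side conditions $x:T\in\Gamma$ and $\ell:\rf T\in\Sigma$ are unchanged.

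For the structural rules (Tm-Abs), (Tm-App), (Tm-Rec), (Tm-Arith), (Tm-If), (Tm-Ref), (Tm-Deref) and (Tm-Assign), the premises are term judgments at the same layer $i$. We apply the induction hypothesis to each premise and reapply the same rule at $i=1$.

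For (Tm-GVar), the premise is $\spgv{0}\delta:\Gamma'$, which by (LSubst) unfolds into the typings $\spgv{0}\delta(x):T$ for $x:T\in\Gamma'$. Each of these is a strictly smaller derivation, so the induction hypothesis lifts it to layer $1$, and we then reassemble $\spgv{1}\delta:\Gamma'$. To be safe, we formally induct on the combined height of term and substitution derivations, or equivalently prove the term and substitution statements by simultaneous induction.

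The rules (Tm-Box) and (Tm-LetBox) conclude only at layer $1$, so they cannot end a layer-$0$ derivation and these cases are vacuous. If (Tm-Name) and (Tm-Unbox) from \Cref{sec: trace model} are included, (Tm-Name) is immediate since its side condition is $i'\ge i$. (Tm-Unbox) is handled exactly like (Tm-GVar).

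The only step needing care is this mutual dependence between term typing and local substitution typing in the (Tm-GVar) case. The other point to keep in view is that the box rules never produce layer-$0$ judgments, which is what makes their cases vacuous. Everything else is routine.
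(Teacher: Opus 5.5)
Your proposal is correct and is essentially the paper's argument: the paper simply observes that any layer-$0$ typing derivation is also a valid layer-$1$ derivation, and your induction on derivations (lifting the well-formedness premises, treating (Tm-Box) and (Tm-LetBox) as vacuous, and handling (Tm-GVar) through its substitution premises) spells that observation out. You also note explicitly that the premise $\vd\Psi:\gctx$ carries no layer index and is left untouched, a detail the paper's one-line proof glosses over, since a literal ``replace $0$ by $1$'' reading would wrongly alter the (GCtx-Cons) premises.
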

\begin{proof}
    Any typing derivation at layer $i=0$ is also a valid typing derivation at layer $i=1$.
\end{proof}

\begin{lemma}[Context typing]\label{lem: context typing}
    If $\sccv{} C:(\Psi';\Gamma';i';T')\ra(\Psi;\Gamma;i;T)$,
    and $\Sigma;\Psi';\Gamma'\vd_{i'} M:T'$,
    then $\Sigma;\Psi;\Gamma\vd_i C[M]:T$.
\end{lemma}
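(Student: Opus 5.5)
The plan is to prove \Cref{lem: context typing} by structural induction on the derivation of $\sccv{} C:(\Psi';\Gamma';i';T')\ra(\Psi;\Gamma;i;T)$, or equivalently on the structure of $C$. The omitted rules for context typing are presumably the evident ones: one for each term constructor, obtained from the corresponding rule in \Cref{fig: Type System} by replacing exactly one premise $\spgv{i} M_j:T_j$ with a context judgement $\sccv{} C:(\Psi';\Gamma';i';T')\ra(\Psi;\Gamma;i;T_j)$. The remaining premises stay ordinary term typings.

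\textbf{Base case (Ctx-Hole).} Here $C=\bullet$, $T=T'$, $\Psi'\subseteq\Psi$, $\Gamma'\subseteq\Gamma$ and $i'\le i$. We are given $\Sigma;\Psi';\Gamma'\vd_{i'}M:T$ and must derive $\Sigma;\Psi;\Gamma\vd_i M:T$. This splits into two auxiliary facts.

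\begin{itemize}
\item \emph{Layer.} If $i'=0<i=1$, we apply \Cref{lem: lifting}.
\item \emph{Weakening.} We need a lemma stating that if $\Sigma;\Psi';\Gamma'\vd_i M:T$, and $\Psi\supseteq\Psi'$, $\Gamma\supseteq\Gamma'$ are well formed at layer $i$, then $\Sigma;\Psi;\Gamma\vd_i M:T$. This is proved by a routine induction on the typing derivation.
\end{itemize}

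The weakening induction has two cases needing care.

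\begin{itemize}
\item In (Tm-Box), the premise types the body under its own local context $\Gamma''$, independent of the outer $\Gamma$. So only $\Psi$ is weakened inside, and the side condition $\vd_1\Gamma:\lctx$ holds by assumption.
\item In (Tm-LetBox), (Tm-Abs) and (Tm-Rec), we rename the bound variable to avoid clashes with $\dom{\Psi}$ or $\dom{\Gamma}$. Under the usual $\alpha$-equivalence convention this is harmless.
\end{itemize}

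One side condition must also be checked: $\vd_i\Gamma:\lctx$ at the target layer. I would take this as part of the implicit well-formedness of the context judgement, i.e.\ the (Ctx-Hole) rule presupposes that $\Psi$ and $\Gamma$ are well formed at layer $i$.

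\textbf{Inductive cases.} Take $C=\la x.C_0$ as an example. The context rule gives $\sccv{} C_0:(\Psi';\Gamma';i';T')\ra(\Psi;\Gamma,x:T_1;i;T_2)$. The induction hypothesis yields $\Sigma;\Psi;\Gamma,x:T_1\vd_i C_0[M]:T_2$, and (Tm-Abs) concludes. The cases $C_0N$, $NC_0$, $\oplus$, $\mathbf{if}$, $\mathbf{ref}$, $!$, $:=$ and $\mathbf{rec}$ follow the same pattern: apply the induction hypothesis to the subcontext and reassemble with the matching term rule, using the untouched premises as given.

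The binder cases $\la x.C$, $\rec{y}{x}{C}$, $\bx C$ and $\letbox{u}{M}{C}$ are where the hole's target contexts grow. For these I rely on the context typing rules recording this growth:
\begin{itemize}
\item the body of $\letbox{u}{M}{C}$ has target $(\Psi,u:(\Gamma''\vd T'');\Gamma;1;T)$;
\item $\bx C$ has target $(\Psi;\Gamma'';0;T)$ for the boxed context $\Gamma''$, with the outer layer forced to $1$.
\end{itemize}
Then the induction hypothesis at the inner indices, followed by (Tm-LetBox) or (Tm-Box), gives the result.

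\textbf{Main obstacle.} The crux is the base case, specifically the weakening lemma and how it interacts with layers. Weakening $\Gamma$ at layer $0$ is only legitimate if the added types are layer-$0$ types. I would resolve this by requiring in the hole rule that the target $\Gamma$ is well formed at layer $i$ (and $\Psi$ globally well formed), and by proving the weakening lemma with exactly that hypothesis. The inductive cases are mechanical once the omitted context-typing rules are fixed to mirror \Cref{fig: Type System}.
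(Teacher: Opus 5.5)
Your proposal is correct and takes the same route as the paper, whose proof is just ``by induction on $C$''. Your handling of the (Ctx-Hole) base case (first \Cref{lem: lifting}, then weakening at the target layer $i$) supplies exactly the auxiliary facts that one-liner leaves implicit. You are also right that the (Ctx-Hole) rule as printed must be read as presupposing $\vd_i\Gamma:\lctx$ and $\vd\Psi:\gctx$ for the target contexts: otherwise, e.g.\ $\bullet:(\cdot;\cdot;0;\Int)\ra(\cdot;x:\Box(\cdot\vd\Int);0;\Int)$ is derivable, yet $\nb{1}$ cannot be typed at layer $0$ under $x:\Box(\cdot\vd\Int)$.
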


\begin{proof}
    By induction on $C$.
\end{proof}

\begin{lemma}[Local substitution]\label{lem: local substitution}
    The following statements hold:
    \begin{itemize}
        \item If $\Sigma;\Psi;\Gamma'\vd_i \delta':\Gamma''$
        and $\Sigma;\Psi;\Gamma\vd_i \delta:\Gamma'$,
        then $\Sigma;\Psi;\Gamma\vd_i \delta'[\delta]:\Gamma''$.
        \item If $\Sigma;\Psi;\Gamma'\vd_i M:T$
        and $\Sigma;\Psi;\Gamma\vd_i \delta:\Gamma'$,
        then $\Sigma;\Psi;\Gamma\vd_i M[\delta]:T$.
    \end{itemize}
    
\end{lemma}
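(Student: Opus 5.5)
We prove the two items in order. The first is then used in the (Tm-GVar) case of the second, and the second is used back in the first, so the cleanest way to organize this is a single simultaneous induction on the size of the typing derivation of $\delta'$ (resp.\ $M$).

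\emph{First item.} By (LSubst), $\dom{\delta'}=\dom{\Gamma''}$ and $\Sigma;\Psi;\Gamma'\vd_i \delta'(x):T$ for every $x:T\in\Gamma''$. Since $\delta'[\delta](x)=\delta'(x)[\delta]$, it suffices to apply the second item to each $\delta'(x)$; each such derivation is a strict subderivation of the one for $\delta'$, so the induction hypothesis applies. Reassembling with (LSubst) gives $\Sigma;\Psi;\Gamma\vd_i \delta'[\delta]:\Gamma''$.

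\emph{Second item.} We proceed by induction on the derivation of $\Sigma;\Psi;\Gamma'\vd_i M:T$, with one case per rule of \Cref{fig: Type System}.
\begin{itemize}
\item For (Tm-Unit), (Tm-Int) and (Tm-Loc), $M[\delta]=M$. We re-derive the judgment under $\Gamma$, obtaining $\vd_i\Gamma:\lctx$ by inverting the typing of $\delta$. (When $\Gamma'$ is empty we need a well-formedness side assumption on $\Gamma$, which we take as implicit in the judgment $\Sigma;\Psi;\Gamma\vd_i\delta:\Gamma'$.)
\item For (Tm-LVar), $x[\delta]=\delta(x)$, which has type $T$ directly by (LSubst).
\item For (Tm-GVar), $u^{\delta''}[\delta]=u^{\delta''[\delta]}$. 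We apply the first item, via the induction hypothesis on the subderivation for $\delta''$, and then reapply (Tm-GVar).
\item For (Tm-Abs) and (Tm-Rec), we extend $\delta$ to $\delta,x/x$ (resp.\ $\delta,y/y,x/x$). We must show this is well typed from $\Gamma,x:T_1$ to $\Gamma',x:T_1$. This needs a weakening lemma: if $\Sigma;\Psi;\Gamma\vd_i N:T$ then $\Sigma;\Psi;\Gamma,x:T_1\vd_i N:T$, for fresh $x$. Weakening is proved by a routine induction, after $\alpha$-renaming bound variables so that $x$ is fresh for $\dom{\Gamma}$.
\item The congruence cases (App, Arith, If, Ref, Deref, Assign) are immediate from the induction hypothesis.
\item For (Tm-Box), $(\bx M)[\delta]=\bx M$. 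The premise $\Sigma;\Psi;\Gamma''\vd_0 M:T$ does not mention $\Gamma'$, so we reapply (Tm-Box) with $\vd_1\Gamma:\lctx$.
\item For (Tm-LetBox), we apply the induction hypothesis to $M_1$ directly. For $M_2$, we first weaken $\delta$ to the global context $\Psi,u:(\Gamma''\vd T)$, using weakening on global contexts (again a routine induction), and then apply the induction hypothesis.
\end{itemize}

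The main obstacle is the binder cases. There we need the auxiliary weakening lemmas for local and global contexts, and we must handle variable freshness carefully so that $\delta,x/x$ is well formed. We also need to treat the layer index $i$ uniformly. This is harmless: every rule except (Tm-Box)/(Tm-LetBox) is parametric in $i$, and those two rules fix $i=1$ on both sides, so no call to \Cref{lem: lifting} is needed.
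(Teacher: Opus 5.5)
Your proposal is correct and follows essentially the same route as the paper: a mutual induction on $\delta'$ and $M$, with the same key cases ($\delta'(x)$ handled via the second item, $u^{\delta''}$ via the first item followed by (Tm-GVar), and $\bx M$ by reapplying (Tm-Box)). You are more explicit than the paper in two places: the binder cases ((Tm-Abs), (Tm-Rec), (Tm-LetBox)), where you supply the needed weakening lemmas for local and global contexts together with $\alpha$-renaming, and the well-formedness of $\Gamma$, which the paper's proof silently assumes.
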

\begin{proof}
    We prove the two statements by mutual structural induction on $M$ and $\delta'$.
    Most cases are straightforward. We show several interesting cases here.
    \begin{itemize}
        \item For non-empty $\delta'$,
        we have $\Sigma;\Psi;\Gamma'\vd_i \delta'(x):\Gamma''(x)$ for all $x\in\dom{\delta'}$.
        As $\delta'(x)$ is structurally smaller than $\delta'$,
        we can use the induction hypothesis on $\delta'(x)$,
        and have $\Sigma;\Psi;\Gamma\vd_i \delta'(x)[\delta]:\Gamma''(x)$.
        Recalling that $\delta'[\delta](x)=\delta'(x)[\delta]$,
        we have $\Sigma;\Psi;\Gamma\vd_i \delta'[\delta](x):\Gamma''(x)$.
        Therefore, $\Sigma;\Psi;\Gamma\vd_i \delta'[\delta]:\Gamma''$.
        \item For $M=u^{\delta''}$,
        we have $\Psi(u)=(\Gamma''\vd T)$
        and $\Sigma;\Psi;\Gamma'\vd_i \delta'':\Gamma''$ by (Tm-GVar).
        As $\delta''$ is structurally smaller than $M$,
        we can use the induction hypothesis on $\delta''$,
        and have $\Sigma;\Psi;\Gamma\vd_i \delta''[\delta]:\Gamma''$.
        By (Tm-GVar) again, we have
        $\Sigma;\Psi;\Gamma\vd_i u^{\delta''[\delta]}:T$.
        This is exactly $\Sigma;\Psi;\Gamma\vd_i M[\delta]:T$.
        \item For $M=\bx M'$, we have $i=1$, $T=\Box(\Gamma''\vd T')$,
        and $\Sigma;\Psi;\Gamma''\vd_0 M':T'$ by (Tm-Box).
        By (Tm-Box) again, we have
        $\Sigma;\Psi;\Gamma\vd_1 \bx M':\Box(\Gamma''\vd T')$.
        This is exactly $\Sigma;\Psi;\Gamma\vd_i M[\delta]:T$.
    \end{itemize}
\end{proof}

\begin{lemma}[Global substitution]\label{lem: global substitution}
    The following statements hold:
    \begin{itemize}
        \item If $\Sigma;\Psi';\Gamma\vd_i \delta:\Gamma'$ and
        $\Sigma;\Psi\vd \sigma:\Psi'$,
        then $\Sigma;\Psi;\Gamma\vd_i \delta[\sigma]:\Gamma'$.
        \item If $\Sigma;\Psi';\Gamma\vd_i M:T$ and
        $\Sigma;\Psi\vd \sigma:\Psi'$,
        then $\Sigma;\Psi;\Gamma\vd_i M[\sigma]:T$.
    \end{itemize}
\end{lemma}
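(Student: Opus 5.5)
We prove the two statements by mutual structural induction on $M$ and $\delta$, mirroring the proof of \Cref{lem: local substitution}. We also invoke \Cref{lem: local substitution} at the one point where a global substitution produces a local one. Most term constructors are routine: the substitution $[\sigma]$ commutes with the constructor, the typing rule is the same on both sides, and we apply the induction hypothesis to each premise. For instance, for $\la x.M'$ we use the hypothesis at context $\Gamma,x:T_1$, which is harmless because $\sigma$ involves no local variables of $\Gamma$. Locations, constants and local variables are unchanged by $[\sigma]$, and the context-formation side conditions carry over, since $\vd\Psi:\gctx$ holds by the well-formedness implicit in $\Sigma;\Psi\vd\sigma:\Psi'$. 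For the substitution statement, $\delta[\sigma](x)=\delta(x)[\sigma]$, and each $\delta(x)$ is structurally smaller, so the induction hypothesis on terms gives $\Sigma;\Psi;\Gamma\vd_i\delta[\sigma]:\Gamma'$.

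The key case is $M=u^{\delta'}$ with $u:(\Gamma''\vd T)\in\Psi'$ and $\Sigma;\Psi';\Gamma\vd_i\delta':\Gamma''$. By the induction hypothesis on $\delta'$ we get $\Sigma;\Psi;\Gamma\vd_i\delta'[\sigma]:\Gamma''$. By (GSubst) we have $\Sigma;\Psi;\Gamma''\vd_0\sigma(u):T$, hence $\Sigma;\Psi;\Gamma''\vd_i\sigma(u):T$ by \Cref{lem: lifting}. Then \Cref{lem: local substitution} yields $\Sigma;\Psi;\Gamma\vd_i\sigma(u)[\delta'[\sigma]]:T$, which is exactly $u^{\delta'}[\sigma]$. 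Nothing circular arises here, because \Cref{lem: local substitution} is an already-proved result and is applied to the given term $\sigma(u)$, not to something requiring the current induction.

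For $\bx M'$ we have $i=1$ and $\Sigma;\Psi';\Gamma''\vd_0 M':T'$. The induction hypothesis at layer $0$ with the same $\sigma$ gives $\Sigma;\Psi;\Gamma''\vd_0 M'[\sigma]:T'$, and (Tm-Box) gives $\Sigma;\Psi;\Gamma\vd_1\bx(M'[\sigma])$. This case is why the statement must be proved for all $i$ and all $\Gamma$ simultaneously.

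The case I expect to be the main obstacle is $\letbox{u}{M_1}{M_2}$, where the substitution is extended to $\sigma,u/u$ under the binder. I would first establish a small weakening lemma for global contexts: if $\Sigma;\Psi;\Gamma\vd_i N:S$ and $u$ is fresh, then $\Sigma;\Psi,u:(\Gamma_0\vd T_0);\Gamma\vd_i N:S$. This follows by routine induction, using that box types are well formed independently of $\Psi$. With it, I would show $\Sigma;\Psi,u:(\Gamma_0\vd T_0)\vd\sigma,u/u:\Psi',u:(\Gamma_0\vd T_0)$. Here $u/u$ means the term $u^{\mathrm{id}}$, typed by (Tm-GVar) with the identity substitution on $\Gamma_0$; we need the definition $u^{\delta}[\sigma,u/u]=u^{\delta[\sigma]}$ and must check that $\mathrm{id}[\delta]=\delta$. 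We also need $\vd_1\Gamma:\lctx$ to hold, which is given by the premises. After this, the induction hypothesis on $M_2$ together with the hypothesis on $M_1$ closes the case via (Tm-LetBox). I would handle the analogous unbox-name case $\#b^{\delta}$ exactly like the $\delta$ case, since $(\#b^\delta)[\sigma]=\#b^{\delta[\sigma]}$.
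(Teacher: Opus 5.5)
Your proposal is correct and follows the paper's proof: the same mutual structural induction, with the key $u^{\delta'}$ case handled exactly as in the paper (induction hypothesis on $\delta'$, then (GSubst), \Cref{lem: lifting} and \Cref{lem: local substitution} applied to $\sigma(u)$). Your treatment of the $\bx$ and $\mathbf{letbox}$ cases fills in details the paper leaves implicit, namely weakening of $\Psi$ and reading $u/u$ as $u^{\mathrm{id}}$ with $\mathrm{id}[\delta]=\delta$. One small correction: $\vd\Psi:\gctx$ is not actually derivable from $\Sigma;\Psi\vd\sigma:\Psi'$ (it holds vacuously when $\Psi'$ is empty), so it should be treated as a standing well-formedness assumption, as the paper implicitly does.
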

\begin{proof}
    We prove the two statements by mutual structural induction on $M$ and $\sigma$.
    Most cases are either straightforward or similar to above. We only show one interesting case.
    \begin{itemize}
        \item For $M=u^{\delta'}$,
        we have $\Psi'(u)=(\Gamma'\vd T)$
        and $\Sigma;\Psi';\Gamma\vd_i \delta':\Gamma'$ by (Tm-GVar).
        As $\delta'$ is structurally smaller than $M$,
        we can use the induction hypothesis on $\delta'$,
        and have $\Sigma;\Psi;\Gamma\vd_i \delta'[\sigma]:\Gamma'$.

        Since $\Sigma;\Psi\vd \sigma:\Psi'$,
        by (GSubst-Cons) we have
        $\Sigma;\Psi;\Gamma'\vd_0\sigma(u):T$.
        By \Cref{lem: lifting,lem: local substitution},
        we have $\Sigma;\Psi;\Gamma\vd_i \sigma(u)[\delta'[\sigma]]:T$.
        This is exactly $\Sigma;\Psi;\Gamma\vd_i M[\sigma]:T$.
    \end{itemize}
\end{proof}

We say a term $M$ is a \textit{redex} if it occurs on the left-hand side of a reduction rule apart from (Ktx).

\begin{theorem}[Preservation, \Cref{thm: preservation}]
    If $\spgv{i} M:T$, $\spgv{} h:\heap$, and $(M,h)\ra(M',h')$,
    then there exists $\Sigma'\supseteq\Sigma$ such that
    $\Sigma';\Psi;\Gamma\vd_i M':T$, and $\Sigma';\Psi;\Gamma\vd h':\heap$.
\end{theorem}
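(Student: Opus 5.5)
We argue by induction on the derivation of $(M,h)\ra(M',h')$, with a case analysis on the last rule used. For every rule except (Ref) we take $\Sigma'=\Sigma$; the heap is either untouched or, for (Assign), updated at an existing location with a value of the right type. For (Ref), $(\rf V,h)\ra(\ell,h\uplus[\ell\mapsto V])$ with $\ell\notin\dom{h}=\dom{\Sigma}$, and we set $\Sigma'=\Sigma,\ell:\rf T'$. Two facts are needed here. First, a weakening lemma for store contexts: any judgment derivable under $\Sigma$ remains derivable under $\Sigma'\supseteq\Sigma$. This is proved by a routine induction on typing derivations, and it gives both the typing of the old heap contents and the typing of $V$ under $\Sigma'$. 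Second, the well-formedness $\vd_i\Sigma':\sctx$, which follows from inverting (Tm-Ref) to get $\vd_i T':\typ$.

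The base (redex) cases proceed by inversion of the typing rules followed by the substitution lemmas.

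For ($\beta$), inversion of (Tm-App) and (Tm-Abs) gives $\Sigma;\Psi;\Gamma,x:T_1\vd_i M:T_2$ and $\spgv{i}V:T_1$. We apply \Cref{lem: local substitution} with the substitution $\delta=\idm{\Gamma},V/x$, which has type $\Gamma,x:T_1$ under $\Gamma$ by (Tm-LVar) on each identity component. This needs the observation that $M[\idm{\Gamma},V/x]=M[V/x]$, i.e.\ that identity components act trivially, including inside explicit substitutions $u^{\delta''}$.

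(Rec) is handled the same way, additionally substituting $\rec{y}{x}{M}$ for $y$.

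(Arith), (If), (Deref) and (Assign) are immediate; (Deref) uses the (Heap) premise that $h(\ell)$ has type $T$.

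For (Letbox), inversion gives $i=1$, $\Sigma;\Psi;\Gamma'\vd_0 M_1:T$ (from (Tm-Box)) and $\Sigma;\Psi,u:(\Gamma'\vd T);\Gamma\vd_1 M_2:T'$. We then apply \Cref{lem: global substitution} with the global substitution $\idm{\Psi},M_1/u$, which satisfies $\Sigma;\Psi\vd\idm{\Psi},M_1/u:\Psi,u:(\Gamma'\vd T)$. The identity components are typed because $v^{\idm{\Gamma''}}$ has type $T''$ under $\Gamma''$ for each $v:(\Gamma''\vd T'')\in\Psi$. Again we must check that $M_2[\idm{\Psi},M_1/u]=M_2[M_1/u]$. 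This relies on $\idm{\Gamma''}[\delta]=\delta$, so that $v^\delta[\idm{\Psi}]=v^{\idm{}[\delta[\idm{\Psi}]]}=v^{\delta}$ by a nested induction.

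The inductive case (Ktx) is handled by an induction on $K$, or equivalently by a decomposition lemma: if $\spgv{i}K[M_1]:T$, then $M_1$ has some type $T_1$ at some layer $i_1\le i$, and replacing $M_1$ by any $M_2$ of that type under an extended $\Sigma'$ still types $K[M_2]$. The layer bookkeeping matters here. For example, in $\letbox{u}{K}{M}$ the hole sits at layer $1$, and the continuation $M$ must be weakened to $\Sigma'$ as well. The induction hypothesis supplies $\Sigma'$ for the subterm, and store weakening lifts the surrounding context.

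We expect the main obstacle to be the identity-substitution lemmas ($M[\idm{}]=M$ for local and for global substitutions) and the interaction between the explicit substitution $u^\delta$ and the composition $\delta[\delta']$. Inside $\bx M$ local substitutions are discarded while global ones go through, so the statements must be proved by mutual induction over terms and substitutions, in the style of \Cref{lem: local substitution}. We would prove these first, as auxiliary lemmas, before the main case analysis.
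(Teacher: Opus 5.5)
Your proposal is correct and follows essentially the paper's proof: isolate the redex under an evaluation context, then handle ($\beta$) and (Rec) with the local substitution lemma, (Ref) by extending $\Sigma$ with the fresh location, and (Letbox) with the global substitution lemma. The paper leaves implicit the bookkeeping you add, namely store weakening to retype the surrounding context and heap, and padding $[V/x]$ and $[M_1/u]$ with identity components so that the total-substitution lemmas apply; one caveat, shared with the paper, is that your weakening lemma only works if store well-formedness in the leaf typing rules is read at layer $1$, because taken literally $\vd_0 \Sigma':\sctx$ fails once $\Sigma'$ contains a box-typed location, which would block retyping any layer-$0$ box contents.
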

\begin{proof}
    Since $(M,h)\ra(M',h')$, we have $M=K[M_1]$
    for some evaluation context $\sccv{} K:(\Psi_1;\Gamma_1;i_1;T_1)\ra(\Psi;\Gamma;i;T)$
    and redex $\Sigma;\Psi_1;\Gamma_1\vd_{i_1}M_1:T_1$,
    and $(M_1,h)\ra(M_1',h')$ by a reduction rule apart from (Ktx).
    It suffices to show that there exists $\Sigma'\supseteq\Sigma$ such that
    $\Sigma';\Psi_1;\Gamma_1\vd_i M_1':T$
    and $\Sigma';\Psi_1;\Gamma_1\vd h':\heap$.
    We prove by case analysis on the reduction rule used.
    Most cases are straightforward,
    so we only show several interesting cases here.
    \begin{itemize}
        \item ($\beta$). Then $M_1=(\la x.M_2)V$, $M_1'=M_2[V/x]$, and $h'=h$.
        By (Tm-App) and (Tm-Abs),
        we have $\Sigma;\Psi_1;\Gamma_1,x:T_2\vd_{i_1}M_2:T_1$
        and $\Sigma;\Psi_1;\Gamma_1\vd_{i_1}V:T_2$.
        By \Cref{lem: local substitution},
        we have $\Sigma;\Psi_1;\Gamma_1\vd_{i_1}M_2[V/x]:T_1$.
        \item (Ref). Then $M_1=\rf V$, $M_1'=\ell$, $T_1=\rf T_2$, and $h'=h,\ell\mapsto V$
        for some fresh $\ell$.
        Let $\Sigma'=\Sigma,\ell:T_1$.
        By (Tm-Loc), we have $\Sigma';\Psi_1;\Gamma_1\vd_{i_1}M_1':T_1$.
        By (Heap), we have $\Sigma';\Psi_1;\Gamma_1\vd h':\heap$.
        \item (LetBox). Then $i_1=1$, $M_1=\letbox{u}{\bx M_2}{M_3}$,
        $M_1'=M_3[M_2/u]$, and $h'=h$.
        By (Tm-LetBox),
        we have $\Sigma;\Psi_1;\Gamma_1\vd_{1}\bx M_2:\Box(\Gamma_2\vd T_2)$
        and $\Sigma;\Psi_1,u:(\Gamma_2\vd T_2);\Gamma_1\vd_{1}M_3:T_1$.
        Using (Tm-Box) on the former,
        we have $\Sigma;\Psi_1;\Gamma_2\vd_{0}M_2:T_2$.

        By \Cref{lem: global substitution},
        we have $\Sigma;\Psi_1;\Gamma_1\vd_{1}M_3[M_2/u]:T_1$.
    \end{itemize}
\end{proof}

\begin{theorem}[Progress, \Cref{thm: progress}]
    If $\sccv{i} M:T$ and $\sccv{} h:\heap$,
    then either $M$ is a value, or there exist $M'$, $h'$ such that
    $(M,h)\ra(M',h')$.
\end{theorem}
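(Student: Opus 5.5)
The proof goes by structural induction on the typing derivation of $\sccv{i} M:T$, keeping the heap $\sccv{} h:\heap$ fixed throughout. Because both contexts $\Psi$ and $\Gamma$ are empty, the rules (Tm-LVar) and (Tm-GVar) cannot end the derivation: a closed term contains no free $x$ or $u^\delta$. The term-level rules then fall into three groups.

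\textbf{Values.} The rules (Tm-Unit), (Tm-Int), (Tm-Loc), (Tm-Abs), (Tm-Rec) and (Tm-Box) immediately yield values. Note that $\bx M$ is a value whatever $M$ is, so no induction is needed under a Box. This is exactly why the layered restriction gives progress cheaply: evaluation never goes under $\mathbf{box}$.

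\textbf{Elimination forms.} For each of the constructors $M_1M_2$, $M_1\oplus M_2$, $\iif{M_1}{M_2}{M_3}$, $\rf M$, $!M$, $M_1:=M_2$ and $\letbox{u}{M_1}{M_2}$, I would first apply the induction hypothesis to the leftmost subterm sitting in evaluation position. These are $M_1$ (or $M$), and then $M_2$ for application, arithmetic and assignment. If that subterm steps, I lift the step with (Ktx) using the corresponding evaluation context: $KM$, $VK$, $K\oplus M$, $V\oplus K$, $\iif{K}{M_2}{M_3}$, $\rf K$, $!K$, $K:=M$, $V:=K$, or $\letbox{u}{K}{M}$.

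If the relevant subterms are all values, I invoke \Cref{lem: canonical forms} on their closed typings to fix their shape, and then exhibit the redex:
\begin{itemize}
\item application: an abstraction gives ($\beta$) and a recursive function gives (Rec);
\item arithmetic: two numerals give (Arith);
\item conditional: a numeral guard gives (If true) or (If false) according to whether $n\neq 0$;
\item reference creation: (Ref), choosing any $\ell\notin\dom{h}$, which exists because $\mathit{Loc}$ is infinite;
\item dereference and assignment: a location $\ell$ of type $\rf T$ gives (Deref) or (Assign). The side condition $\ell\in\dom{h}$ holds because (Tm-Loc) requires $\ell:\rf T\in\Sigma$, and (Heap) gives $\dom{\Sigma}=\dom{h}$;
\item letbox: the scrutinee $M_1$ has type $\Box(\Gamma'\vd T)$, so by canonical forms it is $\bx M'$, and (Letbox) applies.
\end{itemize}
In the letbox case we never need to analyse the body $M_2$. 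It is typed with a nonempty $\Psi$, so the induction hypothesis would not even apply to it, and it is not in evaluation position anyway.

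\textbf{Main obstacle.} I expect the real work to be the bookkeeping rather than any single case. First, the induction hypothesis must be applicable to subterms, which means checking that the premises of each rule keep $\Psi$ and $\Gamma$ empty. This holds for every elimination-form premise except the letbox body and the bodies under binders, and none of those are ever required. Second, the layer annotations must match those expected by \Cref{lem: canonical forms}, for example the scrutinee of a letbox is typed at layer $1$. Third, if names are in scope (\Cref{sec: trace model}), two further cases arise. A function name in function position is a value but not a $\lambda$ or $\mathbf{rec}$, and $\#b^\delta$ is a redex with no reduction rule. Canonical forms as stated excludes names, so I would state progress for name-free terms only, as the theorem is written. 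Otherwise the case analysis is routine.
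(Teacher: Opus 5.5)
Your proposal is correct and follows the paper's proof. Like the paper, you proceed by induction on the structure of the closed term: you rule out $x$ and $u^\delta$ because both variable contexts are empty, and treat $\bx M$ as a value. In each elimination case you either lift a step of the evaluated subterm via (Ktx) or use \Cref{lem: canonical forms} to expose a redex, and in the letbox case this yields $\bx M'$ so (Letbox) fires. Your extra bookkeeping (a fresh location for (Ref), $\dom{\Sigma}=\dom{h}$ for (Deref)/(Assign), and restricting to name-free terms) just fills in cases the paper leaves as routine.
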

\begin{proof}
    We prove by structural induction on $M$, and only list several interesting cases here.
    Notice that since the variable contexts are empty,
    $M$ cannot be $x$ or $u^\delta$.
    \begin{itemize}
        \item $M=\bx M_1$. Then $M$ is a value.
        \item $M=\letbox{u}{M_1}{M_2}$. By the induction hypothesis,
        either $M_1$ is a value, or there exist $M_1'$, $h'$ such that
        $(M_1,h)\ra(M_1',h')$.

        If $M_1$ is a value, by \Cref{lem: canonical forms},
        we have $M_1=\bx M_3$. Use the (LetBox) rule,
        we have $(M,h)\ra(M_2[M_3/u],h)$.

        If $(M_1,h)\ra(M_1',h')$, then we take $K=\letbox{u}{\bullet}{M_2}$,
        and have $(M,h)\ra(\letbox{u}{M_1'}{M_2},h')$ by the (Ktx) rule.
    \end{itemize}
\end{proof}

\subsection{Proofs for \Cref{sec: ciu}}

\begin{lemma}[CIU Basics, \Cref{lem: CIU basics} Part 1]
    Given $\spgv{i} M_1,M_2:t_1$, if $M_1\lsmc M_2$,
    then $C[M_1]\lsmc C[M_2]$ whenever the terms are typable and $C$ is one of the following forms:
    \begin{itemize}
        \renewcommand\labelitemi{--}
        \item $\bullet M$, \quad $\bullet \oplus M$, \quad $\iif{\bullet}{M}{M'}$, \quad
              $\rf \bullet$, \quad $!\bullet$, \quad $\bullet := M$, \quad
              $\letbox{u}{\bullet}{M}$,
        \item $M\bullet$, \quad $M\oplus\bullet$,
        \item $\iif{M}{\bullet}{M'}$, \quad $\iif{M}{M'}{\bullet}$,
    \end{itemize}
\end{lemma}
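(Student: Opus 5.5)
Fix arbitrary closing data for $C[M_j]$: a global substitution $\Sigma';\cdot\vd\sigma:\Psi$, a local substitution $\spccv{i}\delta:\Gamma$, an evaluation context $K$ and a heap $h$. We must show that $(K[C[M_1][\sigma][\delta]],h)\Da$ implies $(K[C[M_2][\sigma][\delta]],h)\Da$. For every listed form of $C$, the hole is not under a binder of local or global variables. The one exception is $\letbox{u}{\bullet}{M}$, where the hole sits in the bound position, which is not under the binder. Hence the substitutions commute with plugging: $C[M_j][\sigma][\delta]=C^{\sigma\delta}[M_j[\sigma][\delta]]$, where $C^{\sigma\delta}$ is $C$ with $\sigma,\delta$ applied to its other subterms. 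This follows by unfolding the definitions of substitution in \Cref{fig: Operational Semantics}. The task therefore reduces to showing that the closed instances $M_j[\sigma][\delta]$ are CIU-related inside the closed context $K[C^{\sigma\delta}]$.

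\emph{First group} ($\bullet M$, $\bullet\oplus M$, $\iif{\bullet}{M}{M'}$, $\rf\bullet$, $!\bullet$, $\bullet:=M$, $\letbox{u}{\bullet}{M}$). Here $C^{\sigma\delta}$ is itself an evaluation-context frame. So $K'=K[C^{\sigma\delta}]$ is an evaluation context, and it is typed $(\cdot;\cdot;i;T)\ra(\cdot;\cdot;i';T')$ by \Cref{lem: context typing} together with \Cref{lem: local substitution,lem: global substitution}. The hypothesis $M_1\lsmc M_2$, instantiated with $\sigma,\delta,K',h$, then gives the result directly.

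\emph{Second group} ($M\bullet$, $M\oplus\bullet$). The frame $N\bullet$, with $N=M[\sigma][\delta]$, is an evaluation context only once $N$ is a value. Suppose $(K[N\,(M_1[\sigma][\delta])],h)\Da$. Since evaluation is deterministic up to the choice of fresh locations, the reduction first evaluates $N$ inside $K[\bullet\,(M_1[\sigma][\delta])]$. The term $N$ cannot diverge or get stuck along the way, by \Cref{thm: progress,thm: preservation}, since the whole term converges. So $(N,h)\ra^*(V,h_1)$ with $\Sigma_1\supseteq\Sigma'$, and this prefix of the reduction is independent of what fills the argument position. The same prefix therefore applies with $M_2$ in place of $M_1$. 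We then instantiate the hypothesis with the extended store context $\Sigma_1$, the evaluation context $K[V\bullet]$ and the heap $h_1$; this requires $\Sigma_1\supseteq\Sigma$, which holds. That gives convergence of $(K[V\,(M_2[\sigma][\delta])],h_1)$. The key observation here is that the CIU definition quantifies over all heaps typed under larger $\Sigma'$, so the side effects of the prefix are harmless. The $\oplus$ case is identical.

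\emph{Third group} ($\iif{M}{\bullet}{M'}$, $\iif{M}{M'}{\bullet}$). Again evaluate the guard $N=M[\sigma][\delta]$ first, reaching $\nb{n}$ with heap $h_1$; the same prefix is available with $M_2$. If the branch taken is the one not containing the hole, both terms reduce to the same configuration and we are done. Otherwise both reduce to $(K[M_j[\sigma][\delta]],h_1)$, and the hypothesis applies with $K$ and $h_1$.

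The main obstacle is the bookkeeping in the last two groups. One must justify that the reduction prefix evaluating $N$ does not depend on the contents of the hole, since the hole is not in evaluation position during that prefix. One must also make sure that freshness of allocated locations does not clash with locations occurring in $M_2[\sigma][\delta]$; this is handled by choosing $\ell$ outside $\Sigma'$, or by renaming. Both points are resolved by a small lemma: if $(K_0[N],h)\ra(K_0[N'],h')$ by a step inside $N$, then the same step is valid for any other filling of the holes of $K_0$ that is not in evaluation position. This lemma follows from rule (Ktx).
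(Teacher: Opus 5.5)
Your proposal is correct and follows essentially the same route as the paper. The first group is handled by instantiating the hypothesis with the enlarged evaluation context $K[C^{\sigma\delta}]$, and the remaining groups by first evaluating the closed operand $M[\sigma][\delta]$ to a value and then instantiating with $K[V\bullet]$ (or with $K$, after the conditional reduces) at the resulting heap. The difference is presentational: the paper peels off the operand's reduction one step at a time by induction on the length of the converging reduction, whereas you take the whole prefix at once via (Ktx); you are also more explicit than the paper about re-instantiating the CIU hypothesis with the extended store context $\Sigma_1$ and heap $h_1$.
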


\begin{proof}
    We prove the first case in each category, and the rest are similar.
    \begin{itemize}
        \renewcommand\labelitemi{--}
        \renewcommand\labelitemii{*}
        \item Below we prove $M_1M\lsmc M_2M$.
              Suppose there exist evaluation context $K$, global substitution $\sigma$,
              local substitution $\delta$, and heap $h$
              that satisfy the conditions in \Cref{def: closed instances of use}
              and $(K[(M_1M)[\sigma][\delta]],h)\Da$.
              Let $K'=K[\bullet M[\sigma][\delta]]$,
              then we can rewrite the above supposition as $(K'[M_1[\sigma][\delta]],h)\Da$.
              Since $M_1\lsmc M_2$, we have $(K'[M_2[\sigma][\delta]],h)\Da$.
              This is exactly $(K[(M_2M)[\sigma][\delta]],h)\Da$.
        \item Below we prove $MM_1 \lsmc MM_2$.
              Again suppose there exist evaluation context $K$, global substitution $\sigma$,
              local substitution $\delta$, and heap $h$
              that satisfy the conditions in \Cref{def: closed instances of use}
              and $(K[(MM_1)[\sigma][\delta]],h)\Da$.
              We prove $(K[(MM_2)[\sigma][\delta]],h)\Da$
              by induction on the number of steps $k$ in the reduction $(K[(MM_1)[\sigma][\delta]],h)\Da$.
              \begin{itemize}
                \item $k=0$. Then $M[\sigma][\delta]$ must be a value $V$.
                Let $K'=K[V\bullet]$, and the rest is similar to the previous case.
                \item Suppose the result holds for $k-1$; we prove it for $k$.
                If $M[\sigma][\delta]$ is a value, the argument is the same as the previous case.
                Otherwise, $(M[\sigma][\delta],h)\ra(M',h')$,
                and $(K[M'M_1[\sigma][\delta]],h)\Da$ in $k-1$ steps.
                As $M[\sigma][\delta]$ is a closed term, so is $M'$.
                So we can rewrite the above statement as $(K[(M'M_1)[\sigma][\delta]],h)\Da$ in $k-1$ steps.
                By the induction hypothesis, we have $(K[(M'M_2)[\sigma][\delta]],h)\Da$.
                Therefore, $(K[(MM_2)[\sigma][\delta]],h)\Da$.
              \end{itemize}
        \item Below we prove $\iif{M}{M_1}{M'}\lsmc \iif{M}{M_2}{M'}$.
              Again suppose there exist evaluation context $K$, global substitution $\sigma$,
              local substitution $\delta$, and heap $h$
              that satisfy the conditions in \Cref{def: closed instances of use}
              and $(K[(\iif{M}{M_1}{M'})[\sigma][\delta]],h)\Da$.
              Based on \Cref{lem: canonical forms},
              we discuss what $M[\sigma][\delta]$ is.
              \begin{itemize}
                \item If $M[\sigma][\delta]$ is not a value, then the rest of the argument is similar to the previous case.
                \item If $M[\sigma][\delta]=\nb{0}$,
                then $(K[(\iif{M}{M_1}{M'})[\sigma][\delta]],h)\ra (K[M'[\sigma][\delta]],h)$.
                So $(K[M'[\sigma][\delta]],h)\Da$.
                Notice also that $(K[(\iif{M}{M_2}{M'})[\sigma][\delta]],h)\ra (K[M'[\sigma][\delta]],h)$.
                Therefore, $(K[(\iif{M}{M_2}{M'})[\sigma][\delta]],h)\Da$.
                \item If $M[\sigma][\delta]=\nb{n}$ for some $n\neq 0$,
                then $(K[(\iif{M}{M_1}{M'})[\sigma][\delta]],h)\ra (K[M_1[\sigma][\delta]],h)$.
                Since $M_1\lsmc M_2$, we have $(K[M_2[\sigma][\delta]],h)\Da$.
                Therefore, $(K[(\iif{M}{M_2}{M'})[\sigma][\delta]],h)\Da$.
              \end{itemize}
    \end{itemize}
\end{proof}

\begin{lemma}[CIU Abstraction, \Cref{lem: CIU basics} Part 2]\label{lem: CIU abstraction}
    Given $\spgv{i} M_1,M_2:T$, if $M_1\lsmc M_2$,
    then $\la x.M_1\lsmc \la x.M_2$.
\end{lemma}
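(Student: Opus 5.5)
The plan is the standard CIU argument for $\lambda$-abstraction: track every copy of the closed function value through the reduction and use $M_1\lsmc M_2$ at each call. Fix a global substitution $\sigma$, a local substitution $\delta$, an evaluation context $K$ and a heap $h$ as in \Cref{def: closed instances of use}, and assume $(K[(\la x.M_1)[\sigma][\delta]],h)\Da$. Write $V_j=(\la x.M_j)[\sigma][\delta]=\la x.M_j[\sigma][\delta,x/x]$. These are closed values that may be copied into the heap, passed around, or sit unused. The key is a stronger statement that allows arbitrarily many occurrences. I will prove it by induction on the length of the terminating reduction.

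\textbf{Generalised claim.} For any term $N$ and heap $h_0$ containing a distinguished fresh local variable $z$, if $(N[V_1/z],h_0[V_1/z])\Da$ then $(N[V_2/z],h_0[V_2/z])\Da$. Here $N$ is closed apart from $z$, and $z$ may occur anywhere, including inside boxed code and inside heap cells. The lemma is the instance $N=K[z]$ with $h_0=h$. Note that substituting for $z$ under a $\bx$ must be done as a raw, non-binding substitution, since $\bx$ binds all local variables. Alternatively, $z$ can be treated as a fresh global-like placeholder. I will fix this convention first, since $V_j$ is closed and no capture can occur.

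\textbf{Induction.} I proceed by induction on the number of steps, and within that by case analysis on the first redex of $N[V_1/z]$. Decompose $N=K'[R]$. There are two cases.
\begin{itemize}
\item If $R$ is not of the form $zW$, the redex of $N[V_1/z]$ is $R[V_1/z]$, and it reduces \emph{uniformly in $z$}: both instances step to $N'[V_j/z]$ with heap $h_0'[V_j/z]$ for a common $N'$ and $h_0'$. This covers $\beta$ with a non-$z$ function, (Letbox) where the boxed term mentions $z$, dereference returning a stored $z$, and so on. Then the induction hypothesis applies directly.
\item If $R=zW$ with $W$ a value, then $N[V_1/z]$ has $V_1\,W[V_1/z]$ in evaluation position. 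Here I use $M_1\lsmc M_2$, which yields termination of $K''[M_2[\sigma][\delta,W'/x]]$ from that of $K''[M_1[\sigma][\delta,W'/x]]$. The trouble is that $K''$ and $W'$ must be closed, not parametrised by $z$.
\end{itemize}

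\textbf{Main obstacle.} The second case is where the difficulty lies. We cannot apply the hypothesis to only one occurrence while the other occurrences still carry $V_1$. The standard fix, following Mason--Talcott and Pitts, is to replace occurrences one at a time. I will instead do induction on the pair (number of steps, number of $z$-occurrences that get called) to transform gradually: change the first called occurrence's body from $M_1$ to $M_2$ using $M_1\lsmc M_2$. This is legitimate because $K''=K'[V_1/z]$ and $W'=W[V_1/z]$ are closed; the evaluation context $K''$ and heap may be taken arbitrary in \Cref{def: closed instances of use}, and $\delta$ extends to $\delta,W'/x$. This gives termination of a term in which that call now runs $M_2$, while other copies remain $V_1$.

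To make the bookkeeping close, I will strengthen the claim to a mixed version. Let $z_1,\dots,z_k$ be occurrences substituted by $V_1$ or $V_2$, and let the statement be that termination with all $V_1$ implies termination with all $V_2$. Then prove it by induction on the reduction length of the \emph{$V_1$ side}, each $zW$ step being handled by the above one-hole replacement followed by the inductive hypothesis on the residual (shorter) reduction from $K''[M_1[\dots]]$. I expect to need $\lsmc$ to be closed under this ``unrolling''. This is essentially why the proof of \Cref{lem: CIU basics} for $M\bullet$ used induction on steps, and I will mirror that argument, being careful that (Letbox) substitutions duplicating $z$ into code preserve the shape $N[V_j/z]$.
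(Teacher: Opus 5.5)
\textbf{What matches the paper.} Your set-up is the paper's: generalise to an arbitrary $N$ with a distinguished $z$, induct on the length of the terminating reduction of the $V_1$-instance, and note that redexes not of the form $zW$ reduce uniformly in $z$.

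Two side remarks. Parametrising the heap by $z$, as you do, is in fact necessary, because the context may store the function, e.g.\ via $\ell:=z$; the paper's strengthened statement leaves this implicit. On the other hand, you need no special convention for $z$ under $\bx$. The variable $z$ starts outside every box, and neither $\beta$ (whose substitution stops at $\bx$) nor (Letbox) (which only moves box contents) can put it under one.

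\textbf{The gap: the case $N=K'[zW]$.} You propose to use $M_1\lsmc M_2$ first, in the context $K''=K'[V_1/z]$. This gives termination of $K'[V_1/z][M_2[\sigma][\delta,W[V_1/z]/x]]$, after which you would convert the remaining copies of $V_1$. But that term is $N''[V_1/z]$ with $N''=K'[M_2[\sigma][\delta,W/x]]$, and nothing bounds its reduction length by $k-1$. The swap destroys the only measure your induction uses.

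The second component you suggest, ``number of $z$-occurrences that get called'', does not repair this. It is not well defined, since copies of $V_1$ are duplicated, stored and called repeatedly. Replacing $M_1$ by $M_2$ can also change it arbitrarily. Your ``mixed version'' is, as stated, just the original claim.

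Even if you also apply the induction hypothesis to the residual $K''[M_1[\dots]]$, as your last paragraph suggests, you only obtain termination of $K'[V_2/z][M_1[\dots]]$ and of $K'[V_1/z][M_2[\dots]]$. Neither is the goal $K'[V_2/z][M_2[\dots]]$.

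\textbf{The fix: do the two steps in the other order, as the paper does.} Since $M_1[\sigma][\delta,x/x]$ does not mention $z$, $(N[V_1/z],h_0[V_1/z])$ $\beta$-reduces in one step to $(N'[V_1/z],h_0[V_1/z])$, where $N'=K'[M_1[\sigma][\delta,W/x]]$. This residual terminates in $k-1$ steps.

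Apply the induction hypothesis to it first. It switches every copy at once, giving $(K'[V_2/z][M_1[\sigma][\delta,W[V_2/z]/x]],h_0[V_2/z])\Da$. The exposed body now sits in the closed evaluation context $K'[V_2/z]$, with closed local substitution $\delta,W[V_2/z]/x$ and closed heap $h_0[V_2/z]$.

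A single application of $M_1\lsmc M_2$, with the same $\sigma$, then yields termination of $K'[V_2/z][M_2[\sigma][\delta,W[V_2/z]/x]]$. This is exactly the one-step $\beta$-reduct of $(N[V_2/z],h_0[V_2/z])$, so no mixed statement or lexicographic measure is needed.
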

\begin{proof}
    Suppose there exist evaluation context $K$, global substitution $\sigma$,
    local substitution $\delta$, and heap $h$
    that satisfy the conditions in \Cref{def: closed instances of use}
    and $(K[(\la x.M_1)[\sigma][\delta]],h)\Da$.
    We need to show that $(K[(\la x.M_2)[\sigma][\delta]],h)\Da$.
    For simplicity, we write $M_j^{\sigma,\delta}=M_j[\sigma][\delta,x/x]$ for $j=1,2$.
    We strengthen the claim and prove that $(M[\la x.M_1^{\sigma,\delta}/z],h)\Da$
    implies $(M[\la x.M_2^{\sigma,\delta}/z],h)\Da$ for any $M$.
    We prove by induction on the number of steps $k$ in the reduction $(M[\la x.M_1^{\sigma,\delta}/z],h)\Da$.
    \begin{itemize}
        \item $k=0$. Then $M[\la x.M_1^{\sigma,\delta}/z]$ is a value.
        Induction on $M$ shows that $M[\la x.M_2^{\sigma,\delta}/z]$ is also a value.
        \item Suppose result for $k-1$, we prove for $k$. Suppose
        $(M[\la x.M_1^{\sigma,\delta}/z],h)\ra(M',h')$,
        and $(M',h')\Da$ in $k-1$ steps.
        Consider whether the redex in the above first step reduction is created by the substitution.
        \begin{itemize}
            \item If not, then $(M[\la x.M_1^{\sigma,\delta}/z],h)\ra (M'[\la x.M_1^{\sigma,\delta}/z],h')$,
            and $M'[\la x.M_1^{\sigma,\delta}/z],h')\Da$ in $k-1$ steps.
            By the induction hypothesis, we have $(M'[\la x.M_2^{\sigma,\delta}/z],h')\Da$.
            Since $(M[\la x.M_2^{\sigma,\delta}/z],h)\ra (M'[\la x.M_2^{\sigma,\delta}/z],h')$,
            we have $(M[\la x.M_2^{\sigma,\delta}/z],h)\Da$.
            \item If yes, then the only possibility is that $M=K'[zV]$.
            Then $(K'[M_1^{\sigma,\delta}[V/x]][\la x.M_1^{\sigma,\delta}/z],h)\Da$ in $k-1$ steps.
            By the induction hypothesis, we have
            $(K'[M_1^{\sigma,\delta}[V/x]][\la x.M_2^{\sigma,\delta}/z],h)\Da$.
            We can rewrite the above statement as
            $(K'[M_1[\sigma][\delta,V/x]][\la x.M_2^{\sigma,\delta}/z],h)\Da$.
            Since $M_1\lsmc M_2$, we have $(K'[M_2[\sigma][\delta,V/x]][\la x.M_2^{\sigma,\delta}/z],h)\Da$.
            Therefore, $(M[\la x.M_2^{\sigma,\delta}/z],h)\Da$.
        \end{itemize}
    \end{itemize}
\end{proof}

\begin{lemma}[CIU Recursion, \Cref{lem: CIU basics} Part 3]
    Given $\spgv{i} M_1,M_2:T$, if $M_1\lsmc M_2$,
    then $\rec{y}{x}{M_1}\lsmc \rec{y}{x}{M_2}$.
\end{lemma}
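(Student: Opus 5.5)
I will mirror the proof of \Cref{lem: CIU abstraction}, replacing the $\beta$-redex by the (Rec) redex. Fix $K$, $\sigma$, $\delta$, $h$ as in \Cref{def: closed instances of use} with $(K[(\rec{y}{x}{M_1})[\sigma][\delta]],h)\Da$. Write $R_j=\rec{y}{x}{M_j^{\sigma,\delta}}$ where $M_j^{\sigma,\delta}=M_j[\sigma][\delta,y/y,x/x]$ for $j=1,2$, so that $(\rec{y}{x}{M_j})[\sigma][\delta]=R_j$. As before, I strengthen the claim to: for every term $M$ with a distinguished fresh local variable $z$ (free possibly many times), $(M[R_1/z],h)\Da$ implies $(M[R_2/z],h)\Da$. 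The original claim is the instance $M=K[z]$.

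The proof is by induction on the length $k$ of the terminating reduction of $(M[R_1/z],h)$. For $k=0$, $M[R_1/z]$ is a value, and a simple induction on $M$ shows $M[R_2/z]$ is also a value. For the step case, I examine the first reduction step.

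\emph{Redex not created by the substitution.} Then the step is uniform in what is substituted for $z$. We get $(M[R_1/z],h)\ra(M'[R_1/z],h')$ and $(M[R_2/z],h)\ra(M'[R_2/z],h')$ for the same $M'$, $h'$. Note that heap values may now contain $z$; I treat the heap as part of the substituted object, $(M,h)[R_j/z]$, exactly as is implicitly done in \Cref{lem: CIU abstraction}. The induction hypothesis then closes this case.

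\emph{Redex created by the substitution.} The only possibility is $M=K'[zV]$, and the step is the (Rec) rule. It produces $K'[M_1^{\sigma,\delta}[V/x,R_1/y]]$ with everything under $[R_1/z]$. I rewrite this as $K''[M_1^{\sigma,\delta}[V/x,z/y]][R_1/z]$, where $K''=K'$. This term terminates in $k-1$ steps, so the induction hypothesis (with the new $M:=K'[M_1^{\sigma,\delta}[V/x,z/y]]$) gives termination of the same term with $R_2$ substituted for $z$. Now $R_2$ sits in place of $y$, but $M_1$ is still in the hole. That term is
\[K'[R_2/z][M_1[\sigma][\delta,V[R_2/z]/x,R_2/y]].\]
This is $K_0[M_1[\sigma][\delta_0]]$ for the evaluation context $K_0=K'[R_2/z]$ and the closing value substitution $\delta_0=\delta,V[R_2/z]/x,R_2/y$. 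Since $M_1\lsmc M_2$, $K_0[M_2[\sigma][\delta_0]]$ terminates, and this is exactly the one-step reduct of $(M[R_2/z],h)$.

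The main obstacle is the bookkeeping of substitutions, in two places. First, $\sigma$ is applied before $\delta$, so I must check that $[\sigma]$ commutes past the local substitutions. This holds because $\sigma$ introduces no new free global variables and the values in $\delta_0$ are closed. Second, $\delta_0$ must be a legitimate closing local substitution: $R_2$ and $V[R_2/z]$ must be closed values of the right types at the right layer, which follows from typing of $K'[zV]$ and \Cref{lem: local substitution}. A subtler point is that $z$ may occur under a $\mathbf{box}$ in $M$. A (Letbox) step can then move $R_1$ into a global substitution, but the result is still of the form $M'[R_1/z]$ because $R_1$ is closed. So the non-created case still applies uniformly, and no separate argument is needed.
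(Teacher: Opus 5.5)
Your proposal is correct and follows the paper's proof essentially step for step. It uses the same strengthening to arbitrary $M$ with $R_j$ substituted for a fresh $z$, and the same induction on the length of the reduction. In the only substitution-created case $M=K'[zV]$, it makes the same moves: rewrite the unfolded $R_1$ as $z/y$, apply the induction hypothesis, then use $M_1\lsmc M_2$ with the closing substitution $\delta, V/x, z/y$ under $[R_2/z]$.

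Your explicit handling of heaps that may contain $z$ is a detail the paper leaves implicit. Your remark about $z$ under a $\mathbf{box}$ is moot: $\mathbf{box}$ binds all local variables, so $z$ is never substituted there. $z$ reaches unboxed code only through explicit substitutions $u^{\delta}$, and that is again a uniform step.
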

\begin{proof}
    Suppose there exist evaluation context $K$, global substitution $\sigma$,
    local substitution $\delta$, and heap $h$
    that satisfy the conditions in \Cref{def: closed instances of use}
    and $(K[(\rec{y}{x}{M_1})[\sigma][\delta]],h)\Da$.
    We need to show that $(K[(\rec{y}{x}{M_2})[\sigma][\delta]],h)\Da$.
    For simplicity, we write $M_j^{\sigma,\delta}=M_j[\sigma][\delta,x/x,y/y]$ for $j=1,2$.
    We strengthen the claim and prove that $(M[\rec{y}{x}{M_1^{\sigma,\delta}}/z],h)\Da$
    implies $(M[\rec{y}{x}{M_2^{\sigma,\delta}}/z],h)\Da$ for any $M$.
    We prove by induction on the number of steps $k$ in the reduction $(M[\rec{y}{x}{M_1^{\sigma,\delta}}/z],h)\Da$.
    Most cases are similar to \Cref{lem: CIU abstraction}.
    We only show the case where the first step redex is created by the substitution.

    Suppose $M=K'[zV]$, then $(K'[M_1^{\sigma,\delta}[V/x,\rec{y}{x}{M_1^{\sigma,\delta}}/y]][\rec{y}{x}{M_1^{\sigma,\delta}}/z],h)\Da$ in $k-1$ steps.
    Notice that the above term can be rewritten as $(K'[M_1^{\sigma,\delta}[V/x,z/y]][\rec{y}{x}{M_1^{\sigma,\delta}}/z],h)$.
    Therefore, by the induction hypothesis, we have
    $(K'[M_1^{\sigma,\delta}[V/x,z/y]][\rec{y}{x}{M_2^{\sigma,\delta}}/z],h)\Da$.
    Rewrite the above statement as
    $(K'[M_1[\sigma][\delta,V/x,z/y]][\rec{y}{x}{M_2^{\sigma,\delta}}/z],h)\Da$.
    Since $M_1\lsmc M_2$, we have
    $(K'[M_2[\sigma][\delta,V/x,z/y]][\rec{y}{x}{M_2^{\sigma,\delta}}/z],h)\Da$.
    Therefore, $(M[\rec{y}{x}{M_2^{\sigma,\delta}}/z],h)\Da$.
    
\end{proof}

\begin{lemma}[CIU LetBox, \Cref{lem: CIU basics} Part 4]
    Given $\spgv{i} M_1,M_2:T$, if $M_1\lsmc M_2$,
    then $\letbox{u}{M}{M_1}\lsmc \letbox{u}{M}{M_2}$.
\end{lemma}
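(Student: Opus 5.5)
We follow the pattern of the proof of $MM_1\lsmc MM_2$ in Part 1, and then reduce to the hypothesis $M_1\lsmc M_2$ under a modified global substitution. Fix $\sigma,\delta,K,h$ as in \Cref{def: closed instances of use} and suppose $(K[(\letbox{u}{M}{M_1})[\sigma][\delta]],h)\Da$. By the definition of substitution this term is $K[\letbox{u}{M[\sigma][\delta]}{M_1[\sigma,u/u][\delta]}]$, and $K'=K[\letbox{u}{\bullet}{M_1[\sigma,u/u][\delta]}]$ is an evaluation context. We then argue by induction on the number $k$ of steps in the terminating reduction, exactly as in the $MM_1$ case.

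While $M[\sigma][\delta]$ is not a value, the step happens inside $M[\sigma][\delta]$. Since that term is closed, we can replace $M$ by the closed reduct $M'$ and invoke the induction hypothesis. Otherwise, by \Cref{lem: canonical forms}, $M[\sigma][\delta]=\bx N$ for some $N$ with $\Sigma';\cdot;\Gamma'\vd_0 N:T'$, where $\Box(\Gamma'\vd T')$ is the type of $M$. It is closed in global variables because $\sigma$ introduces none and $\delta$ does not go under $\bx$. One (Letbox) step on the $M_1$ side gives $K[M_1[\sigma,u/u][\delta][N/u]]$, and likewise on the $M_2$ side.

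The key step is the commutation identity
\[M_j[\sigma,u/u][\delta][N/u]=M_j[\sigma,N/u][\delta].\]
To prove it, first observe that $[N/u]$ commutes past $[\delta]$: the values in $\delta$ are closed, so they contain no $u$, and for an occurrence $u^{\delta''}$ both sides yield $N[\delta''[\ldots]]$, because $N$ has only the local variables of $\Gamma'$, which are all covered by the explicit substitution. Second, $[\sigma,u/u]$ followed by $[N/u]$ equals $[\sigma,N/u]$, since each $\sigma(v)$ has no free global variables. Both facts follow by a routine structural induction on terms, mutually with local substitutions, in the style of \Cref{lem: local substitution,lem: global substitution}. I expect this bookkeeping to be the main (if tedious) obstacle, especially the $u^{\delta''}$ and $\bx$ cases, where the composition $\delta''[\sigma][\delta]$ must be tracked.

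With the identity in hand, $\sigma_0=(\sigma,N/u)$ is a well-typed global substitution $\Sigma';\cdot\vd\sigma_0:(\Psi,u:(\Gamma'\vd T'))$, which is exactly the global context in which $M_1,M_2$ are typed. From $(K[M_1[\sigma_0][\delta]],h)\Da$ and $M_1\lsmc M_2$ we get $(K[M_2[\sigma_0][\delta]],h)\Da$. Rewriting by the identity and prepending the (Letbox) step gives $(K[(\letbox{u}{M}{M_2})[\sigma][\delta]],h)\Da$.

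This is the point where the call-by-name treatment of global variables in \Cref{def: closed instances of use}, which allows arbitrary open terms such as $N$ and not just values, is essential.
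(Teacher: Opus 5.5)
Your proposal is correct and is essentially the paper's own proof. Like the paper, you induct on the length of the terminating reduction, step inside $M[\sigma][\delta]$ while it is not a value, and in the $\bx N$ case apply $M_1\lsmc M_2$ with the extended global substitution $(\sigma,N/u)$. The only difference is that you state and justify the commutation identity $M_j[\sigma,u/u][\delta][N/u]=M_j[\sigma,N/u][\delta]$, which the paper uses silently as a ``can be rewritten as'' step.
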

\begin{proof}
    Suppose there exist evaluation context $K$, global substitution $\sigma$,
    local substitution $\delta$, and heap $h$
    that satisfy the conditions in \Cref{def: closed instances of use}
    and $(K[(\letbox{u}{M}{{M_1}})[\sigma][\delta]],h)\Da$.
    We need to show that $(K[(\letbox{u}{M}{{M_2}})[\sigma][\delta]],h)\Da$.
    For simplicity, we write $M^{\sigma,\delta}=M[\sigma][\delta]$,
    and $M_j^{\sigma,\delta}=M_j[\sigma,u/u][\delta]$ for $j=1,2$.
    We prove by induction on the number of steps $k$ in the reduction
    $(K[\letbox{u}{M^{\sigma,\delta}}{{M_1^{\sigma,\delta}}}],h)\Da$.
    \begin{itemize}
        \item If $k=0$, then $K[\letbox{u}{M^{\sigma,\delta}}{{M_1^{\sigma,\delta}}}]$ is a value.
        Induction on $K$ shows that
        $K[\letbox{u}{M^{\sigma,\delta}}{{M_2^{\sigma,\delta}}}]$ is also a value.
        \item Suppose result for $k-1$, we prove for $k$. Consider whether $M^{\sigma,\delta}$ is a value.
        \begin{itemize}
            \item If yes, by \Cref{lem: canonical forms},
            we have $M^{\sigma,\delta}=\bx M'$.
            Then $(K[M_1^{\sigma,\delta}[M'/u]],h)\Da$ in $k-1$ steps.
            This can be rewritten as $(K[M_1[\sigma,M'/u][\delta]],h)\Da$.
            Since $M_1\lsmc M_2$, we have $(K[M_2[\sigma,M'/u][\delta]],h)\Da$.
            Therefore, $(K[\letbox{u}{M^{\sigma,\delta}}{{M_2^{\sigma,\delta}}}],h)\Da$.
            \item If not, then $(M^{\sigma,\delta},h)\ra(M',h')$,
            and $(K[\letbox{u}{M'}{{M_1^{\sigma,\delta}}}],h')\Da$ in $k-1$ steps.
            As $M^{\sigma,\delta}$ is a closed term, so is $M'$.
            So we can rewrite the above statement as
            $(K[\letbox{u}{(M')^{\sigma,\delta}}{{M_1^{\sigma,\delta}}}],h')\Da$ in $k-1$ steps.
            By the induction hypothesis, we have
            $(K[\letbox{u}{(M')^{\sigma,\delta}}{{M_2^{\sigma,\delta}}}],h')\Da$.
            Therefore, $(K[\letbox{u}{M^{\sigma,\delta}}{{M_2^{\sigma,\delta}}}],h)\Da$.
        \end{itemize}
    \end{itemize}
\end{proof}

\begin{lemma}\label{lem: beta preserve}
    Given closed box-free term $\sccv{i} M:T$,
    we have $(\la x.M)\ ()\approx M$.
\end{lemma}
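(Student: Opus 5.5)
Since $M$ is closed (empty $\Psi$ and $\Gamma$), $x$ does not occur free in $M$, so $M[()/x]=M$ and $(\la x.M)\,()$ $\beta$-reduces to $M$ in one step without touching the heap. The plan is to establish $(\la x.M)\,()\approx M$ via \Cref{thm: CIU}: it suffices to prove both $(\la x.M)\,()\lsmc M$ and $M\lsmc(\la x.M)\,()$. Closedness makes both global and local substitutions vacuous, so $((\la x.M)\,())[\sigma][\delta]=(\la x.M)\,()$ and $M[\sigma][\delta]=M$. Each direction therefore reduces to showing that, for every evaluation context $K$ and heap $h$ satisfying the conditions of \Cref{def: closed instances of use}, $(K[(\la x.M)\,()],h)\Da$ iff $(K[M],h)\Da$.

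For this equivalence I will use determinism of the one-step relation up to the choice of fresh locations. By rule ($\beta$) and (Ktx) we have $(K[(\la x.M)\,()],h)\ra(K[M],h)$. For the forward direction, if $K[(\la x.M)\,()]$ converges from $h$, then it is not a value (a $\beta$-redex sits in evaluation position). Its first step is therefore the unique redex step, and the remaining reduction sequence is a convergent run of $(K[M],h)$. For the backward direction, we simply prepend the $\beta$-step to the convergent run of $(K[M],h)$.

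The one point needing care is the uniqueness of decomposition of a non-value into an evaluation context and a redex. This is the standard unique-decomposition property for the grammar of $K$ in \Cref{fig: Syntax}, and I would verify it by induction on the term. Nondeterminism arises only in the choice of fresh $\ell$ in (Ref), and convergence is invariant under renaming of locations, so this causes no difficulty. The hypothesis ``box-free'' is not really needed for this argument. I expect it is stated so that the lemma can be applied in the proof of \Cref{lem: CIU box}, where $M$ replaces a boxed term. Nevertheless, I would double-check that typing at layer $i$ of $(\la x.M)\,()$ at type $T$ is immediate from (Tm-Abs) and (Tm-App) with $x:\Unit$. I expect the main (minor) obstacle to be the unique-decomposition and determinism bookkeeping, not anything conceptual.
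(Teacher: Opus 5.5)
Your argument that $(K[(\la x.M)\,()],h)\Da$ iff $(K[M],h)\Da$ is fine. The problem is the step from the two $\lsmc$ inequalities to $\approx$, which is circular within this paper. You use the right-to-left direction of \Cref{thm: CIU}. That direction is proved from \Cref{lem: CIU precongruence}, which in turn relies on \Cref{lem: CIU box}. The proof of \Cref{lem: CIU box} invokes exactly this lemma: it replaces an occurrence $M_1^\sigma[\delta_k]$ by $(\la y.M_1^\sigma[\delta_k])\,()$ and back, at an \emph{arbitrary} position of a term rather than an evaluation position. So at this point you may not assume that $\lsmc$ implies $\lsm$. Nor can you fall back on an evaluation-context-only version of the statement, since that use genuinely needs replacement under arbitrary contexts. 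Your own remark that the lemma is there to serve \Cref{lem: CIU box} should have flagged this.

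What is missing is a direct proof of the contextual statement. A general context can put the hole under $\la$, $\mathbf{rec}$ or $\mathbf{box}$. The plugged term may then be copied and reach evaluation position many times, and determinism inside a single evaluation context does not cover this. The paper strengthens the claim as follows: for every closed $N$ with a placeholder $z$ (possibly occurring many times) and every heap $h$, $(N[(\la x.M)\,()/z],h)\Da$ iff $(N[M/z],h)\Da$. Each direction is proved by induction on the length of the convergent reduction, distinguishing whether the first redex is created by a plugged copy in evaluation position.

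\emph{If not}, the step is mirrored on the other side with the placeholder preserved. Closedness of $M$ guarantees that the substitutions performed by that step ($\beta$, $\mathbf{rec}$, or letbox with its local substitutions) leave the plugged copies unchanged.

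\emph{If so}, the two directions differ:
\begin{itemize}
\item Left-to-right: $N=K[z]$ and the step is the $\beta$-step to $K[M]$. This is matched by zero steps, since $K[M][M/z]=N[M/z]$.
\item Right-to-left: either $M$ itself steps ($N=K[z]$, $M\ra M'$), or $M$ is a value consumed by its context (e.g.\ $N=K[zV]$ with $M=\la y.M'$). In both cases you prepend the $\beta$-step $(\la x.M)\,()\ra M$ before mirroring.
\end{itemize}
Your determinism observation is the local ingredient of this case analysis. It has to be embedded in such a simulation argument rather than delegated to \Cref{thm: CIU}.
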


\begin{proof}
    We need to show that $(C[(\la x.M)\ ()],\cdot)\Da$ iff $(C[M],\cdot)\Da$.
    We strengthen the claim and prove that
    $(N[(\la x.M)\ ()/z],h)\Da$ iff $(N[M/z],h)\Da$ for any closed term $N$ and heap $h$.
    \begin{itemize}
        \item We first prove the direction from left to right
        by induction on the number of steps $k$ in the reduction $(N[(\la x.M)\ ()/z],h)\Da$.
        Again most cases are similar to \Cref{lem: CIU abstraction},
        and we only show the case where the first step redex is created by the substitution.
        Suppose $N=K[z]$. Then $(K[(\la x.M)\ ()][(\la x.M)\ ()/z],h)\Da$ in $k$ steps.
        Therefore, $(K[M][(\la x.M)\ ()/z],h)\Da$ in $k-1$ steps.
        By the induction hypothesis, we have $(K[M][M/z],h)\Da$.
        This is exactly $(N[M/z],h)\Da$.
        \item We then prove the direction from right to left
        by induction on the number of steps $k$ in the reduction $(N[M/z],h)\Da$.
        Again most cases are similar to \Cref{lem: CIU abstraction},
        and we only show the case where the first step redex is created by the substitution.
        \begin{itemize}
            \item Suppose $N=K[z]$, and $M\ra M'$. Then $(K[M'][M/z],h)\Da$ in $k-1$ steps.
            By the induction hypothesis, we have $(K[M'][(\la x.M)\ ()/z],h)\Da$.
            Therefore, $(K[(\la x.M)\ ()][(\la x.M)\ ()/z],h)\Da$. Namely, $(N[(\la x.M)\ ()/z],h)\Da$.
            \item Suppose $N=K[zV]$, and $M=\la y.M'$. Then $(K[M'[V/y]][M/z],h)\Da$ in $k-1$ steps.
            By the induction hypothesis, we have $(K[M'[V/y][(\la x.M)\ ()/z],h)\Da$.
            Therefore, $(K[(\la x.M)\ ()V][(\la x.M)\ ()/z],h)\Da$. Namely, $(N[(\la x.M)\ ()/z],h)\Da$.
            \item Other cases are similar.
        \end{itemize}
    \end{itemize}
\end{proof}

\begin{lemma}[CIU Box, \Cref{lem: CIU box}]
    Given $\spgv{0} M_1,M_2:T$, if $M_1\lsmc M_2$,
    then $\bx M_1\lsmc \bx M_2$.
\end{lemma}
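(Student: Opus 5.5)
Fix $K,\sigma,\delta,h$ as in \Cref{def: closed instances of use} and suppose $(K[(\bx M_1)[\sigma][\delta]],h)\Da$. Since $\bx$ discards local substitutions, $(\bx M_j)[\sigma][\delta]=\bx (M_j[\sigma])$. Write $N_j=M_j[\sigma]$; each satisfies $\Sigma';\cdot;\Gamma\vd_0 N_j:T$ for some $\Gamma$, and $N_1\lsmc N_2$ still holds. The latter is immediate from the definition, since composing closing substitutions with $\sigma$ yields closing substitutions. As in \Cref{lem: CIU abstraction}, I would strengthen the goal to: for every term $N$ with a free local variable $z$ of box type and every heap $h$, if $(N[\bx N_1/z],h)\Da$ then $(N[\bx N_2/z],h)\Da$. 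I would prove this by induction on the number of reduction steps, and instantiate with $N=K[z]$.

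The induction splits on the first reduction step. If the redex is not created by the substitution, the step commutes with substitution: $N[\bx N_1/z]\ra N'[\bx N_1/z]$ and likewise for $N_2$, so the induction hypothesis applies. The only redex that the substitution can create is $N=K'[\letbox{u}{z}{P}]$. This reduces to $K'[P[N_1/u]][\bx N_1/z]$, a call-by-name substitution of the \emph{open} term $N_1$ under arbitrary binders. Here CIU cannot be used directly, as the proof sketch already warns.

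To handle this step I would pass through thunks. Each occurrence $u^{\delta'}$ in $P$ becomes $N_1[\delta']$, where $\delta'$ maps the variables $\vec y$ of $\Gamma$ to values. I would replace it by $(\la \vec y.\la w.N_1)\,\delta'(\vec y)\,()$, i.e.\ apply a fresh local variable $g$ bound to the closed value $\la \vec y.\la w.N_1$ of layer $0$. This value is closed because $N_1$ has no global variables and $\Gamma$ lists all its free locals. Formally, define $P^\dagger = P[(g\,\vec y\,())/u]$, where $g$ is applied to the components of each $\delta'$. The key claim is that $P[N_1/u]$ and $P^\dagger[\la\vec y.\la w.N_1/g]$ are equi-terminating in every evaluation context and heap. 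This is a multi-argument version of \Cref{lem: beta preserve}, proved by the same step-counting argument, since $\beta$-reducing the thunk with value arguments reproduces $N_1[\delta']$ exactly. Next, by \Cref{lem: CIU abstraction} applied $|\vec y|+1$ times, $\la\vec y.\la w.N_1\lsmc\la\vec y.\la w.N_2$. Because these are closed values, the value-substitution form of the strengthened statement of \Cref{lem: CIU abstraction} transfers termination of $K'[P^\dagger][\ldots/g][\bx N_1/z]$ to the version with $N_2$ in place of $g$'s body. Finally, I would undo the thunking for $N_2$ and fold back to $K'[\letbox{u}{z}{P}][\bx N_2/z]$.

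The main obstacle is keeping the two substitutions simultaneous. After the letbox step, the term still contains $\bx N_1$ through $z$ and now also contains $N_1$ through $u$. The induction hypothesis handles $z$, and the thunk argument handles $u$, so these must be interleaved. The cleanest route is to generalize the statement to substitute both $z\mapsto\bx N_j$ and $g\mapsto\la\vec y.\la w.N_j$ simultaneously, and to induct on steps for the pair. Each letbox on $z$ then just adds occurrences of $g$. The $\beta$ step on $g$ is exactly where $M_1\lsmc M_2$ is invoked, as in the "yes" case of \Cref{lem: CIU abstraction}. There is one further technical point. \Cref{lem: beta preserve} is stated for box-free terms, but here the thunk body is at layer $0$ and hence box-free. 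The extension to value arguments $\vec y$ needs only the same proof.
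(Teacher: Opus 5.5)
Your skeleton matches the paper's. You strengthen to arbitrary $N$ with $z\mapsto\bx N_j$, induct on reduction steps, and isolate the created redex $\letbox{u}{z}{P}$. You then dispose of the resulting copies of $N_1$ by thunking, \Cref{lem: beta preserve} and \Cref{lem: CIU abstraction}. Closing the thunk over $\vec y$ is a real improvement: the paper asserts that every $\delta_k$ is closed, which fails as soon as $P$ contains, e.g., $\la x.u^{x/y}$.

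\textbf{First gap: the induction bookkeeping.} $K'[P[N_1/u]]$ is not syntactically $K'[P^\dagger][\la\vec y.\la w.N_1/g]$. The two are only equi-terminating, and the thunked term spends $|\vec y|+1$ extra $\beta$-steps on every copy that gets evaluated. So ``each letbox on $z$ just adds occurrences of $g$'' does not give a term below the step bound $k$. Likewise, the ``fold back'' in your first route needs the hypothesis for $K'[P[N_2/u]]$ under $[\bx N_1/z]$, whose step count is unknown. No interleaving is needed. Your strengthened statement ranges over all $N$, including ones that contain $N_1$ literally, so apply the induction hypothesis at once to $K'[P[N_1/u]]$. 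It terminates in $k-1$ steps under $[\bx N_1/z]$, which gives termination of $(K'[P][\bx N_2/z])[N_1/u]$. Only then swap the $u$-copies by thunking, \Cref{lem: CIU abstraction}, and unthunking, all outside the induction. That reaches the one-step reduct of $N[\bx N_2/z]$; this is the paper's order.

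\textbf{Second gap (shared with the paper's $w\,()$ device): occurrences of $u$ under a $\bx$ inside $P$.} This is the basic escape pattern $\letbox{u}{z}{\bx(\cdots u^{\delta'}\cdots)}$, and those copies do run once the enclosing box is itself unboxed. A local $g$ cannot occur there: (Tm-Box) types the body in its own local context, and $(\bx M)[\delta]=\bx M$. So $P^\dagger$ is ill-typed, and $[\la\vec y.\la w.N_1/g]$ never reaches those positions. This breaks both your equi-termination claim and the appeal to the strengthened \Cref{lem: CIU abstraction}.

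The fix is a \emph{global} thunk variable $v:(\cdot\vd T_1\ra\cdots\ra T_m\ra\Unit\ra T)$, which is well formed because $\Gamma$ and $T$ are layer-$0$. Then $P^\dagger=P[v^{\cdot}\,\vec y\,()/u]$ is a well-typed global substitution, and global substitutions do enter boxes. Since $\la\vec y.\la w.N_j$ is closed, substituting it for $v$ commutes with every reduction rule, (Letbox) included. Hence the only created redexes are applications of $v$, and the proofs of \Cref{lem: beta preserve} and of the strengthened \Cref{lem: CIU abstraction} go through for this substitution as before.

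With this device you can even thunk once at the outset and drop $z$ altogether. $K[\bx N_j]$ is equi-terminating with $K[\bx(v^{\cdot}\,\vec y\,())][\la\vec y.\la w.N_j/v]$, and the global version of the strengthened \Cref{lem: CIU abstraction} relates the two thunked terms directly.

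\textbf{Heap.} Every such strengthened statement must also substitute into the heap, relating $(N[\bx N_1/z],h[\bx N_1/z])$ to $(N[\bx N_2/z],h[\bx N_2/z])$. Otherwise a step like $\ell:=z$ leaves different heaps in your ``likewise for $N_2$'' case, and the induction hypothesis no longer applies. The paper's proof of \Cref{lem: CIU abstraction} has the same omission.
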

\begin{proof}
    Suppose there exist evaluation context $K$, global substitution $\sigma$,
    local substitution $\delta$, and heap $h$
    that satisfy the conditions in \Cref{def: closed instances of use}
    and $(K[(\bx M_1)[\sigma][\delta]],h)\Da$.
    We need to show that $(K[(\bx M_2)[\sigma][\delta]],h)\Da$.
    For simplicity, we write $M_j^\sigma=M_j[\sigma]$ for $j=1,2$.
    We strengthen the claim and prove that $(M[\bx M_1^\sigma/z],h)\Da$
    implies $(M[\bx M_2^\sigma/z],h)\Da$ for any $M$.
    We prove by induction on the number of steps $k$ in the reduction $(M[\bx M_1^\sigma/z],h)\Da$.
    Again most cases are similar to \Cref{lem: CIU abstraction},
    and we only show the case where the first step redex is created by the substitution.

    Suppose $M=K'[\letbox{u}{z}{M'}]$, then $(K'[M'[M_1^\sigma/u]][\bx M_1^\sigma/z],h)\Da$ in $k-1$ steps.
    By the induction hypothesis, we have
    $(K'[M'[M_1^\sigma/u]][\bx M_2^\sigma/z],h)\Da$.
    Let $M''=K'[M'][\bx M_2^\sigma/z]$,
    then we can rewrite the above statement as $(M''[M_1^\sigma/u],h)\Da$.
    At this point, our goal is to change the occurrence of $M_1$ in the above term to $M_2$
    using the fact that $M_1\lsmc M_2$.
    But we cannot do this directly because $M_1$ does not appear in an evaluation context.

    To solve this problem, we utilise \Cref{lem: CIU abstraction}
    and rewrite the term so that $\la y.M_1[\sigma][\delta_k]$
    appears in an evaluation context for all the necessary $\delta_k$s.
    The intuition behind this approach is that a function taking unit argument mimics a box,
    and applying such function to unit mimics unboxing.
    Suppose $\spccv{1} M'':T_1$.
    Recall that $\sccv{1} \bx M_1^\sigma:\Box(\Gamma\vd T)$.
    Then it must be the case that $\Sigma';u:\Gamma\vd T;\cdot\vd _1 M'': T_1$.
    By typing rule (Tm-GVar), all occurrences of $u$ in $M''$ must be associated with a substitution $\delta_k$
    that closes all free variables in $\Gamma$.
    In other words, all occurrences of $u$ in the term $M''$ can be listed as
    $$\; \cdots\; u^{\delta_1} \; \cdots \; u^{\delta_2}\; \cdots \; u^{\delta_3}\;\cdots$$
    where $\spccv{i}\delta_k:\Gamma$ for all $k$.
    Then $M''[M_1^\sigma/u]$ can be rewritten as
    $$\; \cdots\; M_1^\sigma[\delta_1] \; \cdots \; M_1^\sigma[\delta_2]\; \cdots \; M_1^\sigma[\delta_3]\;\cdots$$
    Let $N_1$ be the term
    $$\; \cdots\; (\la y.M_1^\sigma[\delta_1])\ () \; \cdots \; M_1^\sigma[\delta_2]\; \cdots \; M_1^\sigma[\delta_3]\;\cdots$$
    By \Cref{lem: beta preserve}, we have $(N_1,h)\Da$.
    Let $N_2$ be the term
    $$\; \cdots\; w\ () \; \cdots \; M_1^\sigma[\delta_2]\; \cdots \; M_1^\sigma[\delta_3]\;\cdots$$
    Then $((\la w.N_2)(\la y.M_1^\sigma[\delta_1]), h)\Da$.
    By \Cref{lem: CIU abstraction}, we have $\la y.M_1\lsmc \la y.M_2$.
    Therefore, $((\la w.N_2)(\la y.M_2^\sigma[\delta_1]), h)\Da$.
    Let $N_3$ be the term
    $$\; \cdots\; (\la y.M_2^\sigma[\delta_1])\ () \; \cdots \; M_1^\sigma[\delta_2]\; \cdots \; M_1^\sigma[\delta_3]\;\cdots$$
    Then $(N_3,h)\Da$.
    By \Cref{lem: beta preserve} again,
    we have $(N_4,h)\Da$ where $N_4$ is the term
    $$\; \cdots\; M_2^\sigma[\delta_1] \; \cdots \; M_1^\sigma[\delta_2]\; \cdots \; M_1^\sigma[\delta_3]\;\cdots$$
    Repeat the above process for all occurrences of $u$,
    we have $(N_5,h)\Da$ where $N_5$ is the term
    $$\; \cdots\; M_2^\sigma[\delta_1] \; \cdots \; M_2^\sigma[\delta_2]\; \cdots \; M_2^\sigma[\delta_3]\;\cdots$$
    This is exactly $(M''[M_2^\sigma/u]],h)\Da$.
\end{proof}

\begin{lemma}[CIU Precongruence]\label{lem: CIU precongruence}
    Given $\spgv{i} M_1,M_2:T$, if $M_1\lsmc M_2$,
    then $C[M_1]\lsmc C[M_2]$ for any context $C$.
\end{lemma}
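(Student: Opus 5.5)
We prove the statement by structural induction on the context $C$, following the grammar of contexts in \Cref{fig: Syntax}. The base case $C=\bullet$ is immediate, since $C[M_j]=M_j$. There is one subtlety: the hole rule (Ctx-Hole) allows weakening of $\Psi$, $\Gamma$ and the layer $i$. So we first check that $\lsmc$ is preserved by weakening the typing context and by lifting from layer $0$ to layer $1$ (\Cref{lem: lifting}).

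For weakening, take closing substitutions $\sigma$ and $\delta$ for the larger contexts. Their restrictions to the smaller contexts close $M_1$ and $M_2$ equally well, and extra variables do not occur in $M_j$. Hence $M_j[\sigma][\delta]$ equals $M_j[\rst{\sigma}{\Psi'}][\rst{\delta}{\Gamma'}]$. The layer change only enlarges the class of admissible $K$, $\sigma$, $\delta$ by ones that are already layer-$1$ typable for these terms. This still needs an argument: the definition of $\lsmc$ quantifies over layer-$i$ data, so we check that a layer-$1$ closing instance of a layer-$0$ term is again a legitimate instance. This holds because the local substitution only needs values of types in $\Gamma$, which are layer-$0$ types, and a layer-$1$ typing of a value of a layer-$0$ type can be reduced to a layer-$0$ typing, box names aside.

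For the inductive step, write $C=C_0[C']$, where $C_0$ is a single-layer context constructor with hole $\bullet$ and $C'$ is a strictly smaller context. By the induction hypothesis, $C'[M_1]\lsmc C'[M_2]$, with both typable by \Cref{lem: context typing}. We then apply the one-layer congruence lemma for $C_0$. Every single-layer constructor has already been covered:
\begin{itemize}
\item the evaluation-position and argument-position forms, together with $\la x.\bullet$, $\rec{y}{x}{\bullet}$ and $\letbox{u}{M}{\bullet}$, are handled by \Cref{lem: CIU basics};
\item $\bx\bullet$ is handled by \Cref{lem: CIU box}.
\end{itemize}
The remaining forms listed in the grammar, such as $M:=\bullet$ and $\letbox{u}{\bullet}{M}$, are instances of the same lemmas; $\bullet:=M$ and $V:=\bullet$ fall under the $M\bullet$-style argument.

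The main obstacle is the bookkeeping of binders in the induction. The constructors $\la x.\bullet$, $\rec{y}{x}{\bullet}$, $\letbox{u}{M}{\bullet}$ and especially $\bx\bullet$ change the typing context of the hole. Under $\bx\bullet$ the hole's local context is replaced by an arbitrary $\Gamma'$, and the layer drops from $1$ to $0$. The induction must therefore be stated for $C:(\Psi';\Gamma';i';T')\ra(\Psi;\Gamma;i;T)$ uniformly over all such context-typings, with $M_1\lsmc M_2$ assumed at the hole's typing $(\Psi';\Gamma';i';T')$ (after weakening via Ctx-Hole). Each one-layer lemma is then applied at the typing of $C'[M_j]$ produced by the inversion of the context typing rule for $C_0$. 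Since \Cref{lem: CIU box} requires layer $0$ for its premises and \Cref{lem: CIU basics} is stated at arbitrary $i$, the layer annotations match in every case once the weakening and lifting step above is in place.
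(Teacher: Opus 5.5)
Your skeleton matches the paper's proof, which is the single sentence ``by induction on $C$, using \Cref{lem: CIU basics,lem: CIU box}'': structural induction on $C$, with each one-constructor case handled by \Cref{lem: CIU basics} and the $\bx\bullet$ case by \Cref{lem: CIU box}. You also observe that (Ctx-Hole) permits $i'<i$, so the base case needs $\lsmc$ transported from layer $0$ to layer $1$. That observation is correct, and the paper's proof does not address it. Your treatment of weakening in $\Psi$ and $\Gamma$ is fine.

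However, your justification of the layer lift is false, and this is a genuine gap. A closed value of a layer-$0$ type that is typable at layer $1$ need not be typable at layer $0$, because its body may use $\mathbf{box}$ and $\mathbf{letbox}$, which only type at layer $1$. For instance, $\la z.\letbox{u}{\bx\nb{1}}{u}$ has type $\Unit\ra\Int$ at layer $1$ but no layer-$0$ derivation, and no names are involved. Such values do reach a layer-$0$ hole, e.g.\ in $(\la x.\bullet)(\la z.\letbox{u}{\bx\nb{1}}{u})$. So a layer-$1$ CIU instance is in general not a layer-$0$ one, and you cannot apply the hypothesis $M_1\lsmc M_2$ to it directly.

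The lift is a semantic fact and needs a semantic argument, not a typing one. One possible route:
\begin{itemize}
\item For each function-typed $x\in\dom{\Gamma}$, replace $\delta(x)$ by the layer-$0$ proxy $\la z.(!\ell_x)z$, where $\ell_x$ is a fresh location initialised to $\delta(x)$. Call the result $\delta'$.
\item Show by a step-counting simulation in the style of \Cref{lem: beta preserve} that $(K[M_j[\sigma][\delta]],h)$ converges iff $(K[M_j[\sigma][\delta']],h')$ does, where $h'$ extends $h$ with $\ell_x\mapsto\delta(x)$. Each call through a proxy costs two extra steps, and $\ell_x$ is never written.
\item The layer-$0$ hypothesis then applies to the second configuration.
\end{itemize}
One further caveat: read literally, layer-$0$ typing also requires $\vd_0\Sigma:\sctx$, i.e.\ a box-free store context. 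Box-typed locations present in a layer-$1$ instance would therefore need separate care as well.
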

\begin{proof}
    By induction on $C$,
    using \Cref{lem: CIU basics,lem: CIU box}.
\end{proof}

\begin{theorem}[CIU, \Cref{thm: CIU}]
    Given two terms $\spgv{i} M_1, M_2:T$, we have $M_1\lsm M_2$ if and only if $M_1\lsmc M_2$.
\end{theorem}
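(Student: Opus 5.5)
We prove the two directions separately. The direction from $\lsmc$ to $\lsm$ uses the precongruence result. The converse builds, from a given CIU instance, a single context that realises it.

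($\Leftarrow$) Suppose $M_1\lsmc M_2$. Take a context $\spccv{} C:(\Psi;\Gamma;i;T)\ra(\cdot;\cdot;i';T')$ and a heap $h$ with $(C[M_1],h)\Da$.

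By \Cref{lem: CIU precongruence} we obtain $C[M_1]\lsmc C[M_2]$. Both terms are closed: the global and local contexts are empty by \Cref{lem: context typing}. We instantiate the CIU definition with the empty global and local substitutions, the evaluation context $\bullet$, and the heap $h$. The substitutions act trivially, so $(C[M_2],h)\Da$ follows directly. Hence $M_1\lsm M_2$.

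($\Rightarrow$) Suppose $M_1\lsm M_2$, and fix $\sigma$, $\delta$, $K$ and $h$ as in \Cref{def: closed instances of use} with $(K[M_1[\sigma][\delta]],h)\Da$. We need one context $C$ such that, for $j=1,2$, the computation from $(C[M_j],h)$ converges iff the one from $(K[M_j[\sigma][\delta]],h)$ does.

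Write $\Psi=u_1:(\Gamma_1\vd T_1),\dots,u_m:(\Gamma_m\vd T_m)$ and $\Gamma=x_1:S_1,\dots,x_n:S_n$. We take
\[C \;=\; K[\letbox{u_1}{\bx \sigma(u_1)}{\cdots\letbox{u_m}{\bx\sigma(u_m)}{(\la x_1\ldots x_n.\bullet)\,\delta(x_1)\cdots\delta(x_n)}}].\]
Each $\sigma(u_k)$ is typed as $\Sigma';\cdot;\Gamma_k\vd_0\sigma(u_k):T_k$, so $\bx\sigma(u_k)$ is a closed value of type $\Box(\Gamma_k\vd T_k)$.

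Two side issues affect this construction.

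\emph{Layer.} (Tm-LetBox) lives at layer 1, so the hole sits at layer 1. The (Ctx-Hole) rule permits $i\le 1$, and \Cref{lem: lifting} covers the case $i=0$.

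\emph{Letbox around the hole.} If $\Psi$ is empty no letbox is needed. If $\Psi$ is nonempty, $\Psi$ is nonempty only when something box-typed is around. Even at layer 0 the hole can be placed beneath letboxes, because the context typing allows $i'\le i$.

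\emph{Reduction of the context.} Since $\sigma(u_k)$ may itself be open, we must be careful about the order of substitution. By (Letbox) the context reduces deterministically in $m$ steps to $K[((\la\vec x.M_j)[\sigma])\,\vec{\delta(x)}]$. Then $n$ $\beta$-steps yield $K[M_j[\sigma][\delta]]$; we use that the values $\delta(x_k)$ are closed, so the successive substitutions compose to $[\delta]$. This composition of substitutions is the main technical obstacle. We need the substitution lemma $M[\sigma_1][\sigma_2]=M[\sigma_1,\sigma_2]$, valid because each $\sigma(u_k)$ contains no global variables. We also need that $\sigma$ commutes past the $\la$-binders, i.e.\ $(\la x.M)[\sigma]=\la x.M[\sigma]$, which holds by definition. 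We would prove a small auxiliary lemma stating exactly these commutations.

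Reduction is deterministic and leaves the heap unchanged on these prefix steps. Therefore $(C[M_j],h)\Da$ iff $(K[M_j[\sigma][\delta]],h)\Da$. Combining this with $M_1\lsm M_2$ gives the CIU conclusion.
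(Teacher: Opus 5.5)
Your proof is correct and follows the paper's own argument. For $\lsmc\Rightarrow\lsm$ you use CIU precongruence, instantiated with empty substitutions and the trivial evaluation context; for $\lsm\Rightarrow\lsmc$ you use a closing context that reduces by (Letbox) and $\beta$ steps to $K[M_j[\sigma][\delta]]$. The differences are minor: you reuse the given heap $h$ directly, which the definition of $\lsm$ permits, where the paper has the context build $h$ from the empty heap, and you make explicit the substitution-composition and determinism facts that the paper leaves implicit.
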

\begin{proof}
    The left-to-right direction holds because, given any $K$, $\sigma$, $\delta$, and $h$,
    we can always find $C$ such that
    $(C[M],\cdot)\ra^* (K[M[\sigma][\delta]],h)$ for any $M$.
    The right-to-left direction holds by \Cref{lem: CIU precongruence}.
\end{proof}

\subsection{Proofs for \Cref{sec: full abstraction}}

Suitably well-behaved pairs of passive and active configurations turn out to give rise to a pair of term and heap
by unfolding all the substitutions and evaluation contexts and merging heaps.
Below we make this intuition precise by defining the \emph{composition} of two \emph{compatible} configurations.

Given a mapping $\gamma$ from names to values,
we define substitutions $\delta^i$ by:
$\delta^0(n)=\gamma(n)$ and $\delta^{i+1}(n)=\delta^i(n)[\gamma]$.
If there exists $k$ such that $\delta^k=\delta^{k+1}$,
we define the \emph{idempotent substitution} $\idm{\gamma}$ to be $\delta^k$.

Given two stacks of evaluation contexts $\xi_1$ and $\xi_2$,
we define their \textit{unfolding} $\mergeConf{\xi_1}{\xi_2}$ to be:
(i) $\Sigma\vd K:(\cdot;\cdot;T_1;i_1)\ra (\cdot;\cdot;T_2;i_2)$
if $\xi_1=\cdot:(\Sigma\vd K:(\cdot;\cdot;T_1;i_1)\ra (\cdot;\cdot;T_2;i_2))$ and $\xi_2=\cdot$;
(ii) $\Sigma_1\cup\Sigma_2\vd K_2[K_1]:(\cdot;\cdot;T_1;i_1)\ra(\cdot;\cdot;T_3;i_3)$
if $\xi_1=\xi_3:(\Sigma_1\vd K_1:(\cdot;\cdot;T_1;i_1)\ra(\cdot;\cdot;T_2;i_2))$,
and $\mergeConf{\xi_2}{\xi_3}=(\Sigma_2\vd K_2:(\cdot;\cdot;T_2;i_2)\ra(\cdot;\cdot;T_3;i_3))$;
(iii) undefined otherwise.

\begin{definition}
A passive configuration $\CC_1=(\Sigma_1,h_1,\gamma_1,\phi_1,\xi_1,\theta_1)$
and an active configuration $\CC_2=(M,i,T,\Sigma_2,h_2,\gamma_2,\phi_2,\xi_2,\theta_2)$
are \textit{compatible} if $\dom{h_1}\cap\dom{h_2}=\theta_1=\theta_2$,
$\dom{\gamma_1}\uplus\dom{\gamma_2}=\phi_1=\phi_2$,
$\iota(\gamma_1\uplus\gamma_2)$ is well defined,
and $\mergeConf{\xi_1}{\xi_2}=(\Sigma_1\cup\Sigma_2 \vd K:(\cdot;\cdot;T;i)\ra (\cdot;\cdot;T';i'))$.
If $\CC_1$ and $\CC_2$ are compatible,
we define their \textit{composition} as $\mergeConf{\CC_1}{\CC_2}=(K[M], h_2+h_1)[\iota(\gamma_1\uplus\gamma_2)]$.
\end{definition}

The lemmas below display dynamic invariance properties stating that transitions between configurations
correspond to reductions between terms.
First we show that $P\tau$ transitions within the active configuration
correspond to reductions in the composed term and heap.

\begin{lemma}\label{lem:invariance_tau}
    Suppose passive configuration $\CC_1=(\Sigma_1,h_1,\gamma_1,\phi_1,\xi_1,\theta_1)$
    and active configuration $\CC_2=(M,i,T,\Sigma_2,h_2,\gamma_2,\phi_2,\xi_2,\theta_2)$
    are compatible.
    Also suppose $\CC_2\xra{\tau}^*\CC_2'$ in $k$ steps.
    Then $\mergeConf{\CC_1}{\CC_2}\ra^*\mergeConf{\CC_1}{\CC_2'}$ in $k$ steps.
\end{lemma}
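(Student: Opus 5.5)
We argue by induction on $k$, the number of $\tau$ steps. The case $k=0$ is trivial. For the inductive step it suffices to treat a single step $\CC_2\xra{\tau}\CC_2'$ and show $\mergeConf{\CC_1}{\CC_2}\ra\mergeConf{\CC_1}{\CC_2'}$ in exactly one step. Then we check that $\CC_1$ and $\CC_2'$ are again compatible, so the induction hypothesis applies to the remaining $k-1$ steps.

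\emph{Compatibility is preserved.} By (P$\tau$) we have $\CC_2'=(M',i,T,\Sigma_2',h_2',\gamma_2,\phi_2,\xi_2,\theta_2)$ with $(M,h_2)\ra(M',h_2')$. Only $M$, $h_2$ and $\Sigma_2$ change, and $\gamma_2,\phi_2,\xi_2,\theta_2$ are untouched. So the conditions on names, on $\iota(\gamma_1\uplus\gamma_2)$ and on the unfolding $\mergeConf{\xi_1}{\xi_2}$ still hold. By \Cref{thm: preservation}, $M'$ keeps type $T$ at layer $i$, so the unfolded context $K$ still fits. The condition $\dom{h_1}\cap\dom{h_2'}=\theta_1$ needs care. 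Reduction only enlarges the domain through (Ref), and the fresh location may be chosen outside $\dom{h_1}$. This is where we silently use freshness modulo renaming of locations: we either assume the rule (Ref) picks $\ell$ fresh for the whole composite, or $\alpha$-rename.

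\emph{One reduction maps to one reduction.} Write $\iota=\iota(\gamma_1\uplus\gamma_2)$. The step $(M,h_2)\ra(M',h_2')$ has the form $M=K_0[R]$ with $R$ a redex reducing to $R'$. We must show
\[
(K[K_0[R]],h_2+h_1)[\iota]\ra(K[K_0[R']],h_2'+h_1)[\iota].
\]
Applying $\iota$ commutes with evaluation contexts and sends values to values, since names map to closed values. So $K[K_0][\iota]$ is an evaluation context and $R[\iota]$ is a redex of the same kind. We then go through the reduction rules.

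For ($\beta$), (Rec) and (Letbox), we use that substitution of closed values for names commutes with the substitutions $[V/x]$ and $[M_1/u]$, i.e. $(N[V/x])[\iota]=N[\iota][V[\iota]/x]$. The same holds for global substitution; here we use that the values in the range of $\iota$ are closed.

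For (Letbox name) and the $\#b^\delta$ redex, the name $b$ itself belongs to $\dom{\gamma_1\uplus\gamma_2}$ only if it is a P-name. A $\tau$ step on $\#b^\delta$ in $\CC_2$ can only arise when $b\in\dom{\gamma_2}$ is unfolded by the box value in $\gamma_2$. Otherwise it would be a (PQB) action, not $\tau$. Since $\#(\bx N)=N$, the composite step coincides.

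For the heap rules (Deref), (Assign), (Ref), we use that the location involved lies in $\dom{h_2}$. In $h_2+h_1$ the $h_2$ part takes precedence, so the composite heap reads and writes the same cell. Applying $\iota$ to the stored values gives $(h_2'+h_1)[\iota]$.

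\emph{Main obstacle.} The delicate step is the commutation of $\iota$ with the three substitution-based reductions, in particular with (Letbox). Global substitution composes with explicit substitutions $u^\delta$ via $\sigma(u)[\delta[\sigma]]$. We need a substitution lemma stating that $(M_2[M_1/u])[\iota]=M_2[\iota][M_1[\iota]/u]$. We would prove it by induction on $M_2$ in the style of \Cref{lem: local substitution,lem: global substitution}. The crucial case is $u^\delta$, where we need $(M_1[\delta])[\iota]=M_1[\iota][\delta[\iota]]$. This holds because $\iota$ has closed values in its range and is idempotent, which is exactly why $\iota$ rather than $\gamma_1\uplus\gamma_2$ is used.
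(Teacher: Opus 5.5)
\textbf{Verdict.} Your overall route is the paper's: induction on $k$, reducing to a single (P$\tau$) step that the composite simulates in exactly one step. Your extra checks go beyond the paper's terse proof and are welcome. These include preservation of compatibility, and in particular choosing the location allocated by (Ref) outside $\dom{h_1}$, which the paper silently ignores. However, your treatment of the box-name case misreads the LTS and needs correcting.

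\textbf{The error.} By compatibility, every name in play lies in $\phi=\dom{\gamma_1}\uplus\dom{\gamma_2}$, so $b\in\dom{\iota}$ always; your claim that this holds ``only if it is a P-name'' is wrong. In fact the names occurring in the active term are those introduced by the passive side, so $b\in\dom{\gamma_1}$, not $\dom{\gamma_2}$.

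More importantly, no $\tau$ step ever unfolds $\#b^\delta$. The rule (P$\tau$) only uses the reduction relation, which never consults $\gamma_2$, and an exposed $\#b^\delta$ always fires (PQB). Had the step you describe existed, your conclusion would fail. If $\CC_2$ rewrote $K_0[\#b^\delta]$ to $K_0[N'[\delta]]$ with $\gamma_2(b)=\bx N'$, then both $\mergeConf{\CC_1}{\CC_2}$ and $\mergeConf{\CC_1}{\CC_2'}$ would equal $(K[K_0[N'[\delta]]],h_2+h_1)[\iota]$. The composite would then make zero steps, breaking the exact step count.

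\textbf{The correct case.} The only $\tau$ step involving a box name is (Letbox name). Here $\CC_2$ rewrites $K_0[\letbox{u}{b}{N}]$ to $K_0[N[\#b/u]]$ with $b\in\dom{\gamma_1}$. In the composite, $\iota(b)$ is a name-free closed box value $\bx N_1$. It is the full unfolding of $\gamma_1(b)$, which may pass through a chain of names alternating between $\gamma_1$ and $\gamma_2$. So the composite performs an ordinary (Letbox) step to $N[\iota][N_1/u]$. This equals $(N[\#b/u])[\iota]$ by your substitution lemma applied with $\#b$ in place of $M_1$, using the convention $(\#b^\delta)[\iota]=N_1[\delta[\iota]]$.

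\textbf{Why $\iota$ rather than $\gamma_1\uplus\gamma_2$.} Your stated reason is slightly off. Commuting the name substitution with local substitutions only needs its range to be free of free local variables. The idempotent unfolding is needed so that the composite is name-free, which is what turns a (Letbox name) step on the active side into a genuine (Letbox) redex in the composite.
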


\begin{proof}
    We prove by induction on $k$.
    If $k=0$, then $\CC_2=\CC_2'$, and the conclusion is trivial.
    Otherwise, suppose $\CC_2\xra{\tau}\CC_2''\xra{\tau}^*\CC_2'$.
    Then $\CC_2''=(M'',i,T,\Sigma_2'',h_2'',\gamma_2,\phi_2,\xi_2,\theta_2)$ where $(M,h_2)\ra(M'',h_2'')$.
    Therefore, we have
    \begin{equation*}\begin{split}
        & \mergeConf{\CC_1}{\CC_2}=(\mergeConf{\xi_1}{\xi_2})[M], h_2+h_1)[\iota(\gamma_1\uplus\gamma_2)]\\
        \ra &\mergeConf{\CC_1}{\CC_2''}=(\mergeConf{\xi_1}{\xi_2})[M''],h_2''+h_1)[\iota(\gamma_1\uplus\gamma_2)] \\
    \end{split}\end{equation*}
    Notice in particular that the above reduction is also valid
    under the special reduction rule (Letbox name).
    As $\CC_2''\xra{\tau}^*\CC_2'$ in $k-1$ steps,
    by induction hypothesis,
    $\mergeConf{\CC_1}{\CC_2''}\ra^*\mergeConf{\CC_1}{\CC_2'}$ in $k-1$ steps.
    Therefore, $\mergeConf{\CC_1}{\CC_2}\ra^*\mergeConf{\CC_1}{\CC_2'}$ in $k$ steps.
\end{proof}

\begin{example}
    The preceding lemma is where the LTS for closed types failed. Consider the term
    $$\ag{x}$$
    and the evaluation context
    $$(\run \ag{\la x.\es \bullet})1$$
    After the P action $\ol{b_1}$, the O configuration becomes active and evaluates as shown below:
    $$(\run \ag{\la x.\es b_1})1\ra (\run \ag{\la x.\# b_1})1\ra (\la x.\# b_1)1\ra \# b_1$$
    Correspondingly, the composed term should evaluate as follows:
    $$(\run \ag{\la x.\es \ag{x}})1\ra (\run \ag{\la x.x})1\ra (\la x.x)1\ra 1$$
    Notice the mismatch in the last step because the substitution $[1/x]$ is forgotten.
    For contextual types, the O configuration will evaluate as:
    $$(\run \ag{\la x.\es b_1})1\ra (\run \ag{\la x.\# b_1})1\ra (\la x.\# b_1)1\ra \# b_1^{1/x}$$
    So this correspondence between configuration and composed term is preserved.
\end{example}

Suppose $\tr\ =\ \act_1\;\act_2\dots\act_n$.
Then we write $\CC\xra{\tr}\CC'$ if $\CC\xra{\act_1}\CC''\xra{\act_2}\cdots\xra{\act_n}\CC'$,
and $\CC\xrsa{\tr}\CC'$ if $\CC\xra{\tr}\CC''\xra{\tau}\CC'$.
In the lemma below, we show that transitions involving only actions
do not change the composed term and heap.

\begin{lemma}\label{lem:invariance_act}
    Suppose passive configuration $\CC_1=(\Sigma_1,h_1,\gamma_1,\phi_1,\xi_1,\theta_1)$
    and active configuration $\CC_2=(M,i,T,\Sigma_2,h_2,\gamma_2,\phi_2,\xi_2,\theta_2)$
    are compatible.
    Also suppose $\CC_2\xra{\act}\CC_2'$ and $\CC_1\xra{\ol{\act}}\CC_1'$.
    Then $\mergeConf{\CC_1}{\CC_2}=\mergeConf{\CC_2'}{\CC_1'}$.
\end{lemma}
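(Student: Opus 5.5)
We prove the lemma by case analysis on the rule that produces $\CC_2\xra{\act}\CC_2'$. Since $\CC_2$ is active, this rule is (PA), (PQF) or (PQB). Correspondingly, $\CC_1\xra{\ol{\act}}\CC_1'$ is an instance of (OA), (OQF) or (OQB) respectively. In each case we unfold the definition of composition on both sides and compare the resulting term, heap, and idempotent substitution. The composition $\mergeConf{\CC_2'}{\CC_1'}$ has the roles swapped: $\CC_1'$ is now active and $\CC_2'$ is passive. Before comparing terms, we check that $\CC_2'$ and $\CC_1'$ are compatible. The name sets both grow by $\nu(A)\uplus\nu(\chi)$, or by $\nu(\rho)\uplus\nu(\chi)$. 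The $\gamma$ extensions $\gamma'\uplus\gamma''$ are added only on the P side. The shared location set becomes $\dom{\chi}$ on both sides. The side conditions $\dom{\chi}\cap\dom{h}=\theta$ on the O rules keep heap domains intersecting exactly in the shared part.

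\textbf{Case (PA)/(OA).} We have $M=V$ and $(A,\gamma')\in\AVal{V}$. The passive side pops its top context $K_1$ from $\xi_1$, and $\CC_1'$ becomes active with term $K_1[A]$. The unfolding $\mergeConf{\xi_1}{\xi_2}$ has $K_1$ as its innermost component, so $\mergeConf{\xi_1}{\xi_2}=K_0[K_1]$ for the unfolding $K_0$ of the remaining stacks. Hence the left side is $K_0[K_1[V]]$ and the right side is $K_0[K_1[A]]$, both under the respective idempotent substitutions. The new substitution $\iota(\gamma_1\uplus\gamma_2\uplus\gamma'\uplus\gamma'')$ maps $A$ back to $V$, more precisely to $V[\iota(\ldots)]$. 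It coincides with the old one on all other names, because the new names are fresh. So the terms agree.

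For the heaps, the left side is $h_2+h_1$. On the right, $\CC_1'$ carries $\chi+h_1$, which combined with $h_2$ gives $(\chi+h_1)+h_2$. Here $\chi$ abstracts $\rst{h_2}{\theta\cup\mu(A)}$, and $\gamma''$ undoes that abstraction. Moreover, $\dom{\chi}\cap\dom{h_1}=\theta_1$ and $h_1$'s copy of the shared locations is overwritten by $\chi$. Therefore, after applying the substitution, both heaps agree pointwise: shared locations take $h_2$'s values, and private locations come from their owners. The cases (PQF)/(OQF) and (PQB)/(OQB) are analogous. For (PQF), the active term is $K[fV]$; $K$ is pushed onto $\xi_2$ and $\CC_1'$ becomes $\gamma_1(f)\,A$. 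Since $\iota$ maps $f$ to $\gamma_1(f)[\iota]$ and $A$ to $V$, both compositions become $K_0[K[\gamma_1(f)V]]$ after substitution. For (PQB), the term is $K[\#b^\delta]$ and $\CC_1'$ becomes $(\#\gamma_1(b))[\rho]$. On the left, $\#b^\delta[\iota]=(\#\gamma_1(b)[\iota])[\delta[\iota]]$. On the right, we use that $\rho[\iota]=\delta[\iota]$, together with the convention $\#(\bx M)=M$ extended to substitution.

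\textbf{Main obstacle.} We expect the hardest part to be the bookkeeping for the idempotent substitution $\iota$. We must show three things: $\iota$ remains well defined after adding $\gamma'\uplus\gamma''$; $\iota$ commutes appropriately with plugging into evaluation contexts; and the (PQB) case is consistent with the explicit-substitution semantics of $\#b^\delta$. For well-definedness, the new names map to values that mention only previously existing names, because P abstracts its own values. The iteration therefore stabilises after at most one additional step. For the (PQB) case, we would first prove a small auxiliary fact: $(\#V)[\rho][\iota]=(\#V[\iota])[\rho[\iota]]$ for box values $V$. This is proved by unfolding the substitution clauses in \Cref{fig: Operational Semantics}, using that box bodies are closed under local variables other than those in $\Gamma$.
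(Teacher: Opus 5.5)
Your proposal is correct and follows the same route as the paper. Both do a case analysis on the Player action ((PA), (PQF), (PQB)) paired with the dual Opponent rule, unfold both compositions, and check two things: the terms agree via the stack unfolding and the idempotent substitution $\iota$, and the heaps $h_2+h_1$ and $(\chi+h_1)+h_2$ coincide. Your write-up is more careful than the paper's, which spells out only (PA) and (PQF) and leaves implicit the compatibility of $\CC_2'$ and $\CC_1'$, the stability of $\iota$ under fresh extensions, and the substitution-commutation fact needed for (PQB).
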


\begin{proof}
    We prove by case analysis on $\act$.
    \begin{itemize}
        \item (PA). Then $\CC_2\xra{(\ol{A},\Sigma_3,\chi)}\CC_2'$ and $\CC_1\xra{(A,\Sigma_3,\chi)}\CC_1'$, where
        \begin{itemize}
            \item $\CC_2=(V,i,T,\Sigma_2,h_2,\gamma_2,\phi_2,\xi_2,\theta_2)$
            \item $\CC_2'=(\Sigma_2,h_2,\gamma_2\uplus\gamma_2'\uplus\gamma_2'',\phi_2',\xi_2,\dom{\chi})$
            \item $\CC_1=(\Sigma_1,h_1,\gamma_1,\phi_1,\xi_1':(\Sigma_1'\vd K:(\cdot;\cdot;T;i)\ra (\cdot;\cdot;T';i')),\theta_1)$
            \item $\CC_1'=(K[A],T',i',\Sigma_1\cup\Sigma_3,\chi+h_1,\gamma_1,\phi_1',\xi_1',\dom{\chi})$
            \item $(A,\gamma_2')\in\AVal{\sccv{i} V:T}$
            \item $(\chi,\gamma_2'')\in\AVal{\sccv{} \rst{h_2}{\sigma\cup\mu(A)}:\heap}$
        \end{itemize}
        Therefore,
        \begin{equation*}\begin{split}
            & \mergeConf{\CC_1}{\CC_2}=((\mergeConf{\xi_1}{\xi_2})[V], h_2+h_1)\bc{\iota(\gamma_1\uplus\gamma_2)}\\
            =&\mergeConf{\CC_2'}{\CC_1'}=((\mergeConf{\xi_2}{\xi_1'})[K[A]],(\chi+h_1)+h_2)\bc{\delta(\gamma_1\uplus(\gamma_2\uplus\gamma_2'\uplus\gamma_2''))}
        \end{split}\end{equation*}
        \item (PQF). Then $\CC_2\xra{(\ol{f}(A),\Sigma_3,\chi)}\CC_2'$ and $\CC_1\xra{(f(A),\Sigma_3,\chi)}\CC_1'$, where
        \begin{itemize}
            \item $\CC_2=(K[fV],i,T,\Sigma_2,h_2,\gamma_2,\phi_2,\xi_2,\theta_2)$
            \item $\CC_2'=(\Sigma_2,h_2,\gamma_2\uplus\gamma_2'\uplus\gamma_2'',\phi_2',\xi_2:(\Sigma_2\vd K:(\cdot;\cdot;T_2;i_3)\ra(\cdot;\cdot;T;i)),\dom{\chi})$
            \item $\CC_1=(\Sigma_1,h_1,\gamma_1,\phi_1,\xi_1,\sigma_1)$
            \item $\CC_1'=(\gamma_1(f)A,i_3,T_2,\chi+h_1,\gamma_1,\phi_1',\xi_1,\dom{\chi})$
            \item $f\in\FNames_{T_1\ra T_2, i_1}$
            \item $(A,\gamma_2')\in\AVal{\Sigma_2;\cdot;\cdot\vd_{i_2}V:T_1}$
            \item $i_3=\max(i_1,i_2)$
            \item $(\chi,\gamma_2'')\in\AVal{\sccv{}\rst{h_2}{\sigma\cup\mu(A)}:\heap}$
        \end{itemize}
        Therefore,
        \begin{equation*}\begin{split}
        & \mergeConf{\CC_1}{\CC_2}=((\mergeConf{\xi_1}{\xi_2})[K[fV]], h_2+h_1)\{\iota(\gamma_1\uplus\gamma_2)\}\\
        =& \mergeConf{\CC_2'}{\CC_1'}=((\mergeConf{(\xi_2:(\Sigma_2\vd K:(\cdot;\cdot;T_2;i_3)\ra(\cdot;\cdot;T;i)))}{\xi_1})[\gamma_1(f)A],(\chi+h_1)+h_2)\bc{\delta(\gamma_1\uplus\gamma_2\uplus\gamma_2'\uplus\gamma_2'')}
        \end{split}\end{equation*}
    \item The other cases are similar therefore omitted.
    \end{itemize}
\end{proof}

Now we show the other direction of the invariance, i.e.,
that reductions in the composed term correspond to transitions in configurations.
Recall that the trace $\ol{\tr}$ is obtained by reverting the polarity of each action in $\tr$, i.e., P actions become O actions and O actions become P actions.

Given term $M$ and name substitution $\gamma$,
we define the \textit{depth} of $M$ under $\gamma$,
denoted by $\depth{\gamma}{M}$, as

$$\depth{\gamma}{M}=\sum _{\text{occurence of }b\in\BNames\text{ in }M}\left(\depth{\gamma}{\gamma(b)}+1\right)$$

Notice that the above summation is for \textit{occurences} of $\BNames$.
That is, if $b$ occurs $k$ times in $M$,
then it should be counted $k$ times in the summation.
If the idempotent substitution $\delta(\gamma)$ exists,
then the function $\depth{\gamma}{-}$ is well defined for all terms that contain only names in $\dom{\gamma}$,
as $\gamma$ gives a partial order on names.
Also, if $\gamma\subseteq\gamma'$,
then for all such terms, $\depth{\gamma}{M}=\depth{\gamma'}{M'}$.
Therefore, the depth of a term remains unchanged in future transitions,
and we can simply write it as $\depth{}{M}$.

Intuitively, $\depth{\gamma}{M}$ measures how many unfoldings of $\BNames$ are needed
to obtain a term without any $\BNames$.
It is a curcial invariant in the lemma below,
which will then be used to show that there cannot be infinite sequences of $\tau$-free transitions.

Below we will frequently encounter situations where
the polarity of an action does not matter.
So we will use (A) to refer to both (PA) and (OA) actions,
and (QF) to refer to both (PQF) and (OQF) actions,
and (QB) to refer to both (PQB) and (OQB) actions.

\begin{lemma}\label{lem: depth bound}
    Suppose passive configuration $\CC_1=(\Sigma_1,h_1,\gamma_1,\phi_1,\xi_1,\theta_1)$
    and active configuration $\CC_2=(M,i,T,\Sigma_2,h_2,\gamma_2,\phi_2,\xi_2,\theta_2)$
    are compatible.
    Also suppose $\CC_1\xra{\tr}\CC_1'$, $\CC_2\xra{\ol{\tr}}\CC_2'$,
    where $\tr$ consists only of (A) and (QB) actions,
    and any prefix of $\tr$ contains at least as many (QB) actions
    as (A) actions.
    Then the number of (QB) actions in $\tr$ is at most $\depth{\gamma_1\uplus\gamma_2}{M}$.
\end{lemma}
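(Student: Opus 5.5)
The plan is to prove the statement by induction on the length of $\tr$, with a strengthened invariant that follows the whole configuration pair rather than only the initial term $M$. The key observation is this. A $\tau$-free exchange made only of (A) and (QB) actions, in which every prefix has at least as many questions as answers, never reaches a point where a function is called or where an answered question resumes a computation that could do further work. Each (QB) move replaces the current ``run request'' by the body $\#\gamma(b)[\rho]$ of a box name, and each answer can only return a value.

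To exploit this, I will attach a potential to a compatible pair of configurations. For the active side I take $\depth{\gamma}{M'}$ of its current term, with $\gamma=\gamma_1\uplus\gamma_2$ extended as names are introduced; by the remark preceding the lemma, this depth does not change under such extensions. For the passive side I take the pending stack entries: the contexts $K$ pushed by (PQB) that are still waiting for an answer. The invariant I aim for is
\[ \#\{\text{(QB) actions so far}\} + \text{potential of current pair} \le \depth{\gamma_1\uplus\gamma_2}{M}. \]

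\textbf{The (QB) step.} Suppose the active configuration has term $K[\#b^{\delta}]$ and performs (PQB) $\ol{\run b^\rho}$. The partner configuration then becomes active with term $(\#\gamma(b))[\rho]$. Here $\rho$ abstracts values by fresh names, and those names are never box names applied to anything further. This is the delicate point: box values occurring in $\delta$ are abstracted to fresh $b'$ with $\gamma(b')$ equal to the old value, so their depth contribution is preserved. The definition of $\depth{}{}$ gives $\depth{}{\#\gamma(b)} = \depth{}{\gamma(b)}$. Counting occurrences, the occurrence of $b$ in $K[\#b^\delta]$ contributed $\depth{}{\gamma(b)}+1$. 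The new active term contributes at most $\depth{}{\gamma(b)}$ plus the names in $\rho$, which are accounted for by the occurrences inside $\delta$ that were already counted. The saved context $K$ keeps its remaining contribution on the stack. So one unit of potential is spent per (QB) action, and the invariant is preserved.

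\textbf{The (A) step.} An (A) action requires the active term to be a value $V$. It pops a context $K$, and the new active term is $K[A]$. Since there are no $\tau$ moves, the next action must again be available without reduction. The prefix condition guarantees that we never pop below the contexts pushed during $\tr$. So $K$ is one of the contexts saved at a (QB) step, whose potential was reserved then, and the potential does not increase. Lemma~\ref{lem:invariance_act} lets me transport the whole argument to the composed term, so both polarities are handled uniformly by swapping the roles of $\CC_1$ and $\CC_2$.

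\textbf{Conclusion and main obstacle.} Since the potential is a natural number, the number of (QB) actions is bounded by its initial value, which is $\depth{\gamma_1\uplus\gamma_2}{M}$. The main obstacle is making the counting precise when substitutions duplicate box names. If a local variable occurs several times in $\gamma(b)$ and is substituted by a box value, that box may be counted many times, so a na\"ive count could increase. I would first try to avoid this by observing that $\rho$ replaces such values by names, and that in a $\tau$-free trace only the head redex $\#b'^{\delta'}$ is ever fired. I would then charge each (QB) to the single occurrence at head position rather than to the total sum. If that fails, I would refine $\depth{}{}$ to count only occurrences in evaluation position.
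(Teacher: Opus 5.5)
Your potential argument is, in outline, an amortised form of the paper's proof: the paper inducts on $\depth{}{M}$ and splits $\tr$ at the answer to the first run question. However, the fact that makes either argument work is missing, and the justification you give in its place is incorrect. Abstracting a box value $V$ in $\delta$ by a fresh $b'$ with $\gamma(b')=V$ does \emph{not} preserve depth. Each occurrence of $b'$ contributes $\depth{}{\gamma(b')}+1$, so the count rises by one per occurrence, and by more when the variable is duplicated in $\#\gamma(b)$. This is exactly the obstacle you flag at the end and leave open. Your (A) step has the same hole. You assert that the potential does not increase when $V$ is answered and $K[A]$ resumes. But if $A$ were a box name abstracting $V$, then $\depth{}{K[A]}=\depth{}{K}+\depth{}{V}+1$ and the potential would go up.

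The missing ingredient is the two-layer restriction. By (Typ-Box), $\Box(\Gamma\vd T)$ is only formed from $\vd_0\Gamma:\lctx$ and $\vd_0 T:\typ$, and no box type is well formed at layer $0$. So in a question $\run b^\rho$, every $\rho(x)$ has a layer-$0$ type and is a primitive value or a function name. There are no box values in $\delta$ at all, and $\depth{}{\#\gamma(b)[\rho]}=\depth{}{\gamma(b)}$ however often variables are duplicated. Function names count for nothing, and their bodies are never entered because $\tr$ has no (QF) actions. Likewise, the answer $A$ to that question has the layer-$0$ type $T$, so it is not a box name and $\depth{}{K[A]}=\depth{}{K}$.

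With these two observations, your potential drops by at least one at each (QB) and never increases at an (A), and the bound follows. Your fallbacks are then unnecessary. The second one, counting only occurrences in evaluation position, would in any case change the measure in the statement rather than prove the lemma as stated.
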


\begin{proof}
    We prove by induction on $\depth{\gamma_1\uplus\gamma_2}{M}$.
    If $\depth{}{M}=0$,
    then $M$ does not contain any $\BNames$,
    so it cannot yield any (QB) actions,
    and $\tr$ can only be empty.
    Below we suppose result for all smaller values of depth,
    and prove for $\depth{}{M}=d>0$.

    For $M$ to yield (PQB) action in the first step,
    it must be of the form $K[\#b^\delta]$,
    where $b\in\BNames_{\Box(\Gamma\vd T'), 1}$ and $\Sigma_2;\cdot;\cdot\vd\delta:\Gamma$.
    In such case, we have

    $$\depth{}{M}=\depth{}{K} + \depth{}{\gamma_1(b)}+ 1$$

    and $\CC_2\xra{(\ol{\run b^\rho}, \Sigma_3,\chi_1)}\CC_2''$,
    $\CC_1\xra{(\run b^\rho,\Sigma_3,\chi_1)}\CC_1''$, where
    $$\CC_1''=(\#\gamma_1(b)[\rho],i_3,T_2,\Sigma_3\cup\Sigma_1,\chi_1+h_1,\gamma_1,\phi_1'',\xi_1,\dom{\chi})$$
    Since $\vd_1 \Box(\Gamma\vd T'):\typ$, by (Typ-Box), we have $\vd_0 \Gamma:\lctx$.
    Then for all $x\in\dom{\rho}$, $\Sigma_3\cup\Sigma_1;\cdot;\cdot\vd_0 \rho(x):\Gamma(x)$.
    Therefore, $\rho(x)$ cannot be a BName,
    and $\depth{}{\#\gamma_1(b)[\rho]}=\depth{}{\#\gamma_1(b)}=\depth{}{\gamma_1(b)}$.

    Recall that we supposed $\CC_1\xra{\tr}\CC_1'$, $\CC_2\xra{\ol{\tr}}\CC_2'$.
    So $\CC_1'\xra{\tr'}\CC_1''$ and $\CC_2'\xra{\ol{\tr'}}\CC_2''$,
    where $\tr = (\run b^\rho,\Sigma_3,\chi)\ \tr'$
    and $\tr'$ consists only of (A) and (QB) actions.
    If any prefix of $\tr'$ contains at least as many (QB) actions
    as (A) actions, then by induction hypothesis we know that the number of (QB) actions in $\tr'$
    is at most $\depth{}{\#\gamma_1(b)[\rho]}$.
    Therefore, the number of (QB) actions in $\tr$ is at most
    $\depth{}{\#\gamma_1(b)[\rho]}+1\le\depth{}{M}$.

    Otherwise, we locate the first prefix of $\tr'$
    where the number of (A) actions exceed the number of (QB) actions.
    Namely, we suppose $\CC_1'\xra{\tr''}\CC_1'''\xra{(\ol{A},\Sigma_4,\chi_2)}\CC_1''''\xra{\tr'''}\CC_1'$, 
    and $\CC_2'\xra{\ol{\tr''}}\CC_2'''\xra{(A,\Sigma_4,\chi_2)}\CC_2''''\xra{\ol{\tr'''}}\CC_2'$,
    where
    $$\tr=(\run b^\rho,\Sigma_3,\chi_1)\ \tr''\ (\ol{A},\Sigma_4,\chi_2)\ \tr'''$$
    and $\tr''$ contains an equal number of (QB) and (A) actions.
    So the action $(A,\Sigma_4,\chi_2)$ must be the answer to the initial question $(\run b^\rho,\Sigma_3,\chi_1)$,
    and we have
    $$\CC_2''''=(K[A],i,T,\Sigma_4,h_3,\gamma_3,\phi_3,\xi_3,\theta_3)$$
    As $\Sigma_4;\cdot;\cdot\vd_0 A:T'$, $A$ cannot be a BName,
    so $\depth{}{K[A]}=\depth{}{K}$.
    Also, the trace $\CC_1\ra{(\run b^\rho,\Sigma_3,\chi_1)\ \tr''\ (\ol{A},\Sigma_4,\chi_2)}\CC_1''''$
    contains an equal number of (QB) and (A) actions,
    so any prefix of $\tr'''$ contains at least as many (QB) actions as (A) actions.
    By the induction hypothesis, the number of (QB) actions in $\tr'''$
    is at most $\depth{}{K[A]}$.
    Therefore, the number of (QB) actions in $\tr$ is at most
    $1+\depth{}{\#\gamma_1(b)[\rho]}+\depth{}{K[A]}=\depth{}{M}$.

\end{proof}

\begin{lemma}\label{lem:finite trace}
    Suppose passive configuration $\CC_1=(\Sigma_1,h_1,\gamma_1,\phi_1,\xi_1,\theta_1)$
    and active configuration $\CC_2=(M,i,T,\Sigma_2,h_2,\gamma_2,\phi_2,\xi_2,\theta_2)$
    are compatible.
    Then there exists finite trace $\tr$ such that $\CC_1\xra{\tr}\CC_1'$, $\CC_2\xra{\ol{\tr}}\CC_2'$,
    and either (a) $\CC_1'$ (or $\CC_2'$) is an active configuration whose next transition is a $\tau$ transition,
    or (b) $\CC_1'$ (or $\CC_2'$) is an active configuration whose next transition is a (PA) transition,
    and $\CC_2'$ (or $\CC_1'$, resp.) is a passive configuration whose stack $\xi$ is empty.
    In other words, there does not exist infinite sequences of $\tau$-free transitions.
\end{lemma}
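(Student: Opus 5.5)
We must show that a compatible passive/active pair cannot perform an infinite $\tau$-free sequence of matched actions. The plan is to bound the length of any such sequence by a measure that is well defined because $\idm{\gamma_1\uplus\gamma_2}$ exists. We start from the compatible pair $(\CC_1,\CC_2)$ and repeatedly let the active side perform its next transition. If that transition is $\tau$, we are in case (a). If it is (PA) and the passive side has an empty stack, we are in case (b). Otherwise the active side performs a (PA), (PQF) or (PQB) action $\act$, and the passive side can always match it with $\ol{\act}$. This holds because the passive side has a nonempty stack in the (PA) case, or has the name in its $\gamma$ in the (PQ) cases, by compatibility ($\dom{\gamma_1}\uplus\dom{\gamma_2}=\phi$, and names questioned by P must belong to the other side's $\gamma$). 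By \Cref{lem:invariance_act} the resulting pair is again compatible with the same composition, so the roles swap and we iterate. It remains to show that this iteration stops after finitely many steps.

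We classify the matched actions. A (QF) action produces an active configuration $VA$ with $V=\gamma(f)$ a $\lambda$- or $\mathbf{rec}$-abstraction, so its next transition is a $\beta$/(Rec) $\tau$-step. Hence any $\tau$-free sequence contains at most one (QF) action, and only as its last action. The remaining actions are (A) and (QB). After an (OQB) the active term is $(\#V)[\rho]$. If $V=\bx M'$, then $M'[\rho]$ may well be a value or further unboxing, so we cannot simply say that a $\tau$-step follows. For (A) actions, an answer $A$ is plugged into $K[A]$. If $K[A]$ is not itself a value or name-redex then a $\tau$-step follows; otherwise the pattern continues with (A) or (QB) actions.

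The main step is therefore to bound runs consisting only of (A) and (QB) actions. We split such a run at its first point where the (A) actions outnumber the (QB) actions, and also at any point where the answers pop entries from the stacks $\xi_1,\xi_2$ that existed initially. Each maximal segment in which every prefix has at least as many (QB) actions as (A) actions has, by \Cref{lem: depth bound}, at most $\depth{}{M}$ (QB) actions, where $M$ is the current active term, and hence at most about $2\depth{}{M}+1$ actions. The excess answers can only pop the finitely many entries of the initial unfolded stack $\mergeConf{\xi_1}{\xi_2}$. Each such pop yields a term $K[A]$ with $\depth{}{K[A]}=\depth{}{K}$ finite, because $\idm{\gamma_1\uplus\gamma_2}$ is defined and so $\depth{}{-}$ is well defined. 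Summing these bounds over the initial stack length plus one gives a finite bound on the length of any $\tau$-free trace.

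The delicate point is checking that the depth of the freshly exposed terms is measured with respect to the final $\gamma$ and does not grow. The paper notes that $\depth{\gamma}{-}$ is stable under extending $\gamma$, and that new names introduced in actions at layer $0$ cannot be box names. A secondary check is that the stuck alternatives, such as a value on the active side with an empty stack on the other side, are exactly case (b).
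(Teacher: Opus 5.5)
Your overall plan matches the paper's: a case analysis on the active side's next move, with runs of (A) and (QB) actions bounded by \Cref{lem: depth bound} after cutting the run wherever the combined stack size reaches a new minimum. Your segmentation is a quantitative version of the paper's choice of the first index minimising $\size{\xi_1^i}+\size{\xi_2^i}$.

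\textbf{The gap is in the (QF) case.} You claim that after a (QF) action the new active term is $VA$ with $V=\gamma(f)$ a $\lambda$- or $\mathbf{rec}$-abstraction, and conclude that a $\tau$-free sequence contains at most one (QF) action. This is false. Names are values, and \Cref{lem: canonical forms} only describes the name-free syntax. Suppose one side answers with, passes as an argument, or exposes in a shared location a function name $f'$ that it received from the other side. Then value abstraction produces a fresh $f$ with $\gamma(f)=f'$. If the other side later calls $f(A)$, the active term is $f'A$. Its next move is another (PQF) $(\ol{f'}(A'),\ldots)$ rather than a $\beta$-step, and the other side may in turn have $\gamma(f')=f''$, and so on. These forwarding chains are invisible to $\depth{}{-}$, which only counts box names, so nothing in your argument bounds them.

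\textbf{The missing ingredient} is the one the paper uses (``$\gamma$ gives a partial order on names''). After a (QF), the next move is either $\tau$ or another (QF), so (QF) actions form a terminal block. The names called along this block trace a path $f\mapsto\gamma(f)\mapsto\gamma(\gamma(f))\mapsto\cdots$ through the current $\gamma_1\uplus\gamma_2$; the freshly introduced argument names never lie on it. A cycle on this path would make the iterates defining $\idm{\gamma_1\uplus\gamma_2}$ oscillate forever. So existence of $\idm{\gamma_1\uplus\gamma_2}$, together with finiteness of its domain, forces the path to end at an abstraction, which yields the $\tau$-step. With this added, your argument goes through.

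A smaller slip: $\depth{}{K[A]}=\depth{}{K}$ can fail when an answer that pops an initially present frame is a box name, e.g.\ for a (PQF) frame whose return type is a box type. In that case $\depth{}{K[A]}=\depth{}{K}+\depth{}{\gamma(A)}+1$. This is still finite, so only the equation needs correcting, not your conclusion.
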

\begin{proof}
    We prove by case analysis on first step of transition the active configuration $\CC_2$.
    \begin{itemize}
        \item (PQF). Then $\CC_2\xra{(\ol{f}(A), \Sigma_3,\chi)}\CC_2'$,
        and $\CC_1\xra{(f(A),\Sigma_3,\chi)}\CC_1'$, where
        \begin{itemize}
            \item $\CC_2=(K[fV],i,T,\Sigma_2,h_2,\gamma_2,\phi_2,\xi_2,\theta_2)$
            \item $\CC_2'=(\Sigma_2,h_2,\gamma_2',\phi_2',\xi_2:(\Sigma_2\vd K:(\cdot;\cdot;T_2;i_3)\ra(\cdot;\cdot;T;i)),\dom{\chi})$
            \item $\CC_1=(\Sigma_1,h_1,\gamma_1,\phi_1,\xi_1,\sigma_1)$
            \item $\CC_1'=(\gamma_1(f)A,i_3,T_2,\chi+h_1,\gamma_1,\phi_1',\xi_1,\dom{\chi})$
            \item $(A,\gamma_2')\in\AVal{\Sigma_2;\cdot;\cdot\vd_{i_2}V:T_1}$
        \end{itemize}
        The term in the new active configuration $\CC_1'$ is $\gamma_1(f)A$,
        where $\gamma_1(f)$ is either $\la x.M'$ or $f'$ for some $f'\in\FNames_{T_1\ra T_2, i_1}$.
        In the former case, the next transition of $\CC_1'$ is a $\tau$ transition $((\la x.M')A,\chi+h_1)\ra(M'[A/x],\chi+h_1)$,
        and clause (a) holds.
        In the latter case, the next transition of $\CC_1'$ is a (PQF) transition $(\ol{f'}(A'), \Sigma_3,\chi')$,
        This process will eventually stop as $\gamma$ gives a partial order on names,
        and we will end up in the first case.
        \item (PQB). Then $\CC_2\xra{(\ol{\run f^\rho}, \Sigma_3,\chi)}\CC_2'$,
        and $\CC_1\xra{(\run f^\rho,\Sigma_3,\chi)}\CC_1'$, where
        \begin{itemize}
            \item $\CC_2=(K[\#b^\delta],i,T,\Sigma_2,h_2,\gamma_2,\phi_2,\xi_2,\theta_2)$
            \item $\CC_2'=(\Sigma_2,h_2,\gamma_2',\phi_2',\xi_2:(\Sigma_2\vd K:(\cdot;\cdot;T_2;i_3)\ra(\cdot;\cdot;T;i)),\dom{\chi})$
            \item $\CC_1=(\Sigma_1,h_1,\gamma_1,\phi_1,\xi_1,\sigma_1)$
            \item $\CC_1'=(\#\gamma_1(b)[\rho],i_3,T_2,\chi+h_1,\gamma_1,\phi_1',\xi_1,\dom{\chi})$
        \end{itemize}
        We further consider the next transition of $\CC_1'$ based on what $\gamma_1(b)$ is.
        \begin{itemize}
            \item If $\gamma_1(b)=\bx K[M]$, where $M$ is a redex,
            then the next transition is a $\tau$ transition, and clause (a) is satisfied.
            \item If $\gamma_1(b)=\bx K[fV]$, then the next transition is a (PQF) action,
            and we go to the previous case.
            \item If $\gamma_1(b)=b'$, then the next transition is a (PQB) action $(\ol{\run b'^{\rho'}}, \Sigma_3,\chi')$.
            \item If $\gamma_1(b)=\bx K[\#b'^{\delta'}]$,
            then the next transition is also a (PQB) action $(\ol{\run b'^{\rho'}}, \Sigma_3,\chi')$.
            \item If $\gamma_1(b)=\bx V$, then the next transition is a (PA) action $(\ol{A}, \Sigma_3,\chi')$.
        \end{itemize}
        Now we still need to argue that there does not exist an infinite trace consisting only of (QB) and (A) actions.
        Suppose there exists a (possibly infinite) trace $\CC_1\xra{a_1}\CC_1^1\xra{a_2}\CC_1^2\xra{a_3}\cdots$
        and $\CC_2\xra{\ol{a_1}}\CC_2^1\xra{\ol{a_2}}\CC_2^2\xra{\ol{a_3}}\cdots$.
        Suppose the evaluation context stack in $\CC_1^i$ and $\CC_2^i$ are $\xi_1^i$ and $\xi_2^i$ respectively,
        and we define their sizes as $\size{\cdot}=0$, $\size{\xi:K}=\size{\xi}+1$.
        We find $i$ such that $\size{\xi_1^i}+\size{\xi_2^i}$ is the smallest.
        If there are multiple such $i$s, we choose the smallest one.
        As a (QB) action increases $\size{\xi_1^i}+\size{\xi_2^i}$ by one,
        and an (A) action decreases $\size{\xi_1^i}+\size{\xi_2^i}$ by one,
        we know that the trace starting from $\CC_1^i$ and $\CC_2^i$ always contains
        at least as many (QB) actions as (A) actions.
        By \Cref{lem:finite trace}, this trace contains at most $\depth{}{M^i}$ (QB) actions,
        where $M^i$ is the term in the active configuration.
        Therefore, this trace cannot be infinite.

        \item (PA). There cannot be an infinite sequence of (PA) actions
        because each (PA) action consumes an evaluation context on the stack.
        So eventually either the stack becomes empty, in which case clause (b) holds,
        or we have a non-(PA) transition, in which case we go to previous discussions.
    \end{itemize}
\end{proof}

\begin{lemma}\label{lem:progression_act}
    Suppose passive configuration $\CC_1=(\Sigma_1,h_1,\gamma_1,\phi_1,\xi_1,\theta_1)$
    and active configuration $\CC_2=(M,i,T,\Sigma_2,h_2,\gamma_2,\phi_2,\xi_2,\theta_2)$
    are compatible,
    and $\mergeConf{\CC_1}{\CC_2}\ra (M',h')$.
    Then there exists $\CC_1'$, $\CC_2'$ such that
    either (a) $\CC_1\xra{\tr}\CC_1'$, $\CC_2\xrsa{\ol{\tr}}\CC_2'$,
    and $\mergeConf{\CC_1'}{\CC_2'}=(M',h')$,
    or (b) $\CC_1\xrsa{\tr}\CC_1'$, $\CC_2\xra{\ol{\tr}}\CC_2'$,
    and $\mergeConf{\CC_2'}{\CC_1'}=(M',h')$.
\end{lemma}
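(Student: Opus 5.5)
The plan is to analyse the single reduction step $\mergeConf{\CC_1}{\CC_2}\ra(M',h')$ by locating its redex. Write $\mergeConf{\xi_1}{\xi_2}=K_0$, so the composed pair is $(K_0[M],h_2+h_1)[\iota(\gamma_1\uplus\gamma_2)]$. The split is on the shape of $M$, following the case structure already used in the proof of \Cref{lem:finite trace}.

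\emph{Case 1: $M$ is not a value.} By \Cref{thm: progress} applied to the configuration's term, either $M=K[R]$ with $R$ a genuine redex, or $M=K[fV]$ with $f$ a function name, or $M=K[\#b^\delta]$ with $b$ a box name. Which of these occurs depends on whether the head of the redex is a real value or a name.
\begin{itemize}
\item If $R$ is a genuine redex, then applying $\iota(\gamma_1\uplus\gamma_2)$ commutes with the reduction. The substitution only replaces names by values, and the heap step touches locations that are either private to $h_2$ or shared in $\theta$; shared locations hold equal values in the merged heap $h_2+h_1$. So $\CC_2\xra{\tau}\CC_2'$, and by \Cref{lem:invariance_tau} the composition of $\CC_1$ with $\CC_2'$ is $(M',h')$. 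This is clause (a) with $\tr$ empty.
\item If the redex involves a name $f$ or $b$, then the composed term's redex is $\iota(\gamma)(f)\,V'$ or $\letbox{}{}{}$/$\#$-unfolding of $\iota(\gamma)(b)$. Here $\CC_2$ performs a (PQF) or (PQB) action $\act$, and $\CC_1$ performs the dual $\ol\act$. By \Cref{lem:invariance_act} the composition is unchanged.
\end{itemize}

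\emph{Case 2: $M$ is a value.} Then $\CC_2$ performs (PA), $\CC_1$ pops its stack, and \Cref{lem:invariance_act} again keeps the composition fixed. After this action, or after the one in the name case above, the roles flip: $\CC_1'$ is now active and $\CC_2'$ is passive, with compatibility preserved.

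\emph{Iteration.} We repeat this dual-action step until some active side can take a $\tau$ transition. That $\tau$ step corresponds exactly to the original reduction, because the composition was invariant throughout and the one-step reduct is determined. Hence either $\CC_2\xrsa{\ol\tr}\CC_2'$ or $\CC_1\xrsa{\tr}\CC_1'$, depending on which side ends active, giving (a) or (b).

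\emph{Termination of the iteration.} This is exactly \Cref{lem:finite trace}: there is no infinite $\tau$-free interaction. Its alternative clause (b) is a (PA) against an empty stack. That alternative cannot occur here, since then the composed term would be a value and could not reduce.

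The main obstacle is the commutation claim in Case 1, namely $(K_0[M],h_2+h_1)[\iota]$ reducing to $(K_0[M''],h_2''+h_1)[\iota]$. The delicate subcases are the following.
\begin{itemize}
\item Allocation needs the fresh location to avoid $\dom{h_1}$, which is handled by choosing it up to renaming.
\item Writes to shared locations must not desynchronise the two heaps. This holds because the merge $h_2+h_1$ prefers the active side on shared locations.
\item Reductions producing $\#b$ (rule (Letbox name)) must correspond to ordinary (Letbox) after $\iota$ substitutes $\bx N$ for $b$. The point is that $(\#b^\delta)[\iota]$ becomes $N[\delta[\iota]]$, as in the example following \Cref{lem:invariance_tau}, so I would first prove the auxiliary identity $M[\#b/u][\iota]=M[\iota][\#\iota(b)/u]$ by structural induction.
\end{itemize}
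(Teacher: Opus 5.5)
Your proposal is correct and follows essentially the same route as the paper. Use \Cref{lem:finite trace} to run the synchronised $\tau$-free actions until one side can take a $\tau$ step, keep the composition fixed via \Cref{lem:invariance_act}, match that $\tau$ step to the given reduction via \Cref{lem:invariance_tau}, and rule out the (PA)-against-empty-stack outcome because the composition would then be a value. Your explicit appeal to determinism of the one-step reduct makes precise a point the paper leaves implicit, namely that determinism holds only up to the choice of fresh locations, which must avoid $\dom{h_1}$.
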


\begin{proof}
    By \Cref{lem:finite trace},
    there exists finite trace $\tr$ such that $\CC_1\xra{\tr}\CC_1''$, $\CC_2\xra{\ol{\tr}}\CC_2''$,
    and either (a) $\CC_1''$ (or $\CC_2''$) is an active configuration whose next transition is a $\tau$ transition,
    or (b) $\CC_1''$ (or $\CC_2''$) is an active configuration whose next transition is a (PA) transition,
    and $\CC_2''$ (or $\CC_1''$, resp.) is a passive configuration whose stack $\xi$ is empty.
    We discuss these two cases separately.
    \begin{itemize}
        \item Case (a). Without loss of generality, suppose $\CC_1''$ is the active configuration,
        and $\CC_1''\xra{\tau}\CC_1'''$.
        By \Cref{lem:invariance_act},
        $\mergeConf{\CC_1}{\CC_2}=\mergeConf{\CC_1''}{\CC_2''}$.
        By \Cref{lem:invariance_tau},
        $\mergeConf{\CC_1''}{\CC_2''}\ra\mergeConf{\CC_1'''}{\CC_2''}$.
        Therefore, $\CC_1'=\CC_1'''$ and $\CC_2'=\CC_2''$ satisfy the requirement.
        \item Case (b). Without loss of generality, suppose $\CC_1''$ is the active configuration,
        the term in $\CC_1''$ is a value $V$,
        and the stack $\xi_2''$ in $\CC_2''$ is empty.
        Then $\mergeConf{\CC_1}{\CC_2}=\mergeConf{\CC_1''}{\CC_2''}$ is also value,
        but we know that $\mergeConf{\CC_1}{\CC_2}\ra (M',h')$,
        so this case is not possible.
    \end{itemize}
\end{proof}

Reductions of compositions turn out to correspond to synchronizing their actions.
\begin{definition}
Given a sequence $\tr$ of actions, let us write $\ol{\tr}$ for the sequence obtained by converting 
each action to one of opposite polarity (i.e. context to program, program to context).
We write $\CC_1|\CC_2\da^\tr$ if either (a) $\tr\in\Tr{\CC_1}$ and $\ol{\tr}\; \act\in\mathbf{Tr}(\CC_2)$, 
or (b) $\tr\; \act\in\Tr{\CC_1}$ and $\ol{\tr}\in\mathbf{Tr}(\CC_2)$ for some (PA) action $\act$.
\end{definition}

\begin{lemma}\label{lem:correctness_inductive}
    Suppose passive $\CC_1$ and
    and active $\CC_2$ are compatible.
    Then $\mergeConf{\CC_1}{\CC_2}\Da^0$ iff $\CC_1\mid\CC_2\da^\tr$ for some $\tr$.
\end{lemma}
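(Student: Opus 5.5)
Both directions go by induction, using the invariance lemmas already established. Here $\Downarrow^0$ means convergence of the composed term/heap pair $\mergeConf{\CC_1}{\CC_2}$ to a value.

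\emph{Left to right.} Suppose $\mergeConf{\CC_1}{\CC_2}\ra^* (V,h')$ in $k$ steps; we argue by induction on $k$.

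For $k=0$, the composed term is a value. Apply \Cref{lem:finite trace} to obtain a finite $\tau$-free trace $\tr$ with $\CC_1\xra{\tr}\CC_1'$ and $\CC_2\xra{\ol{\tr}}\CC_2'$. Case (a) of that lemma (some active side has a pending $\tau$ move) cannot occur. Indeed, by \Cref{lem:invariance_act} the composition is unchanged along $\tr$, and by \Cref{lem:invariance_tau} a $\tau$ step would give a reduction of a value, which is impossible. So case (b) holds: one side is about to perform a (PA) and the other side has an empty stack. This is exactly the situation $\CC_1\mid\CC_2\da^{\tr}$, with the final (PA) action $\act$ appended to the active side's trace. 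We also need to check that the emptied stacks really make the traces land in $\Tr{-}$; this follows from the definition of unfolding $\mergeConf{\xi_1}{\xi_2}$.

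For $k>0$, apply \Cref{lem:progression_act} to the first step. This yields configurations $\CC_1',\CC_2'$ reached by a trace $\tr_0$ followed by one $\tau$, whose composition is the one-step reduct. The reduct converges in $k-1$ steps. The induction hypothesis applies to $(\CC_1',\CC_2')$, possibly with the passive and active roles swapped, and gives a trace $\tr_1$. We then take $\tr=\tr_0\tr_1$, with the polarity bookkeeping matching the case of \Cref{lem:progression_act} that was used. For this we need the lemma (to be checked) that compatibility is preserved by synchronized actions and by $\tau$ moves: disjointness of the $\gamma$'s, the agreement of the $\theta$'s, and well-definedness of $\iota$ are all maintained by the side conditions of the transition rules.

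\emph{Right to left.} Suppose $\CC_1\mid\CC_2\da^{\tr}$. We induct on the total number of transitions, $\tau$ included, in the two witnessing runs.

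\begin{itemize}
\item If the active side begins with a $\tau$ move, \Cref{lem:invariance_tau} gives a reduction of the composition, and we conclude by the induction hypothesis for the resulting compatible pair.
\item If it begins with a visible action, the other side must perform the dual action, since both runs are driven by the same $\tr$ up to polarity. \Cref{lem:invariance_act} then shows the composition is unchanged, with the roles swapped, and we apply the induction hypothesis.
\item In the base case the trace is just the final (PA) action $\act$ against an empty stack. The active term is then a value, and the unfolded context is $\bullet$, so the composition is a value.
\end{itemize}

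The main obstacle is keeping the bookkeeping straight rather than any deep step. Three points need care. First, compatibility must be an invariant under every transition rule, and in particular $\iota(\gamma_1\uplus\gamma_2)$ must stay well defined when new names are added that point to values containing older names. Second, the termination clause must correctly match the two alternatives (a) and (b) in the definition of $\da^{\tr}$. Third, the left-to-right induction must respect the fact that \Cref{lem:progression_act} may swap which configuration is active. I would first prove a small auxiliary lemma stating that compatibility is preserved, and then run both inductions uniformly, treating the pair as unordered.
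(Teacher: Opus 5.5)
Your proposal is correct and matches the paper's proof. The trace-to-convergence direction goes by induction using \Cref{lem:invariance_tau} and \Cref{lem:invariance_act}, and the convergence-to-trace direction goes by induction on the number of reduction steps using \Cref{lem:progression_act}, with the two configurations swapping roles and the traces concatenated. Two of your additions are more careful than the paper, which leaves both points implicit: the zero-step base case, where you use the finite-trace lemma to reach the final (PA) against an empty stack instead of the paper's bare claim that the empty trace suffices, and the explicit lemma that compatibility is preserved.
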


\begin{proof}
    First we prove if $\CC_1|\CC_2\da^\tr$ then $\mergeConf{\CC_1}{\CC_2}\Da^0$
    by induction on the length of $\tr$.
    If $\tr$ is empty, then $\xi_1=\xi_2=\cdot$, and $M$ is a value.
    Therefore, $\mergeConf{\CC_1}{\CC_2}$ is also a value.
    Otherwise, $\tr = \act\; \tr'$.
    Suppose $\CC_1\xRa{\act}\CC_1'$ and $\CC_2\xRa{\ol{\act}}\CC_2'$.
    Then $\CC_2'\mid\CC_1'\da^{\tr'}$.
    By induction bypothesis, $\mergeConf{\CC_2'}{\CC_1'}\Da^0$.
    Suppose $\CC_2\xra{\tau}^*\CC_2''\xra{\act}\CC_2'$,
    and, obviously, $\CC_1\xra{\ol{\act}}\CC_1'$.
    By \Cref{lem:invariance_tau}, $\mergeConf{\CC_1}{\CC_2}\ra^*\mergeConf{\CC_1}{\CC_2''}$.
    By \Cref{lem:invariance_act}, $\mergeConf{\CC_1}{\CC_2''}=\mergeConf{\CC_2'}{\CC_1'}$.
    Therefore, $\mergeConf{\CC_1}{\CC_2}\Da^0$.
    
    Then we prove if $\mergeConf{\CC_1}{\CC_2}\Da^0$ then $\CC_1|\CC_2\da^\tr$
    by induction on the number of steps in the reduction $\mergeConf{\CC_1}{\CC_2}\ra^*(V,h')$.
    If there are 0 steps, then $M$ is a value,
    so empty trace is the $\tr$ we are looking for.
    Otherwise, $\mergeConf{\CC_1}{\CC_2}\ra(M'',h'')\ra^*(V,h')$.
    By \Cref{lem:progression_act}, without loss of generality,
    there exists $\CC_1'$, $\CC_2'$ such that
    $\CC_1\xra{\tr'}\CC_1'$, $\CC_2\xrsa{\ol{\tr'}}\CC_2'$,
    and $\mergeConf{\CC_1}{\CC_2}\ra\mergeConf{\CC_1'}{\CC_2'}$.
    Then $\mergeConf{\CC_1'}{\CC_2'}\Da^0$ as well,
    and by induction hypothesis we have $\CC_1|\CC_2\da^{\tr''}$.
    Therefore, $\CC_1|\CC_2\da^{\tr'\; \tr''}$.

\end{proof}

The above result can be instantiated to initial configurations for programs and contexts to yield the following theorem.
\begin{theorem}[Correctness, \Cref{thm:correctness}]
    For any term $\spgv{i}M:T$,
    evaluation context $\Sigma'\vd K:(\cdot;\cdot;i;T)\ra (\cdot;\cdot;i';T')$,
    global substitution $\Sigma';\cdot\vd\sigma:\Psi$,
    local substitution $\spccv{i}\delta:\Gamma$,
    and heap $\spccv{} h:\heap$
    such that $\Sigma'\supseteq\Sigma$,
    let $(\eta,\gamma)\in\AVal{\Sigma';\cdot\vd\sigma:\Psi}$,
    $(\rho,\gamma')\in\AVal{\spccv{i}\delta:\Gamma}$,
    $(\Sigma'',h')=\rst{(\Sigma',h)}{\mu(\eta)\cup\mu(\rho)}$,
    and $(\chi,\gamma'')\in\AVal{\Sigma'';\cdot;\cdot\vd h':\heap}$,
    then $(K[M[\sigma][\delta]],h)\Da$ iff
    there exists trace $\tr$ such that
    $\cconf{K,\sigma,\delta,h}{\eta,\gamma,\rho,\gamma',\chi,\gamma''}\mid\cconf{M}{\eta,\rho,\chi}\da^{\tr}$.
\end{theorem}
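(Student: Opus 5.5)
The plan is to derive the statement as an instance of \Cref{lem:correctness_inductive}. I take $\CC_1=\cconf{K,\sigma,\delta,h}{\eta,\gamma,\rho,\gamma',\chi,\gamma''}$ as the passive configuration and $\CC_2=\cconf{M}{\eta,\rho,\chi}$ as the active one. Two things then remain: (i) $\CC_1$ and $\CC_2$ are compatible, and (ii) their composition $\mergeConf{\CC_1}{\CC_2}$ converges exactly when $(K[M[\sigma][\delta]],h)$ does. The lemma then gives $\mergeConf{\CC_1}{\CC_2}\Da$ iff $\CC_1\mid\CC_2\da^{\tr}$ for some $\tr$, which is the claim.

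For compatibility, I check each clause directly. The heap of $\CC_2$ is $\chi$ with domain $\dom{\chi}$. The heap of $\CC_1$ is $h$, and $\dom{\chi}=\dom{h'}\subseteq\dom{h}$ because $h'$ is the restriction of $h$ to $\mu(\eta)\cup\mu(\rho)$. So $\dom{h}\cap\dom{\chi}=\dom{\chi}$, which is the $\theta$ component of both configurations. The name maps are $\gamma\uplus\gamma'\uplus\gamma''$ for $\CC_1$ and $\emptyset$ for $\CC_2$. Their domains are $\nu(\eta)\uplus\nu(\rho)\uplus\nu(\chi)$, which is the $\phi$ component of both configurations. For $\iota(\gamma\uplus\gamma'\uplus\gamma'')$ to be well defined, I use that each name is mapped to a closed value of the original language. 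Such a value contains no names, so $\delta^1=\delta^0$ and $\iota$ is simply $\gamma\uplus\gamma'\uplus\gamma''$. Finally, the stacks are $\cdot:(\Sigma'\vd K)$ and $\cdot$, so by clause (i) of the unfolding definition $\mergeConf{\xi_1}{\xi_2}=K$, and the types match.

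For (ii), the composition is $(K[M[\eta][\rho]],\chi+h)[\iota]$. On the heap side, $\chi[\iota]$ agrees with $h$ on $\dom{\chi}$, since $\chi$ abstracts $h'=\rst{h}{\dots}$. Hence $(\chi+h)[\iota]=h$. On the term side, $\rho[\iota]=\delta$ pointwise, and $K[\iota]=K$ because $K$ contains no names. The main obstacle is $M[\eta][\iota]$ versus $M[\sigma]$. Here $\eta(u)=\#b_u$ with $\iota(b_u)=\bx\sigma(u)$, so $\eta(u)$ becomes $\#(\bx\sigma(u))=\sigma(u)$. The subtle point is that $u^{\delta_0}[\eta]=\#b_u^{\delta_0[\eta]}$ is a redex rather than an implicit substitution. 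After substituting names this becomes $\#(\bx\sigma(u))^{\delta_0[\sigma]}$, which is not syntactically $\sigma(u)[\delta_0[\sigma]]$.

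I would handle this in one of two ways. The first is to interpret $\#(\bx N)^{\delta}$ as denoting $N[\delta]$ under the unified treatment of box values, which gives syntactic equality by induction on $M$ following the clauses of global substitution. The alternative is to show that the composed term and $K[M[\sigma][\delta]]$ are related by a simulation: the composed term may reduce $\#(\bx N)^{\delta}\ra N[\delta]$ as an extra administrative step, and this preserves termination. This would go by induction on reduction length, with the same redex-created-by-substitution case analysis used in \Cref{lem: CIU abstraction}.

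With (i) and (ii), convergence of $(K[M[\sigma][\delta]],h)$ is equivalent to $\mergeConf{\CC_1}{\CC_2}\Da$, and \Cref{lem:correctness_inductive} finishes the proof.
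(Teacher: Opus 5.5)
Your proposal is correct and follows the paper's proof. The paper likewise instantiates \Cref{lem:correctness_inductive} with the context and program configurations and simply asserts that their composition is $(K[M[\sigma][\delta]],h)$; your compatibility check and your heap and term computations fill in what the paper leaves implicit. Of your two treatments of $\#(\bx\sigma(u))^{\delta_0[\sigma]}$, the first one (reading it as $\sigma(u)[\delta_0[\sigma]]$ via the convention $\#(\bx N)=N$) is the one the paper tacitly relies on, including in the (Letbox name) case of \Cref{lem:invariance_tau}, so the simulation alternative is unnecessary and would in fact break the exact step-for-step equalities of that lemma.
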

\begin{proof}
    Follows from \Cref{lem:correctness_inductive} by noting that
    $\mergeConf{\cconf{K,\sigma,\delta,h}{\eta,\gamma,\rho,\gamma',\chi,\gamma''}}{\cconf{M}{\eta,\rho,\chi}}=(K[M[\sigma][\delta]],h)$.
\end{proof}
It follows that the trace semantics of terms from \Cref{dfn:semantics} is sound for $\lsm$.
\begin{theorem}[Soundness, \Cref{thm: soundness}]
    Given two terms $\spgv{i} M_1,M_2:T$,
    if $\Tr{M_1}\subseteq\Tr{M_2}$,
    then $M_1\lsm M_2$.
\end{theorem}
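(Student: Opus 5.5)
By \Cref{thm: CIU}, it suffices to show $M_1\lsmc M_2$. The argument is the standard one: translate the CIU hypothesis into a trace interaction via \Cref{thm:correctness}, use trace inclusion to transfer that interaction from $M_1$ to $M_2$, and translate back.

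First, fix a global substitution $\Sigma';\cdot\vd\sigma:\Psi$, a local substitution $\spccv{i}\delta:\Gamma$, an evaluation context $\Sigma'\vd K:(\cdot;\cdot;i;T)\ra(\cdot;\cdot;i';T')$ and a heap $\spccv{}h:\heap$ with $\Sigma'\supseteq\Sigma$. Assume $(K[M_1[\sigma][\delta]],h)\Da$.

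Next, choose abstractions:
\begin{itemize}
\item $(\eta,\gamma)\in\AVal{\Sigma';\cdot\vd\sigma:\Psi}$;
\item $(\rho,\gamma')\in\AVal{\spccv{i}\delta:\Gamma}$;
\item $(\Sigma'',h')=\rst{(\Sigma',h)}{\mu(\eta)\cup\mu(\rho)}$;
\item $(\chi,\gamma'')\in\AVal{\Sigma'';\cdot;\cdot\vd h':\heap}$.
\end{itemize}
We pick the names in the three abstractions pairwise disjoint, so that all the disjoint unions involved are defined. These are exactly the data required by \Cref{thm:correctness}. Its left-to-right direction then yields a trace $\tr$ with
\[\cconf{K,\sigma,\delta,h}{\eta,\gamma,\rho,\gamma',\chi,\gamma''}\mid\cconf{M_1}{\eta,\rho,\chi}\da^{\tr}.\]

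Unfolding the definition of $\da^\tr$, there are two cases. In case (a), $\tr$ belongs to the context configuration's traces and $\ol{\tr}\,\act\in\Tr{\cconf{M_1}{\eta,\rho,\chi}}$ for some (PA) action $\act$. In case (b), $\tr\,\act$ belongs to the context configuration's traces and $\ol{\tr}\in\Tr{\cconf{M_1}{\eta,\rho,\chi}}$. In either case the program side contributes a trace $\tr_1\in\Tr{\cconf{M_1}{\eta,\rho,\chi}}$.

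The data $\eta,\rho,\chi$ satisfy the side conditions of \Cref{def:initial_configuration1}. Here we use that $\Sigma''\subseteq\Sigma'$ extends $\Sigma$ on the relevant locations, possibly after enlarging it, which I expect to be harmless. Hence $(\eta,\rho,\chi,\tr_1)\in\Tr{M_1}$. By the hypothesis $\Tr{M_1}\subseteq\Tr{M_2}$, we get $\tr_1\in\Tr{\cconf{M_2}{\eta,\rho,\chi}}$. The context side of the relation is literally unchanged, so $\cconf{K,\sigma,\delta,h}{\eta,\gamma,\rho,\gamma',\chi,\gamma''}\mid\cconf{M_2}{\eta,\rho,\chi}\da^{\tr}$. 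The right-to-left direction of \Cref{thm:correctness}, applied to $M_2$ with the same $\eta,\rho,\chi$ and the same $\gamma$'s, gives $(K[M_2[\sigma][\delta]],h)\Da$. This establishes $M_1\lsmc M_2$, and \Cref{thm: CIU} turns it into $M_1\lsm M_2$.

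The main obstacle is bookkeeping in the step matching the quadruple with \Cref{dfn:semantics}. One must check that the store context of the initial configuration agrees with the one produced by the restriction in \Cref{thm:correctness}. One must also check that both theorem statements apply to $M_1$ and $M_2$ with identical abstraction choices. This works because the context configuration does not mention $M_j$, and $\gamma,\gamma',\gamma''$ depend only on $\sigma,\delta,h$. The remaining steps are direct invocations of earlier results.
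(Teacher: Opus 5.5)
Your proposal is correct and is essentially the paper's own proof. You fix the CIU data and abstract them, apply \Cref{thm:correctness} to $M_1$ to obtain a program-side trace, transfer that trace to $M_2$ via $\Tr{M_1}\subseteq\Tr{M_2}$ while the context configuration stays unchanged, and apply \Cref{thm:correctness} again in the other direction. You are slightly more explicit than the paper in two places, both of which it leaves implicit: invoking \Cref{thm: CIU} to pass from $\lsmc$ to $\lsm$, and flagging that the store context of the restricted heap must match the one required in \Cref{def:initial_configuration1}.
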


\begin{proof}
    Suppose there exists a global substitution $\Sigma';\cdot\vd \sigma:\Psi$,
    local substitution $\spccv{i} \delta:\Gamma$,
    evaluation context $\spccv{} K: (\cdot;\cdot;i;T)\ra(\cdot;\cdot;i';T')$,
    and heap $\spccv{} h:\heap$
    such that $\Sigma'\supseteq\Sigma$ and $(K[M_1[\sigma][\delta]],h)\Da^0$.
    Also suppose $(\eta,\gamma)\in\AVal{\Sigma';\cdot\vd\sigma:\Psi}$,
    $(\rho,\gamma')\in\AVal{\spccv{i}\delta:\Gamma}$,
    $(\Sigma'',h')=\rst{(\Sigma',h)}{\mu(\eta)\cup\mu(\rho)}$,
    and $(\chi,\gamma'')\in\AVal{\Sigma'';\cdot;\cdot\vd h':\heap}$.
    By \Cref{thm:correctness}, there exists $\tr$ and $\act$
    such that $\tr\in\Tr{\cconf{M_1}{\eta,\rho,\chi}}$ and $\ol{\tr}\; \act\in\Tr{\cconf{K,\sigma,\delta,h}{\eta,\gamma,\rho,\gamma',\chi,\gamma''}}$.
    Since $\Tr{M_1}\subseteq\Tr{M_2}$, we have $\tr\in\Tr{\cconf{M_2}{\eta,\rho,\chi}}$.
    By \Cref{thm:correctness} again, $(K[M_2[\sigma][\delta]],h)\Da^0$.
\end{proof}

To show completeness, we  prove a definability property stating that,
for any well-behaved sequence of actions (such as those produced the Program LTS),
we can find a configuration that generates this trace only.
To this end, we need to pin down what well-behaved means.
This will be captured by the definition of an $N_O,N_P$-trace, 
where $N_O,N_P\subseteq\Names$ are `initial sets' of names corresponding
to values of the context and the program respectively.
Crucially, the definition will capture various freshness properties and the stack discipline enforced in the LTS.
We shall say that a name is \textit{introduced} in an action
if it is used as an answer, as the argument in a Question of Function action,
as the codomain of the substitution in a Question of Box action,
or appears in $\chi$. In other words, a name is introduced \textit{unless} it is
the $f$ in question $f(A)$ or the $b$ in question $\run b^\rho$.
\begin{definition}
Given $N_O,N_P\subseteq\Names$, an \textit{$N_O,N_P$-trace} is a sequence of actions such that
\begin{itemize}
    \item Player and Opponent actions alternate.
    \item No name in $N_P\cup N_O$ is introduced,
    and no name is introduced twice.
    \item If a name is used as a question,
    then it must either occur in the set $N_\_$ of opposite polarity or have been introduced in a previous action of opposite polarity.
    \item We say a Player Answer \textit{is an answer to} an Opponent Question
    if it is the first player answer such that the number of Player Answers between them equals the number of Opponent Questions between them.
    If $(\ol{A_2},\Sigma_2,\chi_2)$ is an answer to $(f(A_1),\Sigma_1,\chi_1)$,
    then $f\in \FNames_{T_1\ra T_2,i_1}$,
    and $\Sigma_2;\cdot;\cdot\vd_{i_2}A_2:T_2$.
    If $(\ol{A},\Sigma_2,\chi_2)$ is an answer to $(\run b^\rho,\Sigma_1,\chi_1)$,
    then $b\in \BNames_{\Box(\Gamma\vd T), 1}$ and $\Sigma_2;\cdot;\cdot\vd_i A:T$.
    The same requirement applies to Opponent Answers and Player Questions.
    \item Suppose $\Sigma,\chi$ and $\Sigma',\chi'$ belong to two neighbouring actions $\act$ and $\act'$.
    Then $\dom{\chi'}=\dom(\Sigma')$ should be the minimal set of locations satisfying the following three conditions:
    (a) $\dom{\chi}\subseteq\dom{\chi'}$,
    (b) locations occurring in $\act'$ (if any) occur in $\dom{\chi'}$,
    and (c) $\chi'(\ell)=\ell'$ implies $\ell'\in\dom{\chi'}$.
\end{itemize}
\end{definition}
We say an $\emptyset,N_P$-trace is \textit{complete} if
\begin{itemize}
    \item it starts with an Opponent action,
    \item it ends with a (PA) action,
    \item the number of (OA) actions equals the number of (PQ) actions plus one,
    \item in any prefix of it the number of (OA) actions is at most the number of (PQ) actions plus one,
    \item and in any proper prefix of it the number of (PA) actions is at most the number of (OQ) actions,
\end{itemize}

Then for $\tr\in\Tr{\CC}$, $\tr$ is a complete trace iff $\CC$ is a passive configuration
whose stack $\xi$ contains exactly one evaluation context.
In particular, if $\tr\in\Tr{\cconf{K,\sigma,\delta,h}{\eta,\gamma,\rho,\gamma',\chi,\gamma''}}$,
then $\tr$ must be a complete $\emptyset,\dom{\gamma}\uplus\dom{\gamma'}\uplus\dom{\gamma''}$-trace.
\begin{lemma}\label{lem:definability_inductive}
    Suppose $\tr\ =\ \act_1\dots\act_n$ is a complete $\emptyset,N_P$-trace.
    Let $\tr_i\ =\ \act_i\dots\act_n$
    and, in particular, let $\tr_{n+1}$ be the empty trace.
    Then there exists configurations $\CC_i$
    such that $\Tr{\CC_i}=\{\tr_i\}$ up to renaming that preserves $N_P$ and names that occur in $\act_1\dots\act_{i-1}$.
\end{lemma}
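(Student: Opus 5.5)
We proceed by backward induction on $i$, from $i=n+1$ down to $i=1$, building $\CC_i$ as an explicit configuration whose private heap $h_i$ contains bookkeeping references. Before the induction we fix a global naming scheme. For every name $n$ introduced by Opponent in $\tr$, we reserve a private location $r_n$ whose content will, once $n$ has been introduced, be the value received for $n$; it is initialised to a dummy value of the right type, such as a divergent function or $\bx$ of a divergent term. We also reserve a private integer counter $c$ recording the index of the current action. The configurations have the following shape. If $\act_i$ is a Player action, $\CC_i$ is active with term $M_i$. If $\act_i$ is an Opponent action, $\CC_i$ is passive, and its $\gamma_i$ maps every Player name introduced in $\act_1\dots\act_{i-1}$ to a concrete value. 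The stack $\xi_i$ holds one evaluation context for each pending Player question in the prefix, plus the bottom context. The base case $\CC_{n+1}$ is the passive configuration with empty stack, and its trace set is $\{\epsilon\}$.

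For the inductive step at a Player action $\act_i=(\ol{A},\Sigma,\chi)$, $(\ol{f}(A),\Sigma,\chi)$ or $(\ol{\run b^\rho},\Sigma,\chi)$, the term $M_i$ first sets every shared location to the value prescribed by $\chi$. Where $\chi$ mentions an Opponent name $n$, it writes $!r_n$. Where $\chi$ mentions a Player name $n$, it writes a fresh value $V_n$, defined below. The term $M_i$ then produces the required redex. For an answer this is the abstracted value itself. For a question of function it is $(!r_f)\,V$. For a question of box it is $\letbox{u}{!r_b}{u^{\delta}}$, where $\delta$ is assembled in the same way from $\rho$. The step $\CC_i\xra{\act_i}\CC_{i+1}$ is then forced, because the abstraction of a value is unique up to choice of fresh names. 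The passive $\CC_{i+1}$ is taken from the induction hypothesis, and this fixes how the contexts $K$ pushed onto the stack must behave. Each Player name $n$ is mapped to a value $V_n$ that dispatches on the counter. For a function, $V_n=\la x.\,\mathit{body}$. For a box, $V_n=\bx\,\mathit{body}$, where the body reads its local variables; these are first-order or layer-$0$ values by (Typ-Box), so this is legitimate. In either case the body asserts (diverges unless) that $!c=j$ for the unique $j$ with $\act_j$ an Opponent question on $n$ with the current argument. This uniqueness is where the backward direction of the induction is used, and it is the reason each $V_n$ is a big $\mathbf{if}$ on $!c$ over all $j$ at which $n$ is called. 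The body then compares the received first-order arguments and the shared heap against $\act_j$ via $\ass{\cdot}{\cdot}$. It stores any newly received Opponent names into their $r$-locations, restores the private heap expected by $\CC_{j+1}$, sets $c:=j+1$, and continues as $M_{j+1}$. The evaluation context pushed for a Player question performs the same checks on the Opponent answer and then continues as $M_{k+1}$ for the index $k$ of that answer.

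Having defined all the components, we check by induction that $\Tr{\CC_i}=\{\tr_i\}$. Inclusion of $\tr_i$ follows by unfolding the construction. The reverse inclusion uses the following facts. Player moves are deterministic, up to the choice of fresh names. Every Opponent move that deviates from $\tr$ is caught by a failing assertion, which diverges and so yields no complete trace. Deviations include a wrong counter value, a wrong argument, a wrong heap, or an answer to the wrong pending question; the last is ruled out by the stack discipline of the LTS together with conditions (iv) and (v) of the definition of a complete trace. Names can only be matched up to renaming, which is why the statement is phrased up to renamings that preserve $N_P$ and the earlier names.

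The main obstacle is making the Opponent-side checks sound for higher-order and box values, since names cannot be compared for equality. Two cases arise. When an Opponent action mentions a previously introduced name, that name occurs in a position the LTS treats abstractly, namely as a fresh introduction, and by condition (ii) it is always fresh. We therefore never need to test it; we only record it. When the box substitution $\rho$ contains ground values or locations, we must compare them, and locations are compared by $=$ on references. This requires expressing location equality through assignment and readback, which I would implement with the standard write-a-marker trick.
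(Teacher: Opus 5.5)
Your proposal follows the same overall strategy as the paper but uses a different mechanism for enforcing the trace. Shared with the paper: backward induction, divergent assertions on every Opponent move, private cells recording Opponent names, and pushed contexts that check and then resume. The difference is the control mechanism. In the paper, every Player name and every pending question is a tiny stub that dereferences its own cell: $\gamma_i(f_{P,j})=\la x.(!\pfr{j})x$, and the pushed contexts are $(\la x.(!\pqr{j})x)\bullet$. At each step the whole auxiliary heap is reprogrammed ($\mathit{setauxloc}$) so that only the cell for the designated next Opponent move holds the real continuation, while all the others hold self-looping stubs. You instead keep $\gamma$ and the stack contexts fixed forever and put all control in one counter $c$; each value hard-wires a static dispatch table over the future indices at which it is questioned or answered. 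Your version turns ``only the designated move survives'' into a uniform lookup, with no per-step re-specification of the auxiliary heap. The paper's version keeps values small, installs continuations just in time, and already has the indirection through the heap that box values need.

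Two steps in your proposal need repair.

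First, the box values. By (Tm-Box), the body of $V_b=\bx\,\mathit{body}$ must be typable at layer $0$, so it cannot contain $\mathbf{box}$, $\mathbf{letbox}$ or any subterm of box type. Yet your body records received box names, restores box-typed cells and continues as $M_{j+1}$, which in general uses all of these. The fix is to go through the heap: take $V_b=\bx((!s_b)\,x_1\cdots x_k)$, where $s_b$ is a private cell of the layer-$0$ type $T_1\ra\cdots\ra T_k\ra T$ (or $\Unit\ra T$ if $k=0$) holding your layer-$1$ dispatcher. The extra call is just a $\tau$-step. The paper's stub $\bx(\letbox{u}{!\pbr{j}}{u^{id}})$ is ill-typed for the same reason and is repaired the same way.

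Second, location equality. The write-a-marker encoding does not work in general: at type $\rf\Unit$ there is no distinguishable marker. Two shared $\rf\Unit$ locations are then indistinguishable by any context, so definability itself would fail. The paper simply assumes an equality test on all primitive values, locations included, inside $\ass{x}{A}$. You need that same assumption rather than an encoding.
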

\begin{proof}[Proof Sketch]
    We enumerate all the names and PQ actions that occur in the trace,
    and create references for each of them.
    The references for names are set appropriately so that only the designated actions can be taken at each step,
    and all other actions will lead to divergence.
    The stacks are composed of evaluation contexts
    that put the terms inside corresponding PQ references.
    The terms $m_i$ will be specified by backward induction on $i$.
    For P actions, $m_i$ should set the locations according to $\chi_i$ and generate the corresponding player action.
    For O actions, $m_i$ should assert that the action is received as expected,
    assert that the shared locations are updated as expected,
    set the stored values to be the same as $h_{i+1}$,
    and return $m_{i+1}$.
\end{proof}
\begin{proof}
    The components of $\CC_i=(M_i,l_i,T_i,\Sigma_i,h_i,\gamma_i,\phi_i,\xi_i,\theta_i)$
    or $\CC_i=(\Sigma_i,h_i,\gamma_i,\phi_i,\xi_i,\theta_i)$ are specified as follows.
    For simplicity, some typing judgments are omitted if they are clear from the context.
    \begin{itemize}
        \item We enumerate function names that either occur in $N_P$
        or are introduced by P in $\tr$ as $f_{P,j}(1\le j\le n_{PF})$.
        We enumerate box names that either occur in $N_P$
        or are introduced by P in $\tr$ as $b_{P,j}(1\le j\le n_{PB})$.
        We create references $\pfr{j}(1\le j\le n_{PF})$ and $\pbr{j}(1\le j\le n_{PB})$
        of corresponding type for each of them
        so that we can control whether or not to enable an O action
        by manipulating the heap.

        Similarly, we enumerate function names introduced by O as $f_{O,j}(1\le j\le n_{OF})$
        and box names introduced by O as $b_{O,j}(1\le j\le n_{OB})$.
        We create references $\ofr{j}$ and $\obr{j}$ of corresponding types
        so that the names will not occur in the configuration before they are introduced by O,
        and to preserve $\alpha$-renaming for such names.

        Also suppose there are $n_{PQ}$ player questions,
        and we create references $\pqr{j}(1\le j\le n_{PQ})$ for each of them.
        We also create a special reference $\pqr{0}$ that represents the initial evaluation context.
        The types of $\pqr{j}$ will be known after we specify the other components later.

        Let $\dom{\Sigma_i}=\dom{h_i}=\bc{\pfr{j}\mid 1\le j\le n_{PF}}
        \uplus\bc{\pbr{j}\mid 1\le j\le n_{PB}}
        \uplus\bc{\ofr{j}\mid 1\le j\le n_{OF}}
        \uplus\bc{\obr{j}\mid 1\le j\le n_{OB}}
        \uplus\bc{\pqr{j}\mid 0\le j\le n_{PQ}}
        \uplus\bc{\ell\mid \ell\mathrm{\ appears\ in\ } \tr}$.
        In this proof we use $\ell$ to denote the locations that appear in $\tr$ only,
        not the special references introduced above to represent names.

        If $f_{O,j}$ is introduced at $\act_{i'}$,
        then $h_i(\ofr{j})=\la x.(!\ofr{j})x$ for $i\le i'$ and $h_i(\ofr{j})=f_{O,j}$ for $i>i'$.
        If $b_{O,j}$ is introduced at $\act_{i'}$,
        then $h_i(\obr{j})=\bx(\letbox{u}{!\obr{j}}{u^{id}})$ for $i\le i'$ and $h_i(\obr{j})=b_{O,j}$ for $i\ge i'$,
        where $id$ is the identity substitution for all variables in $\Gamma$ if $b_{O,j}\in\BNames_{\Box(\Gamma\vd T),1}$.
        The values of the rest of $h_i(-)$ will be specified later.
        
        \item $\dom{\gamma_i}$ consists of all names in $N_P$ and all names introduced by P in $\act_1\dots\act_{i-1}$.
        Let $\gamma_i(f_{P,j})=\la x.(!\pfr{j})x$ and $\gamma_i(b_{P,j})=\bx (\letbox{u}{!\pbr{j}}{u^{id}})$ whenever they are defined.

        \item $\phi_i$ consists of all names in $N_P$ and all names introduced in $\act_1\dots\act_{i-1}$.
        
        \item $\xi_i$ contains contexts of the form $(\la x.(!\pqr{j})x)\bullet$ formed by the actions $\act_1\dots\act_{i-1}$ in the natural way, i.e.,
        a (PQ) action pushes a new evaluation context to the top of the stack,
        and an (OA) action pops the evaluation context from the top of the stack.
        To be more precise,
        (a) $(\la x.(!\pqr{0})x)\bullet$ is present in $\xi_i$ for all $i<n$,
        (b) for $1\le j \le n$, $(\la x.(!\pqr{j})x)\bullet$ is present in $\xi_i$
        iff the $j$th (PQ) action appears before $\act_i$ and
        between these two actions the number of (OA) actions is less than or equal to the number of (PQ) actions,
        and (c) the contexts appear in $\xi_i$ according to ascending order of $j$.

        \item $\sigma_i=\dom{\chi_{i-1}}$ for all $i>1$, and $\sigma_1=\emptyset$.
    \end{itemize}

    The remaining task is to define the stored values $h_i(-)$, and $M_i$ where necessary.
    We proceed by backward induction on $i$.
    \begin{itemize}
        \item $i=n+1$. We need $\CC_{n+1}$ to be a passive configuration such that $\Tr{\CC_{n+1}}$ consists only of the empty trace.
        Since $\tr$ is a complete trace, $\xi_{n+1}$ must be empty, so already no OA action is allowed,
        and we can simply set $h_{n+1}(\pqr{j})=\la x.(!\pqr{j})x$ for all $j$.
        We set $h_{n+1}(\pfr{j})=\la x.(!\pfr{j})x$ and $h_{n+1}(\pbr{j})=\bx (\letbox{u}{!\pbr{j}}{u^{id}})$
        for all $j$,
        then all OQ actions will cause divergence.
        Given an abstract value $A$,
        let $V_A=A$ if $A$ is an atomic value,
        and $V_A=\gamma_{n}(A)$ if $A$ is a name.
        Notice that all locations appearing in $\tr$ must appear in $\dom{\chi_n}$,
        so we let $h_{n+1}(\ell)=V_{\chi_{n}(\ell)}$ for all $\ell$.

        \item $i\le n$, and $\act_i$ is a P action.
        We let $h_i(\pfr{j})=h_{i+1}(\pfr{j})$,
        $h_i(\pbr{j})=h_{i+1}(\pbr{j})$,
        and $h_i(\pqr{j})=h_{i+1}(\pqr{j})$ for all $j$.
        We let $h_i(\ell)=\chi_{i-1}(\ell)$ for all $\ell\in\dom{\chi_{i-1}}$,
        and $h_i(\ell)=h_{i+1}(\ell)$ otherwise.

        Let $\mathit{setsharedloc}(i)$ be a sequence of assignments $\ell:=V_A$
        for every $\chi_i(\ell)=A$.
        We define $M_i$ by case analysis on $\act_i$:
        \begin{itemize}
            \item If $\act_i=(\ol{A},\Sigma_i,\chi_i)$,
            then let $M_i$ be the term $\mathit{setsharedloc}(i);V_A$.
            \item If $\act_i=(\ol{f_{O,j}}(A),\Sigma_i,\chi_i)$,
            then let $M_i$ be the term $\mathit{setsharedloc}(i);(\la x.(!\pqr{j'})x)((!\ofr{j}) V_A)$
            if $\act_i$ is the $j'$-th (PQ) action.
            \item If $\act_i=(\ol{\run b_{O,j}^\rho},\Sigma_i,\chi_i)$,
            then let $M_i$ be the term $\mathit{setsharedloc}(i);(\la x.(!\pqr{j'})x)(\letbox{u}{!\obr{j}}{u^{V_\rho}})$
            if $\act_i$ is the $j'$-th (PQ) action,
            where $V_\rho(x)=V_{\rho(x)}$.
        \end{itemize}

        One can verify that indeed $\CC_i\xra{\tau}^*D_i\xra{\act_i}\CC_{i+1}$.

        \item $i\le n$, and $\act_i$ is an O action.
        We let $h_i(\ell)=V_{\chi_{i-1}(\ell)}$ for all $\ell\in\dom{\chi_{i-1}}$,
        and $h_i(\ell)=h_{i+1}(\ell)$ otherwise.
        We define the rest of $h_i(-)$ by case analysis on $\act_i$:
        \begin{itemize}
            \item If $\act_i=(A,\chi_i)$, then we set $h_i(\pfr{j})=\la x.(!\pfr{j})x$ and $h_i(\pbr{j})=\bx (\letbox{u}{!\pbr{j}}{u^{id}})$ for all $j$,
            so that all OQ actions will cause divergence.
            Suppose the context on top of $\xi_i$ is $(\la x.(\pqr{j})x)\bullet$.
            We hope $\pqr{j}$ to perform appropriate operations on receiving $\act_i$,
            so we set $h_i(\pqr{j})=\la x.\ass{x}{A};\mathit{assertshared}(i);\mathit{setauxloc}(i);M_{i+1}$,
            where
            \begin{itemize}
                \item $\ass{x}{A}$ is ``$\iif{(x=A)}{()}{\Omega}$'' if $A$ is a primitive value,
                ``$\ofr{j'}:=x$'' if $A$ is a function name $f_{O,j'}$,
                and ``$\obr{j'}:=x$'' if $A$ is a box name $b_{O,j'}$.

                \item $\mathit{assertshared}(i)$ is the sequence of operations
                ``$\ass{!\ell}{A}$'' for all $\chi_i(\ell)=A$.
                
                \item $\mathit{setauxloc}(i)$ is the sequence of assignments
                ``$\pfr{j'}:=h_{i+1}(\pfr{j'})$'',
                ``$\pbr{j'}:=h_{i+1}(\pbr{j'})$'',
                and ``$\pqr{j'}:=h_{i+1}(\pqr{j'})$'' for all $j'$,
                where $h_{i+1}(-)$ denotes the value that is stored in $h_{i+1}$.
            \end{itemize}
            
            \item If $\act_i=(f_{P,j}(A),\chi_i)$,
            we set $h_i(\pbr{j'})=\bx (\letbox{u}{!\pbr{j'}}{u^{id}})$ and
            $h_i(\pqr{j'})=\la x.(!\pqr{j'})x$ for all $j'$,
            and $h_i(\pfr{j'})=\la x.(!\pfr{j'})x$ for all $j'\neq j$ to disallow all other types of actions.
            We set $h_i(\pfr{j})=\la x.\ass{x}{A};\mathit{assertshared}(i);\mathit{setauxloc}(i);M_{i+1}$ so that it behaves as expected after $\act_i$.
            
            \item If $\act_i=(\run b_{P,j}^\rho,\chi_i)$,
            we set $h_i(\pqr{j'})=\la x.(!\pqr{j'})x$ and
            $h_i(\pfr{j'})=\la x.(!\pfr{j'})x$ for all $j'$.
            We set $h_i(\pbr{j'})=\bx (\letbox{u}{!\pbr{j'}}{u^{id}})$ for all $j'\neq j$,
            and $h_i(\pbr{j})=\bx(\mathit{assertsubst}(i);\mathit{assertshared}(i);\mathit{setauxloc}(i);m_{i+1})$
            where $\mathit{assertsubst}(i)$ is a sequence of operations ``$\ass{x}{A}$'' for all $\rho(x)=A$.
            
        \end{itemize}
        One can verify that indeed $\CC_i\xra{\act_i}E_i\xra{\tau}^*\CC_{i+1}$,
        and there is no other action $\act'$ such that $\CC_i\xra{\act'}E'$.
    \end{itemize}
\end{proof}

Like before, we specialize the above result for some $K,\sigma,\delta,h$.

In a complete $\emptyset,N_P$-trace $\tr$,
there must exist an (OA) action $(\ol{A_1},\Sigma_1,\chi_1)$ that is the first time
the number of (OA) actions equal the number of (PQ) actions plus one.
Suppose $\Sigma_1;\cdot;\cdot\vd_{i_1} A_1:T_1$, then we say $(i_1,T_1)$ is the \textit{O-init type} of this trace.
Similarly, the last action must be a (PA) action $(\ol{A_2},\Sigma_2,\chi_2)$.
Suppose $\Sigma_2;\cdot;\cdot\vd_{i_2} A_2:T_2$, then we say $(i_2,T_2)$ is the \textit{P-init type} of this trace.

\begin{corollary}[Definability]\label{cor:definability}
    Suppose we have a complete $\emptyset,N_P$-trace $\tr$
    with O-init type $(i_1,T_1)$ and P-init type $(i_2,T_2)$.
    There exists $\CC=(\Sigma,h,\gamma,N_P,\cdot : (\Sigma\vd K:(\cdot;\cdot;i_1;T_1)\ra (\cdot;\cdot;i_2;T_2)),\theta)$
    such that $\Tr{\CC}=\{\tr\}$ up to renaming that preserves $N_P$.
\end{corollary}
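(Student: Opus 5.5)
The plan is to take \Cref{lem:definability_inductive} at $i=1$, which already gives a configuration $\CC_1$ with $\Tr{\CC_1}=\{\tr\}$ up to renamings preserving $N_P$; for $i=1$ there are no earlier actions, so the only names to preserve are those of $N_P$. What is left is to show that $\CC_1$ has the shape required in the corollary, and to adjust it if it does not.

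First, since $\tr$ is complete it starts with an Opponent action, so $\CC_1$ is passive. By the construction of the stacks $\xi_i$ in the proof of \Cref{lem:definability_inductive}, $\xi_1$ consists of the single context $(\la x.(!\pqr{0})x)\bullet$; no (PQ) action precedes $\act_1$. Take $K=(\la x.(!\pqr{0})x)\bullet$. The component $\phi_1$ consists of $N_P$ together with the names introduced before $\act_1$, so $\phi_1=N_P$. Finally $\theta_1=\emptyset$, and $\Sigma=\Sigma_1$ and $h=h_1$ are the store context and heap built there.

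Second, I check the typing $\Sigma\vd K:(\cdot;\cdot;i_1;T_1)\ra(\cdot;\cdot;i_2;T_2)$.
\begin{itemize}
\item The hole of $K$ receives the Opponent answer popping $\pqr{0}$. This is the (OA) action at which the number of (OA) actions first exceeds the number of (PQ) actions by one, so its abstract value has type $T_1$ at layer $i_1$, the O-init type. Accordingly $\pqr{0}$ is given type $\rf(T_1\ra T_2)$ at the appropriate layer.
\item The result of $K$ is whatever $!\pqr{0}$ returns. In the construction, $h_i(\pqr{0})$ continues with $M_{i+1}$, and these terms eventually produce the final (PA) action. That action answers the top of the stack, and by condition (v) of $N_O,N_P$-traces it has the P-init type $(i_2,T_2)$.
\item The hypothesis $\vd_{i} T_1\ra T_2:\typ$ may force the layer up to $\max(i_1,i_2)$. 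This is harmless, because the hole typing rule (Ctx-Hole) allows $i'\le i$ and names are typed at any higher layer.
\end{itemize}

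The main obstacle I expect is confirming that the types assigned to the auxiliary references $\pqr{j}$ are consistent with these endpoint types, i.e.\ that the answer matched to each question really has the type condition (v) guarantees. For $\pqr{0}$ specifically, I would argue by matching the outermost question/answer bracketing of the complete trace. Once this is in place, the renaming clause is inherited directly from \Cref{lem:definability_inductive}.
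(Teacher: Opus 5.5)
Your proposal is correct and is the paper's argument. The paper's proof just instantiates \Cref{lem:definability_inductive} at $i=1$, and your checks that $\CC_1$ is passive, that $\phi_1=N_P$, and that the stack is the single context $(\la x.(!\pqr{0})x)\bullet$ only make explicit what the paper leaves implicit. One small correction: the final (PA) answers no question, since the context's stack is already empty at that point, so it has type $(i_2,T_2)$ by the definition of the P-init type rather than by condition (v).
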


\begin{proof}
    Follows from \Cref{lem:definability_inductive} by letting $i=1$.
\end{proof}

\begin{theorem}[Completeness, \Cref{thm: completeness}]
    Given two terms $\spgv{i} M_1,M_2:T$,
    if $M_1\lsm M_2$,
    then $\Tr{M_1}\subseteq\Tr{M_2}$.
\end{theorem}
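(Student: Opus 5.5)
We prove the contrapositive-free direct form: assume $M_1\lsm M_2$ and take $(\eta,\rho,\chi,\tr)\in\Tr{M_1}$, so that $\tr\in\Tr{\cconf{M_1}{\eta,\rho,\chi}}$. The goal is $\tr\in\Tr{\cconf{M_2}{\eta,\rho,\chi}}$. The plan is to manufacture, by definability, a context configuration whose unique trace is the dual of $\tr$ extended by a closing action. We then turn that configuration into a genuine CIU instance $K,\sigma,\delta,h$. Next we transfer termination from $M_1$ to $M_2$ using \Cref{thm: CIU} and $M_1\lsm M_2$. Finally, \Cref{thm:correctness} recovers a shared trace, and uniqueness forces that trace to be $\tr$.

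The first step is to check that $\ol{\tr}\,\act$ is a complete $\emptyset,N_P$-trace, where $N_P=\nu(\eta)\uplus\nu(\rho)\uplus\nu(\chi)$. Here $\act$ is a final (PA) action answering the top-level question. Its answer value is any abstraction of type $\Int$, say, so we first wrap things so that the outer type is ground. Concretely, we take O-init type $(i,T)$ and P-init type $(0,\Unit)$ with $\act=(\ol{()},\cdot,\cdot)$. The freshness, alternation and typing conditions are inherited from the LTS that generated $\tr$. The bracketing conditions (iii)--(v) hold because $\tr$ empties the stack of $\cconf{M_1}{\eta,\rho,\chi}$, which starts empty.

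The second step applies \Cref{cor:definability} to obtain a passive configuration $\CC=(\Sigma,h,\gamma,N_P,\cdot:(\Sigma\vd K:\dots),\theta)$ with $\Tr{\CC}=\{\ol{\tr}\,\act\}$ up to renaming preserving $N_P$. We must then exhibit $\CC$ as a context configuration $\cconf{K,\sigma,\delta,h'}{\eta,\gamma_\eta,\rho,\gamma_\rho,\chi,\gamma_\chi}$. The natural choices are the following:
\begin{itemize}
\item $\delta(x)=\gamma(\rho(x))$ when $\rho(x)$ is a name, and $\delta(x)=\rho(x)$ otherwise;
\item $\sigma(u)=\#\gamma(b)$ when $\eta(u)=\#b$;
\item the heap is extended on $\dom{\chi}$ by the values $\gamma(\chi(\ell))$.
\end{itemize}
The point to verify is that the abstraction of these concrete values recovers exactly $\eta,\rho,\chi$ with the right $\gamma$'s. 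This holds by construction, since the definability proof assigns $\gamma(b_{P,j})=\bx(\letbox{u}{!\pbr{j}}{u^{id}})$ and similar values for function names.

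With this identification, \Cref{thm:correctness} applied to $M_1$ gives $(K[M_1[\sigma][\delta]],h')\Da$. By \Cref{thm: CIU} we have $M_1\lsmc M_2$, and hence $(K[M_2[\sigma][\delta]],h')\Da$. Applying \Cref{thm:correctness} again yields some $\tr'$ with $\CC\mid\cconf{M_2}{\eta,\rho,\chi}\da^{\tr'}$. Since $\Tr{\CC}$ is a singleton up to $N_P$-preserving renaming, we get $\ol{\tr'}\,\act'=\ol{\tr}\,\act$ up to such a renaming. Renaming names not in $N_P$ is harmless for the program LTS because those names are freshly chosen. Therefore $\tr\in\Tr{\cconf{M_2}{\eta,\rho,\chi}}$.

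The main obstacle is the second step, namely matching the definable configuration with a context configuration of the exact shape required by \Cref{thm:correctness}. Two points need care. First, the heap in $\CC$ contains auxiliary private references, and $\theta$ must equal $\dom{\chi}$. Second, the ``up to renaming'' clause must not rename names in $N_P$, and it is precisely those names that $\eta,\rho,\chi$ mention. I would handle the first point by restricting $\rst{(\Sigma,h)}{\mu(\eta)\cup\mu(\rho)}$ and checking that it yields $\dom{\chi}$ by condition (vi) of traces. I would handle the second point by the $N_P$-preservation built into \Cref{lem:definability_inductive}.
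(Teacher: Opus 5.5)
Your proposal is correct and follows the paper's own proof. You check that $\ol{\tr}\,\act$ is a complete $\emptyset,\nu(\eta)\uplus\nu(\rho)\uplus\nu(\chi)$-trace, and obtain from \Cref{cor:definability} a context configuration with that unique trace. You then recognise it as $\cconf{K,\sigma,\delta,h'}{\ldots}$ with $\sigma(u)=\eta(u)[\gamma]$ and $\delta(x)=\rho(x)[\gamma]$, and chain \Cref{thm:correctness}, $M_1\lsm M_2$ and \Cref{thm:correctness} again. Along the way you make explicit two things the paper leaves implicit: the passage through \Cref{thm: CIU}, and the closure of the program's traces under renamings that fix $N_P$.

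Two small slips need fixing. First, the closing (PA) action cannot be $(\ol{()},\cdot,\cdot)$: condition (vi) forces its heap domain to contain that of the heap $(\Sigma_n,\chi_n)$ of the last action of $\tr$, so take e.g.\ $(\ol{()},\Sigma_n,\chi_n)$. Second, the result layer of $K$ must be at least $i$, so use $(1,\Unit)$ rather than $(0,\Unit)$ when $i=1$.
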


\begin{proof}
    Suppose there exists a global substitution $\Sigma';\cdot\vd \sigma:\Psi$,
    local substitution $\spccv{i} \delta:\Gamma$,
    evaluation context $\spccv{} K: (\cdot;\cdot;i;T)\ra(\cdot;\cdot;i';T')$,
    and heap $\spccv{} h:\heap$
    such that $\Sigma'\supseteq\Sigma$ and $(K[M_1[\sigma][\delta]],h)\Da^0$.
    Also suppose $(\eta,\gamma)\in\AVal{\Sigma';\cdot\vd\sigma:\Psi}$,
    $(\rho,\gamma')\in\AVal{\spccv{i}\delta:\Gamma}$,
    $(\Sigma'',h')=\rst{(\Sigma',h)}{\mu(\eta)\cup\mu(\rho)}$,
    and $(\chi,\gamma'')\in\AVal{\Sigma'';\cdot;\cdot\vd h':\heap}$.
    By \Cref{thm:correctness}, there exists $\tr$ and $\act$
    such that $\tr\in\Tr{\cconf{M_1}{\eta,\rho,\chi}}$ and $\ol{\tr}\; \act\in\Tr{\cconf{K,\sigma,\delta,h}{\eta,\gamma,\rho,\gamma',\chi,\gamma''}}$.
    Since $\Tr{M_1}\subseteq\Tr{M_2}$, we have $\tr\in\Tr{\cconf{M_2}{\eta,\rho,\chi}}$.
    By \Cref{thm:correctness} again, $(K[M_2[\sigma][\delta]],h)\Da^0$.
\end{proof}

\begin{proof}
    Suppose $(\eta,\rho,\chi,\tr)\in\Tr{M_1}$,
    i.e., $\tr\in\Tr{\cconf{M_1}{\eta,\rho,\chi}}$
    for some abstract global substitution $\Sigma';\cdot\vd \eta:\Psi$,
    abstract local substitution $\spccv{i} \rho:\Gamma$,
    and heap $\spccv{} h:\heap$.
    Then $\ol{\tr}\; \act$ is a complete $\emptyset,\nu(\eta)\uplus\nu(\rho)\uplus\nu(\chi)$-trace
    for some arbitrary (PA) action $\act$.
    By \Cref{cor:definability},
    there exists $\CC=(\Sigma',h,\gamma,\nu(\eta)\uplus\nu(\rho)\uplus\nu(\chi),\cdot : (\Sigma'\vd K:(\cdot;\cdot;i;T)\ra (\cdot;\cdot;i';T')),\theta)$
    such that $\Tr{\CC}=\{\tr\}$ up to renamings that preserve $\nu(\rho)\cup\nu(\chi)$.
    Observe that $\CC=\cconf{K,\sigma,\delta,h'}{\eta,\gamma',\rho,\gamma'',\chi,\gamma'''}$
    where $\gamma'=\rst{\gamma}{\nu(\eta)}$,
    $\gamma''=\rst{\gamma}{\nu(\rho)}$,
    $\gamma'''=\rst{\gamma}{\nu(\chi)}$,
    $\sigma(u)=\eta(u)[\gamma]$,
    $\delta(x)=\rho(x)[\gamma]$,
    and $h'(\ell)=h(\ell)[\gamma]$.
    By \Cref{thm:correctness}, $(K[M_1[\sigma][\delta]],h')\Da$.
    Since $M_1\lsm M_2$, $(K[M_2[\sigma][\delta]],h')\Da$.
    By \Cref{thm:correctness} again,
    since $\ol{\tr}\; \ol{A}$ is the only trace (up to renaming) in $\Tr{\cconf{K,\sigma,\delta,h'}{\eta,\gamma',\rho,\gamma'',\chi,\gamma'''}}$,
    we have $\tr\in\Tr{\cconf{M_2}{\eta,\rho,\chi}}$ and therefore $(\eta,\rho,\chi,\tr)\in\Tr{M_2}$.
\end{proof}

\begin{theorem}[Full Abstraction, \Cref{thm:fullAbstraction}]
    Given two terms $\spgv{i} M_1,M_2:T$,
    $M_1\lsm M_2$
    iff $\Tr{M_1}\subseteq\Tr{M_2}$.
\end{theorem}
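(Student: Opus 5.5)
The full abstraction theorem is the conjunction of two directions that have already been established, so the plan is to assemble them rather than reprove anything. For the right-to-left direction, assume $\Tr{M_1}\subseteq\Tr{M_2}$ and invoke \Cref{thm: soundness} directly, which yields $M_1\lsm M_2$. For the left-to-right direction, assume $M_1\lsm M_2$ and invoke \Cref{thm: completeness}, which yields $\Tr{M_1}\subseteq\Tr{M_2}$.

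It is worth recording how the two ingredients fit together, since both pass through closed instances of use. Soundness is proved against $\lsmc$: one takes an arbitrary closing $\sigma$, $\delta$, $K$, $h$, abstracts them to $\eta,\rho,\chi$, and uses \Cref{thm:correctness} twice to transfer termination from $M_1$ to $M_2$ via a shared trace. The passage from $\lsmc$ to $\lsm$ is then supplied by \Cref{thm: CIU}. Completeness goes the other way. Given $(\eta,\rho,\chi,\tr)\in\Tr{M_1}$, one uses \Cref{cor:definability} on the complete trace $\ol{\tr}\,\act$ to build a context configuration whose only trace is that one. Concretising its names yields $K,\sigma,\delta,h'$. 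Correctness gives termination of $K[M_1[\sigma][\delta]]$, the hypothesis (read through \Cref{thm: CIU} as $\lsmc$) gives termination for $M_2$, and correctness again, together with uniqueness of the context's trace, forces $\tr\in\Tr{\cconf{M_2}{\eta,\rho,\chi}}$.

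The only point requiring care is that the two earlier theorems are stated for the same preorder $\lsm$ and the same typing judgement $\spgv{i}M_1,M_2:T$, so they compose without adjustment. The genuine obstacle lives inside \Cref{thm: completeness}: it is the observation that the definable context configuration coincides with $\cconf{K,\sigma,\delta,h'}{\ldots}$ up to name renaming preserving $\nu(\eta)\cup\nu(\rho)\cup\nu(\chi)$. I take this as already established, so the present proof is a two-line combination.
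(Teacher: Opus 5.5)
Your proposal is correct and matches the paper, which proves full abstraction by combining \Cref{thm: CIU}, \Cref{thm: soundness} and \Cref{thm: completeness}. Your account of where \Cref{thm: CIU} enters is also accurate: it bridges $\lsmc$ and $\lsm$ inside both directions, since the correctness-based arguments only quantify over closing $K,\sigma,\delta,h$.
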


\begin{proof}
    Follows from \Cref{thm: CIU,thm: soundness,thm: completeness}.
\end{proof}

\subsection{Proofs for \Cref{sec: examples}}

\begin{example}[Correctness of staging the imperative power function, \Cref{exm: correctness staging}]
    Let $\power$ be the following term:
    $$
    \begin{aligned}
        &
        \begin{aligned}
            \vd_1\rec{f}{n}{\la xy.&\mathbf{if}\ n=0\ \mathbf{then}\ {y:=1}\\
            &\mathbf{else}\ (f\ (n-1)\ x\ y;y:=!y*x)}
        \end{aligned}\\
        &\colon\Int\ra \Int\ra\rf \Int\ra \Unit
    \end{aligned}
    $$

    Let $\powerstagedgen$ be the following term:
    $$
    \begin{aligned}
    &
    \begin{aligned}
        \vd_1\mathbf{rec\ }f(n).\la xy.&\mathbf{if}\ n=0\ \mathbf{then}\ \letbox{u}{y}{\bx (u^{y/y}:=1)}\\
        &
        \begin{aligned}
            \mathbf{else}\ &\letbox{u}{f\ (n-1)\ x\ y}\\
            &\letbox{v}{x}\\
            &\letbox{w}{y}\\
            &\bx (u^{x/x,y/y};w^{y/y}:=!w^{y/y}*v^{x/x})
        \end{aligned}
    \end{aligned}\\
    &
    \begin{aligned}
        \colon&\Int\ra \Box(x:\Int\vd\Int)\ra\Box(y:\rf\Int\vd\rf\Int)\\
        &\ra\Box(x:\Int,y:\rf\Int\vd\Unit)
    \end{aligned}
    \end{aligned}
    $$
    Let $\powerstaged$ be the following term:
    $$
    \begin{aligned}
    &\vd_1 \la n.\letbox{u}{\powerstagedgen\ n\ (\bx x)\ (\bx y)}{\la xy.u^{x/x,y/y}} \\
    &\colon\Int\ra\Int\ra\rf\Int\ra\Unit
    \end{aligned}
    $$
    Then $\powerstaged\lnapprox \power\approx \la nxy.\powerstaged\ n\ x\ y$.
\end{example}
\begin{proof}
    It suffices to prove that
    $\Tr{\powerstaged}\subsetneqq\Tr{\power} =$ $\Tr{\la nxy.\powerstaged\ n\ x\ y}$
    by \Cref{thm:fullAbstraction}.
    We first prove $\Tr{\power}=\Tr{\la nxy.\powerstaged\ n\ x\ y}$.
    Consider the representative traces (again $\Sigma$ is omitted for brevity)
    $$
    \begin{aligned}
    \tr_1\ =\ &(\ol{f_1},\cdot)\; (f_1(3),\cdot)\; (\ol{f_2},\cdot)\; (f_2(2),\cdot)\; (\ol{f_3},\cdot)\\
    &(f_3(\ell_1),\bc{\ell_1\mapsto 0})\; (\ol{()},\bc{\ell_1\mapsto 8}) \\
    \end{aligned}
    $$
    and
    $$\tr_2\ =\ (\ol{f_1},\cdot)\; (f_1(-1),\cdot)\; (\ol{f_2},\cdot)$$
    All traces that can be generated from $\power$ and $\la nxy.\powerstaged\ n\ x\ y$
    can be represented as a tree.
    The trace $\tr_1$ is a representative path in the tree,
    and other traces can be obtained by branching at the three (OQ) actions.
    Due to higher-order references, the opponent can also freely return to previous nodes
    and pose questions on the function names with different arguments.
    Any path, as long as it follows the transition rules, is valid up to the action
    of the form $(f_3(\ell_1),\bc{\ell_1\mapsto 0})$
    because no real computation is done until all three arguments are supplied.

    Once $n$, $x$ and $y$ are supplied, both programs will start to do real computation.
    If $n<0$, both programs will diverge.
    If $n\ge 0$, the $\power$ program will correctly compute $x^n$ and store the result in $y$.
    To show that $\la nxy.\powerstaged\ n\ x\ y$ does the same thing,
    it suffices to prove
    that $\powerstagedgen\ n\ (\bx x)\ (\bx y)$
    produces a piece of code that computes $x^n$ and stores the result in $y$.
    We prove this by induction on $n$.
    If $n=0$, then
    $\powerstagedgen\ 0\ (\bx x)\ (\bx y)\ra^*\bx (y:=1)$.
    If $n=k+1$, then
    $\powerstagedgen\ (k+1)\ (\bx x)\ (\bx y)\ra^*\bx(\#(\powerstagedgen\ k\ (\bx x)\ (\bx y)); y := !y * x)$.
    By the induction hypothesis,
    $(\powerstagedgen\ k\ (\bx x)\ (\bx y))$
    computes to a piece of code that stores $x^k$ in $y$.
    Therefore, the whole code stores $x^{k+1}$ in $y$.

    Now we prove $\Tr{\powerstaged}\subsetneqq\Tr{\power}$.
    The traces generated by $\powerstaged$ can be understood in a similar way as above,
    except that not all traces can make it till $(f_3(\ell_1),\bc{\ell_1\mapsto 0})$.
    Since computations related to $n$ are done once $n$ is supplied,
    an invalid $n$ such as $-1$ will cause the program to diverge immediately,
    and no trace of the form $\tr_2$ will be generated.
    Therefore, we have $\Tr{\powerstaged}\subsetneqq\Tr{\power}=\Tr{\la nxy.\powerstaged\ n\ x\ y}$.
\end{proof}

\end{document}